\newcommand{\blind}{1}
\newtheorem{thm}{Theorem}[section]
\newtheorem{remark}{Remark}[section]
\newtheorem{cor}{Corollary}[section]
\newtheorem{rmk}{Remark}[section]
\theoremstyle{definition}
\theoremstyle{remark}
\numberwithin{equation}{section}
\newcommand{\M}{\mathcal{M}}
\newcommand{\E}{\mathbb{E}}
\newcommand{\norm}[1]{\left\Vert#1\right\Vert}
\numberwithin{equation}{section}
\newcommand{\bed}{\begin{displaymath}}
\newcommand{\eed}{\end{displaymath}}
\newcommand{\bea}{\bed\begin{array}{rl}}
\newcommand{\eea}{\end{array}\eed}
\newcommand{\barray}{\begin{array}{ll}}
\newcommand{\earray}{\end{array}}
\newcommand{\diag}{{\rm diag}}
\newcounter{mntcomm}
\def\bar{\overline}
\def\a.s{\text{\;a.s.\;}}
\newcommand{\beq}[1]{\begin{equation} \label{#1}}
\newcommand{\eeq}{\end{equation}}
\def\KL{\text{\rm KL}}
\def\LB{\text{\rm LB}}
\def\Ber{\text{\rm Bernoulli}}
\def\Beta{\text{\rm Beta}}
\def\vec{\text{\rm vec}}
\def\vech{\text{\rm vech}}
\def\Pois{\text{\rm Poisson}}
\begin{document}

\def\spacingset#1{\renewcommand{\baselinestretch}%
{#1}\small\normalsize} \spacingset{1}


\if1\blind
{
  \title{\bf Natural Gradient Variational Bayes without Fisher Matrix Analytic Calculation and Its Inversion}
  \author{A. Godichon-Baggioni\\
    Laboratoire de Probabilit\'{e}s 
 Statistique et Mod\'{e}lisation\\
 Sorbonne-Université, 75005, Paris, France\\
Email: antoine.godichon\_baggioni@sorbonne-universite.fr\\
    D. Nguyen \\
    Department of Mathematics\\
Marist College, 3399 North Road\\
Poughkeepsie NY 12601, United States\\
Email: nducduy@gmail.com\\
M-N. Tran\\
Business Analytics discipline\\
The University of Sydney Business School\\
Australia\\
Email: minh-ngoc.tran@sydney.edu.au
   }
  \maketitle
} \fi

\if0\blind
{
  \bigskip
} \fi

\bigskip
\begin{abstract}
This paper introduces a method for efficiently approximating the inverse of the Fisher information matrix, a crucial step in achieving effective variational Bayes inference.
A notable aspect of our approach is the avoidance of analytically computing the Fisher information matrix and its explicit inversion. Instead, we introduce an iterative procedure for generating a sequence of matrices that converge to the inverse of Fisher information. 
The natural gradient variational Bayes algorithm without 
analytic expression of the Fisher matrix and its inversion is provably convergent and achieves a convergence rate of order $\mathcal O(\log s/s)$, with $s$ the number of iterations.
We also obtain a central limit theorem for the iterates.
Implementation of our method does not require storage of large matrices, and achieves a linear complexity in the number of variational parameters.
Our algorithm exhibits versatility, making it applicable across a diverse array of variational Bayes domains, including Gaussian approximation and normalizing flow Variational Bayes. 
We offer a range of numerical examples to demonstrate the efficiency and reliability of the proposed variational Bayes method.
\end{abstract}

\noindent%
{\it Keywords:}  Bayesian computation, Stochastic gradient descent, Bayesian neural network, Normalizing flow
\vfill

\section{Introduction}

The growing complexity of models used in modern statistics and machine learning has spurred the demand for more efficient Bayesian estimation techniques. Among the array of Bayesian tools available, Variational Bayes \citep{waterhouse1995bayesian,jordan1999introduction} has gained prominence as a remarkably versatile alternative to traditional Monte Carlo methods for tackling statistical inference in intricate models. Variational Bayes (VB) operates by approximating the posterior probability distribution using a member selected from a family of tractable distributions, characterized by variational parameters. The optimal member is determined through minimization of the Kullback-Leibler divergence, which quantifies the disparity between the chosen candidate and the posterior distribution. 
VB is a fast alternative to Markov chain Monte Carlo (MCMC) methods,
and has found diverse applications, encompassing variational autoencoders \citep{kingma2013auto}, text analysis \citep{hoffman2013stochastic}, Bayesian synthetic likelihood \citep{ong2018variational}, deep neural networks \citep{graves2011practical,Tran:JCGS2020}, to name a few. For recent advances in the field of VB and Bayesian approximation in general, please refer to 
the excellent survey papers of \cite{blei2017variational} and \cite{martin2023approximating}.

VB turns the Baysesian inference problem into an optimization problem,
and a large class of VB methods use stochastic gradient descent (SGD) as their backbone.
In the recent decades, a great deal of effort has been devoted to 
developing and improving optimization algorithms for big and high dimensional data.
As a result, various first order stochastic optimization
algorithms have been developed in response to these new demands; notable examples include 
AdaGrad of \cite{duchi2011adaptive},
Adam of \cite{kingma2014adam}, Adadelta of \cite{zeiler2012adadelta},
and their variance reduction variations \citep{defazio2014saga,johnson2013accelerating,nguyen2017sarah}.
For a detailed discussion on stochastic optimization, please refer to the excellent books of \cite{kushner2003stochastic,goodfellow2016deep} and \cite{murphy2012machine}.

Gradient descent methods in VB rely on the gradient of the objective lower bound function,
whose definition depends upon the metric on the variational parameter space. 
Optimization in conventional VB methods uses the Euclidean gradient defined using the usual Euclidean metric.
It turns out that the natural gradient, the term coined by \cite{amari1998natural}, represents a more adequate direction of ascent in the VB context as it takes into account the information geometry of the variational family \citep{martens2020new,khan2017conjugate}. 
The natural gradient is defined using the Fisher-Rao metric, 
which resembles the Kullback-Leibler divergence between probability distributions parameterized by the variational parameters.
More precisely, the natural gradient is the steepest ascent direction of the objective function on the variational parameter space equipped with the Fisher-Rao metric.
\cite{martens2020new} sheds light on the concept that natural gradient descent can be viewed as a second-order optimization method where the Fisher information assumes the role of the Hessian matrix.
Because of this, natural gradients take into account the curvature information (through the Fisher-Rao metric) of the variational parameter space; therefore the number of iteration steps required to find a local optimum is often found significantly reduced \citep{tran2017variational}. According to \cite{tan2021analytic}, stochastic optimization guided by natural gradients has proven more resilient, capable of circumventing or escaping plateaus, ultimately resulting in faster convergence; see also\citep{rattray1998natural,hoffman2013stochastic,khan2017conjugate,wilkinson2023bayes}.

 The natural gradient is calculated by pre-multiplying the Euclidean gradient of the lower bound function with the inverse Fisher information matrix, a process that is notably intricate.
 Computing the Fisher matrix, not to mention its inverse, is challenging.
 In the realm of Gaussian variational approximation, where the posterior is approximated by a Gaussian distribution,
 the natural gradient can be calculated efficiently.
\cite{Tran:JCGS2020} consider a factor structure for the covariance matrix, and derive a closed-form approximation for the natural gradient.
\cite{tan2021analytic} employs a Cholesky factor structure for the covariance matrix and the precision matrix, and derives an analytic natural gradient; see also \cite{khan2017conjugate} and \cite{magris2022exact}.
On the other hand, for broader cases where the variational distribution is based on neural networks, \cite{martens2015optimizing} approximate the Fisher matrix with a block diagonal matrix. 
It is important to highlight that existing techniques for computing the natural gradient are primarily restricted to certain contexts (like the Gaussian variational approximations mentioned above) or are heavily dependent on simplified approximations (such as employing a block-diagonal matrix). These constraints restrict the broader application of natural gradients.
For a large class of VB methods, e.g., when the variational distribution is a mixture \citep{giordani2013flexible}, a copula \citep{gunawan2023flexible} or a normalizing flow \citep{rezende2015variational}, it is challenging to use the natural gradient as the Fisher matrix is not available.

This paper makes several important contributions that significantly improve the natural gradient VB method. 
First, we present an approach for efficiently approximating the {\it inverse} of Fisher information matrix. 
We emphasize that there are two main difficulties in calculating the natural gradient: (i) analytical calculation of the Fisher matrix that often involves intractable expectations, and (ii) computing its matrix inversion.
A notable aspect of our approach is the avoidance of these two difficulties altogether.
Instead, we introduce an iterative procedure for generating a sequence of positive definite matrices that converge to the inverse of Fisher information.
Pre-multiplying the Euclidean gradient with these matrices provides estimates of the natural gradient.
Our method of approximating the natural gradient is general, easy to implement, asymptotically exact and applies to any variational distribution including Gaussian distributions, mixtures and normalizing flow based distributions. 
It is important to note that, for high-dimensional applications, implementation of our method does not require storage of large matrices because the estimate of inverse Fisher matrix can be written using  outer products. Second, we propose a VB method that streamlines the natural gradient estimation without matrix inversion within the VB training iteration.
This leads to an efficient natural gradient VB algorithm, referred to as inversion-free variational Bayes (IFVB).
We also present a weighted averaged estimate version of IFVB, called AIFVB, that converges faster than IFVB.
Both IFVB and AIFVB are provably convergent, with AIFVB being shown asymptotically efficient and achieving a central limit theorem.
Third, to substantiate the effectiveness and robustness of our proposed method, we offer a range of numerical examples to  demonstrate its efficiency and reliability.

The rest of the paper is organized as follows: 
Section \ref{Variational Bayes} provides a brief overview of the variational Bayesian inference problem.
Section \ref{Natural Gradient} presents 
natural gradient and discusses its advantages as well as its computational difficulty. 
We introduce inversion free natural variational Bayes in Section \ref{Inversion Free Natural Variational Bayes}.
Section \ref{Convergence analysis} is concerned with convergence analysis.
Numerical examples are provided 
in Section \ref{Numerical Example}.  Section \ref{Conclusion} concludes the paper.
Section \ref{sec:example 5} provides more numerical examples, and 
Section \ref{sec:Main proofs} contains the proofs of the main theorems;
further technical details are in the Appendix.

\vspace{0.5cm}
\noindent{\bf Notation.} We denote by $\norm{x}=(x_1^2+\ldots+x_d^2)^{1/2}$ the $\ell^2$-norm of the vector $x=(x_1,\ldots,x_d)^\top\in\mathbb R^d$.
For a function $f$ on $\mathbb R^d$, $\nabla_x f=(\frac{\partial f}{\partial x_1},\ldots,\frac{\partial f}{\partial x_d})^\top$ denotes the gradient vector,
and $\nabla^2_x f=(\frac{\partial^2 f}{\partial x_i \partial x_j})_{i,j=1,\ldots,d}$ is the Hessian. 
$\norm{A}_{op}=\max_{\norm{x}\leq 1}\norm{Ax}$ denotes the operator norm of a matrix $A$; $\lambda_{min}(A),\lambda_{max}(A)$ denote the minimum eigenvalue and maximum
eigenvalue of matrix $A$, respectively. 
$\mathbb I_d$ denotes a $d\times d$ identity matrix. $\mathbb E_f(g(X))=\int g(x) f(x)dx$ with $X\sim f$.
We write $a=\mathcal{O}(b)$ to denote $a\leq C b$ for some constant $C>0$, and $f(x)=o(g(x)$ means $|f(x)|\leq \epsilon|g(x)|$ for all $\epsilon>0$. We use $\mathcal N(\mu,\Sigma)$
to denote a Gaussian random variable, or a Gaussian distribution, with mean $\mu$ and covariance $\Sigma$.
\section{Variational Bayes}\label{Variational Bayes}
This section gives a brief overview of the VB method.
Let $y$ be the data and $p(y|\theta)$ the likelihood function, with $\theta$ the set of model parameters.
Let $p(\theta)$ be the prior. Bayesian inference requires computing expectations with respect to the posterior distribution with density 
\[p(\theta|y)=\frac{p(\theta)p(y|\theta)}{p(y)},\]
where $p(y)=\int p(\theta)p(y|\theta)d\theta$ is often called the marginal likelihood.
It is often difficult to compute such expectations, partly because the density $p(\theta|y)$ itself is intractable as the normalizing constant $p(y)$ is unknown.
For simple models, Bayesian inference can be performed using Markov Chain Monte Carlo (MCMC), which estimates 
expectations with respect to $p(\theta|y)$ by sampling from it.
For models where $\theta$ is high dimensional or has a complicated structure, MCMC methods in their current development are either not applicable or very time consuming.
In the latter case, VB is an attractive alternative to MCMC. 
VB approximates the posterior $p(\theta|y)$ by a probability distribution with density $q_\lambda(\theta)$, $\lambda\in\M$ - the variational parameter space, belonging to some tractable family of distributions such as Gaussian. The best $\lambda$ is found by minimizing the Kullback-Leibler (KL) divergence of $p(\theta|y)$ from $q_\lambda(\theta)$
\begin{equation}\label{Prob:LambdaOptimization}
\lambda^*=\arg\min_{\lambda\in\M}\left\{ \KL(q_\lambda\|p(\cdot|y))=\int q_\lambda(\theta)\log\frac{q_\lambda(\theta)}{p(\theta|y)}d\theta\right\}.
\end{equation}
One can easily check that
\[\KL(q_\lambda\|p(\cdot|y)) = -\int q_\lambda(\theta)\log\frac{p(\theta)p(y|\theta)}{q_\lambda(\theta)}d\theta+\log p(y).\]
Thus minimizing  KL is equivalent to maximizing the lower bound which is  also called ELBO on $\log p(y)$
\begin{equation}
\LB(\lambda)=\int q_\lambda(\theta)\log\frac{p(\theta)p(y|\theta)}{q_\lambda(\theta)}d\theta
=\mathbb E_{ q_\lambda}\left[\log\frac{p(\theta)p(y|\theta)}{q_\lambda(\theta)} \right]
=\mathbb  E_{ q_\lambda}\left[ h_{\lambda}(\theta) \right],
\end{equation}
where $h_{\lambda}(\theta):=\log p(\theta)+\log p(y|\theta)-\log q_\lambda(\theta)$.
Using the fact that $\mathbb E_{q_\lambda}[\nabla_\lambda \log q_\lambda(\theta)]=0$,
it can be seen that
\begin{equation}\label{eq: score-function grad}
\nabla_{\lambda}\LB(\lambda )=\mathbb E_{q_\lambda}\left[\nabla_{\lambda}\log q_{\lambda}(\theta)\times h_{\lambda}(\theta) \right].
\end{equation}
One then can obtain an unbiased estimate of $\nabla_{\lambda} \LB(\lambda)$ by sampling from $q_\lambda$,
\begin{equation}\label{eq: grad est}
\widehat{\nabla_{\lambda} \LB}(\lambda)
=\frac{1}{B}\sum_{s=1}^B \nabla_{\lambda}\log q_{\lambda}(\theta_s)\times h_{\lambda}(\theta_s),
\quad \theta_s\sim q_{\lambda}(\theta),\;\;s=1,...,B.
\end{equation}
Alternative to \eqref{eq: score-function grad}, the Euclidean gradient 
$\nabla_{\lambda}\LB(\lambda )$ can be computed using the so-called reparameterization-trick method \citep{kingma2013auto,titsias2014doubly}.
The method for estimating the natural gradient proposed in this paper is applicable in both cases.

Stochastic gradient ascent (SGA) techniques are often employed to solve the maximization problem in \eqref{Prob:LambdaOptimization}.
More specifically, one can iteratively update
$\lambda$ as follows
\begin{equation}\label{SGD:lambdaUpdate}
\lambda^{(k+1)}=\lambda^{(k)}+\tau_{k+1}\widehat{\nabla_{\lambda} \LB}(\lambda^{(k)}),
\end{equation}
with the stepsize $\tau_k$ satisfying
$\sum_{k=1}^\infty \tau_k=\infty$ and $\sum_{k=1}^\infty \tau_k^2<\infty$.
The convergence of the update in \eqref{SGD:lambdaUpdate} 
has been studied in the literature  \citep[see, e.g.][]{robbins1951stochastic,spall2005introduction}.

SGA approximates the exact gradient at each iteration by an estimate using a mini-batch of the full sample (in big data settings) or
by sampling from $q_\lambda$ (as in \eqref{eq: grad est}).
This reduces computational cost, and facilitates on-the-fly (online) learning as new samples arrive. 
Note that, in practice, the data-dependent term $h_\lambda(\theta)$ is often estimated by using a mini-batch of the data.
It is documented extensively in the literature
\citep[see, e.g.][]{bercu2020efficient,Kirkyetal2021,chau2024inversion} that plain SGA as in \eqref{SGD:lambdaUpdate} can lead to unsatisfactory 
estimates, as it is highly sensitive to the choice
of hyper-parameters such as the step size or mini-batch size.
 In addition,
SGA is known to have slow convergence
when the Hessian of the cost function is ill-conditioned \citep{bottou2018optimization}, and even in the best case SGA converges no faster than sublinearly \citep{agarwal2009information, saad2009line, pelletier1998almost}. 
Significant effort in enhancing the plain SGA focuses on deriving adaptive learning step sizes; notable methods include Adam \citep{kingma2014adam}, AdaGrad \citep{duchi2011adaptive} and Adadelta \citep{zeiler2012adadelta}.
An alternative approach involves employing the natural gradient, which we will discuss in the following section.

\section{Natural Gradient}\label{Natural Gradient}
Let $\mathcal Q=\{q_\lambda(\theta):\lambda\in\mathcal M\subset\mathbb{R}^D\}$
be the set of VB approximating probability distributions parameterized by $\lambda$.
We denote by $d$ the dimension of the model parameter $\theta$, and by $D$ the dimension of the variational parameter $\lambda$. 
Gradient-based search for the optimal $\lambda$ relies on the concept of gradient whose definition depends upon the metric on $\mathcal M$.
It turns out that the regular Euclidean metric may not be appropriate for measuring the distance between two densities indexed by different variational parameters.
For instance, by adapting examples given in \cite{salimbeni2018natural}, the pair of two Gaussians $\mathcal N(0,0.1)$ and $\mathcal N(0,1.1)$ look significantly different from each other, compared to the pair  $\mathcal N(0,1000)$ and $\mathcal N(0,1001)$.
Both pairs have the same Euclidean distance, 
while their KL divergences highlight a significant difference: the first pair exhibits a KL divergence of 1.6, whereas the second pair has a KL divergence of $2.5\times 10^{-7}$. 

Now consider two variational parameters $\lambda$, $\lambda+\delta\lambda$ and the KL divergence $\KL(q_{\lambda}|| q_{\lambda+\delta\lambda})$.
From \cite{Tran:STCO2021,tan2021analytic}, it can be seen that 
\begin{align*}
\KL(q_{\lambda}|| q_{\lambda+\delta\lambda})
&\approx\frac{1}{2}(\delta\lambda)^\top I_F(\lambda) \delta\lambda,
\end{align*}
where,
\begin{equation}\label{eq: Fisher matrix}
I_F(\lambda):=-\mathbb E_{q_\lambda}\Big[\nabla^2_\lambda \log q_\lambda(\theta)\Big] = \mathbb E_{q_\lambda}\Big[\nabla_\lambda\log q_\lambda(\theta) (\nabla_\lambda\log q_\lambda(\theta))^\top\Big],
\end{equation}
is the Fisher information matrix of $q_\lambda$.
This shows that the local KL divergence around the point $q_\lambda\in\mathcal Q$ is characterized by the Fisher matrix $I_F(\lambda)$.
Therefore, a suitable metric between $\lambda$ and $\lambda+\delta\lambda$
is the Fisher-Rao metric $(\delta\lambda)^\top I_F(\lambda)\delta\lambda$.
As a result, assuming the objective function $\LB$ is smooth enough and for $l>0$, if one considers the following optimization problem,
\begin{equation}\label{eq:nat grad opt}
\displaystyle\arg\max_{\delta\lambda: (\delta\lambda)^\top I_F(\lambda)\delta\lambda=l}\Big\{\nabla_\lambda\LB(\lambda)^\top\delta\lambda\Big\},
\end{equation}
then through the method of Lagrangian multipliers, the steepest ascent is 
\begin{equation}\label{eq: nat grad def}
\delta\lambda=\nabla_{\lambda}^{\text{nat}}\LB(\lambda):=I^{-1}_F(\lambda)\nabla_{\lambda}\LB(\lambda).
\end{equation}
\cite{amari1998natural} termed this the natural gradient and popularized it in machine learning. 

Using the natural gradient, the update in \eqref{SGD:lambdaUpdate} becomes
\begin{equation}\label{eq:natural gradient solution}
\lambda^{(k+1)}=\lambda^{(k)}+\tau_{k+1} I^{-1}_F(\lambda^{(k)})\nabla_{\lambda}\LB(\lambda^{(k)}).
\end{equation}
In the statistics literature, the steepest ascent in the form \eqref{eq:natural gradient solution} has been used  
for a long time and is often known as Fisher's scoring in the context of maximum likelihood estimation \citep[see, e.g.][]{longford1987fast}. The efficiency of the natural gradient over the Euclidean gradient  
has been well documented \citep{sato2001online,hoffman2013stochastic,tran2017variational,martens2020new,tan2021analytic}.
The natural gradient is invariant under parameterization \citep{martens2020new}, meaning it remains unchanged across different coordinate systems and is an intrinsic geometric object. This property makes it particularly suitable for use when the variational parameter space 
$\mathcal M$ is a Riemannian manifold \citep{Tran:STCO2021} as it is coordinate-free and leverages the underlying geometry of the space.

In the special case of Gaussian approximations 
with a full covariance matrix or a Cholesky-factor covariance matrix,
it is possible to obtain the inverse Fisher matrix in closed form \citep{tan2021analytic,magris2022exact}.
Beyond these limited cases, however, it is challenging to 
accurately compute the natural gradient.
The natural gradient method requires the {\it analytic} computation 
of the Fisher information matrix and its {\it inversion}.
Even if an analytical expression of the Fisher matrix is obtained,
computing its inverse has a complexity of $O(D^\kappa)$, with $2 < \kappa\leq 3$ depending on various algorithms.
It is therefore either analytically infeasible or prohibitively computationally expensive to use
natural gradient in many modern statistical applications; current
practice resorts to heuristic workarounds that can affect the results of Bayesian inference \citep{martens2020new,Lopatnikova:ICASSP}.

\section{Inversion Free Natural Gradient Variational Bayes}\label{Inversion Free Natural Variational Bayes}
This section first presents the approach for approximating the inverse of Fisher matrix. 
We then present the inversion free natural gradient Variational Bayes method, referred to as IFVB,
and its weighted averaged version AIFVB.
The IFVB and AIFVB methods are stochastic natural gradient descent algorithms that avoid computing the Fisher matrix and its inversion altogether.
As explained later, these methods also enable us to deal with
situations where the estimate of Fisher matrix has eigenvalues with significantly
different orders of magnitude (i.e., poor conditioning).
In this section and Section \ref{Convergence analysis}, we will denote $\mathcal{L}(\lambda)  = - \LB(\lambda)$, which can be viewed as the loss function,
and the problem of maximizing the lower bound becomes the minimization of $\mathcal{L}$.

\subsection{Recursive Estimation of $I^{-1}_{F}(\lambda)$}

For each $s=1,2\ldots,$ let 
\begin{equation}\label{Hiform}
  H_s= H_0+\sum_{j=1}^s\nabla_\lambda\log q_\lambda(\theta_j) (\nabla_\lambda\log q_\lambda(\theta_j))^\top,\;\;\;\theta_j\sim q_\lambda,
\end{equation}
where $H_0$ is some positive definite matrix, e.g, $ H_0=\epsilon\mathbb I_D$ with $\epsilon > 0$ and $\mathbb I_D$ the identity matrix of size $D$.  
Theorem \ref{thm:H} below says that $\mathbf{H}_s:=H_s/s$ is a consistent estimate of $I_{F}$ defined in \eqref{eq: Fisher matrix} as $s\to\infty$,
and provides a recursive procedure for obtaining $ H_s^{-1}$. This recursive procedure computes $ H_{s+1}^{-1}$
from $H_{s}^{-1}$ without resorting to the usual (expensive and error prone) matrix inversion.  Additionally,
the symmetry and positivity of $H_s$, and hence of $H_s^{-1}$, is preserved, which is an important  property.

\begin{thm}\label{thm:H}
Let $\phi_s=\nabla_\lambda\log q_\lambda(\theta_s)$. We have that
\begin{equation}\label{eq: IF recursive}
H_{s+1}^{-1}= H^{-1}_s-\left(1+\phi^\top_{s+1} H_{s}^{-1}\phi_{s+1}\right)^{-1}
 H_s^{-1}\phi_{s+1}\phi_{s+1}^\top H_{s}^{-1},\;\;s=0,1,...
\end{equation}
In particular, the positivity and symmetry of $H_s$ is
preserved for all $s$. Furthermore,
\begin{align*}
\mathbf{H}_s=\frac{1}{s}  H_s\xrightarrow[s \to + \infty]{a.s} I_F(\lambda)\;\;\;\text{and}\;\;\;\;\mathbf{H}_s^{-1}\xrightarrow[s \to + \infty]{a.s} I^{-1}_F(\lambda).
\end{align*}
\end{thm}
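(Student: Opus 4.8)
The plan is to establish the theorem in three parts: the recursive inversion formula \eqref{eq: IF recursive}, the preservation of symmetry and positive definiteness, and the almost-sure convergence of $\mathbf{H}_s$ and $\mathbf{H}_s^{-1}$.

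\textbf{Step 1: The recursion.} First I would derive \eqref{eq: IF recursive} by a direct application of the Sherman--Morrison formula. Observe from \eqref{Hiform} that $H_{s+1} = H_s + \phi_{s+1}\phi_{s+1}^\top$, a rank-one update of $H_s$. The Sherman--Morrison identity states that for an invertible matrix $A$ and vectors $u,v$ with $1 + v^\top A^{-1} u \neq 0$,
\begin{equation*}
(A + u v^\top)^{-1} = A^{-1} - \frac{A^{-1} u v^\top A^{-1}}{1 + v^\top A^{-1} u}.
\end{equation*}
Setting $A = H_s$, $u = v = \phi_{s+1}$ yields \eqref{eq: IF recursive} immediately, provided $H_s$ is invertible and $1 + \phi_{s+1}^\top H_s^{-1}\phi_{s+1} \neq 0$. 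Since $H_0$ is positive definite, invertibility and the nonvanishing denominator follow from the positivity argument in Step 2, so this step is essentially routine once positivity is in hand.

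\textbf{Step 2: Symmetry and positive definiteness.} I would argue by induction on $s$. The base case is $H_0$, which is assumed positive definite and symmetric. For the inductive step, $H_{s+1} = H_s + \phi_{s+1}\phi_{s+1}^\top$ is the sum of a positive definite matrix and a positive semidefinite rank-one matrix, hence remains symmetric and positive definite; in particular $\lambda_{\min}(H_{s+1}) \geq \lambda_{\min}(H_s) \geq \lambda_{\min}(H_0) > 0$. Positive definiteness of $H_s$ gives $1 + \phi_{s+1}^\top H_s^{-1}\phi_{s+1} \geq 1 > 0$, validating the denominator in \eqref{eq: IF recursive}, and the inverse of a symmetric positive definite matrix is again symmetric positive definite, so $H_s^{-1}$ inherits these properties. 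This also retroactively justifies the invertibility needed in Step 1.

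\textbf{Step 3: Almost-sure convergence.} This is the main obstacle. Writing $\mathbf{H}_s = \tfrac{1}{s}H_0 + \tfrac{1}{s}\sum_{j=1}^s \phi_j\phi_j^\top$ with $\theta_j \sim q_\lambda$ i.i.d., the first term vanishes as $s\to\infty$, and the second is an empirical average of i.i.d. matrix-valued random variables with common mean $\mathbb{E}_{q_\lambda}[\phi_j\phi_j^\top] = I_F(\lambda)$ by \eqref{eq: Fisher matrix}. Applying the strong law of large numbers entrywise (or a matrix SLLN) gives $\mathbf{H}_s \xrightarrow{a.s.} I_F(\lambda)$, which requires the integrability assumption $\mathbb{E}_{q_\lambda}\|\phi_j\|^2 < \infty$ so that each entry of $\phi_j\phi_j^\top$ has finite mean; I expect this to be guaranteed by regularity conditions on $q_\lambda$ stated or assumed elsewhere in the paper. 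The convergence of the inverses, $\mathbf{H}_s^{-1} \xrightarrow{a.s.} I_F^{-1}(\lambda)$, then follows from the continuity of matrix inversion on the open set of invertible matrices, applied along the almost-sure event where $\mathbf{H}_s \to I_F(\lambda)$; here one needs $I_F(\lambda)$ to be nonsingular (equivalently positive definite), which I would invoke as a standing assumption on the nondegeneracy of the variational family. The delicate point is ensuring the inversion map is well-behaved in the limit: since $\lambda_{\min}(\mathbf{H}_s) = \tfrac{1}{s}\lambda_{\min}(H_s) \to \lambda_{\min}(I_F(\lambda)) > 0$, the iterates stay uniformly bounded away from singularity for large $s$ on the convergence event, so continuity of inversion applies without difficulty.
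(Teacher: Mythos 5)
Your proposal is correct and follows essentially the same route as the paper's proof: the Sherman--Morrison (Riccati) identity applied to the rank-one update $H_{s+1}=H_s+\phi_{s+1}\phi_{s+1}^\top$, an induction argument for symmetry and positive definiteness, and the strong law of large numbers for the almost-sure convergence of $\mathbf{H}_s$ to $I_F(\lambda)$. If anything, you are more explicit than the paper about the conditions underlying the final step (square-integrability of $\phi_j$ for the SLLN, nonsingularity of $I_F(\lambda)$, and continuity of matrix inversion for the convergence of $\mathbf{H}_s^{-1}$), which the paper's proof leaves implicit.
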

\noindent The proof of Theorem \ref{thm:H} can be found in Appendix \ref{Appendix:Prooof of Basic averaging theorem}. The expression \eqref{eq: IF recursive} suggests that $H_{s}^{-1}$ is a sum of outer products - a property that can be exploited to avoid storage of large matrices; see Remark \ref{re:remark 4.2}. 

\subsection{Inversion Free Natural Gradient Variational Bayes}
Theorem \ref{thm:H} suggests that one can approximate the inverse Fisher matrix $I^{-1}_{F}$ by $\mathbf{H}_s^{-1}=s H_s^{-1}$ for some large $s$,
where $ H_s^{-1}$ is calculated recursively as in \eqref{eq: IF recursive}.
This method does not require an analytic calculation of $I_{F}$ and its inversion; also, the estimate $\mathbf{H}_s^{-1}$ is guaranteed to be symmetric and positive definite.
However, a direct application of Theorem \ref{thm:H} for computing the natural gradient can be inefficient for two reasons. 

First, in order to ensure the consistency of estimates $\lambda^{(k)}$ from \eqref{eq:natural gradient solution},
where the inverse Fisher matrix is replaced by its estimate $\mathbf{H}_s^{-1}$,
one must control the eigenvalues of the estimate $\mathbf{H}_s^{-1}$. See \cite{bercu2020efficient} and \cite{boyer2023asymptotic} for related discussion in the context of stochastic Newton's method.
With this aim, we follow \cite{boyer2023asymptotic} and modify \eqref{Hiform} as follows
\begin{equation}\label{eq: H_s def}
A_s(\lambda) =H_0 + \sum_{j=1}^{s}\nabla_\lambda\log q_{\lambda }(\theta_{j}) (\nabla_\lambda\log q_{\lambda }(\theta_{j }))^\top + c_{\beta}\sum_{j=1}^{s}j^{-\beta}Z_{j}Z_{j}^\top,\;\;s=1,2,\ldots 
\end{equation}
where $\theta_j\sim q_\lambda(\cdot)$, $Z_{1} , \ldots , Z_{s}\sim \mathcal{N}(0,\mathbb{I}_D)$ are independent standard Gaussian vectors of dimension $D$, $c_{\beta}  \geq 0$ and $\beta \in (0, \alpha -1/2)$ for some $\alpha\in (1/2,1)$. 
Theorem \ref{thm:H1} in the Appendix shows that $A_s/s$ converges almost surely to $I_F(\lambda)$, and that 
$A_{s+1}^{-1}$ can be  computed recursively  as follows
\begin{equation}\label{Riccati-twice-algorithm for A}
\begin{cases}
A_{s+ \frac{1}{2}}^{-1} = A_{s}^{-1} - \left( 1+ \phi_{s+1}^{\top} A_{s}^{-1} \phi_{s+1} \right)^{-1}A_{s}^{-1} \phi_{s+1}\phi_{s+1}^{\top} A^{-1}_{s}\\
A_{s+ 1}^{-1} = A_{s+ \frac{1}{2}}^{-1} - c_{\beta}(s+1)^{-\beta} \left( 1+ c_{\beta}(s+1)^{-\beta} Z_{s+1}^{\top}A_{s+ \frac{1}{2}}^{-1}Z_{s+1} \right)^{-1} A_{s+ \frac{1}{2}}^{-1}Z_{s+1}Z_{s+1}^{\top}A_{s+ \frac{1}{2}}^{-1}
\end{cases}
\end{equation}
with $\phi_s=\nabla_\lambda\log q_\lambda(\theta_s)$.
As being shown later in the proof of Theorem \ref{theo::as}, taking $c_{\beta} > 0$ ensures
the smallest eigenvalue of the Fisher matrix estimate not going to zero faster than $O(s^{-\beta})$ almost surely, hence enabling strongly consistent estimates.

Second, a direct use of \eqref{eq: H_s def}
would require a separate recursive procedure \eqref{Riccati-twice-algorithm for A} 
to calculate the inverse Fisher estimate $sA_s^{-1}(\lambda^{(k)})$ in each update $\lambda^{(k)}$, $k=1,2,...$, in \eqref{eq:natural gradient solution}.
This might make the VB training procedure in \eqref{eq:natural gradient solution} computationally expensive.
Instead, we propose a ``streamlined" version that updates the Fisher matrix estimate along with the iterates $\lambda^{(k)}$. To this end, we define
\begin{equation}\label{eq: H_s new def}
\tilde{H}_s=H_0 + \sum_{k=0}^{s-1}\nabla_\lambda\log q_{\lambda^{(k)} }(\theta_{k+1}) \big(\nabla_\lambda\log q_{\lambda^{(k)} }(\theta_{k+1})\big)^\top + c_{\beta}\sum_{k=1}^{s}k^{-\beta}
Z_{k}Z_{k}^\top,\;\;s=0,1,...,
\end{equation}
where $\theta_{k+1}\sim q_{\lambda^{(k)}}(\cdot)$. 
Note that, unlike \eqref{eq: H_s def},
$\tilde{H}_s$ in \eqref{eq: H_s new def} depends on the iterates $\lambda^{(k)}$, $k<s$. Similar to \eqref{Riccati-twice-algorithm for A},
$\tilde{H}_{s+1}^{-1}$ can be  computed recursively  as follows
\begin{equation}\label{Riccati-twice-algorithm}
\begin{cases}
\tilde{H}_{s+ \frac{1}{2}}^{-1} = \tilde{H}_{s}^{-1} - \left( 1+ \phi_{s+1}^{\top} \tilde{H}_{s}^{-1} \phi_{s+1} \right)^{-1}\tilde{H}_{s}^{-1} \phi_{s+1}\phi_{s+1}^{\top} \tilde{H}^{-1}_{s}\\
\tilde{H}_{s+ 1}^{-1} = \tilde{H}_{s+ \frac{1}{2}}^{-1} - c_{\beta}(s+1)^{-\beta} \left( 1+ c_{\beta}(s+1)^{-\beta} Z_{s+1}^{\top}\tilde{H}_{s+ \frac{1}{2}}^{-1}Z_{s+1} \right)^{-1} \tilde{H}_{s+ \frac{1}{2}}^{-1}Z_{s+1}Z_{s+1}^{\top}\tilde{H}_{s+ \frac{1}{2}}^{-1},
\end{cases}
\end{equation}
where $\phi_{s+1}=\nabla_\lambda\log q_{\lambda^{(s)} }(\theta_{s+1})$.

The Euclidean gradient of the loss function can be estimated as 
\begin{equation}\label{eq: LB gradient new}
\widehat{\nabla_{\lambda}\mathcal{L} } \left( \lambda \right) = -\frac{1}{B}\sum_{i=1}^{B} \nabla_{\lambda} \log q_{\lambda} \left( \theta_{i} \right) \times h_{\lambda} \left( \theta_{i} \right), 
\end{equation}
where the $\theta_{i}$'s are $B$ i.i.d samples from $q_{\lambda}$.

Putting these together, we propose the Inverse Free Natural Gradient VB algorithm, referred to as IFVB and outlined in Algorithm \ref{alg: IFVB ver 1}. 
The performance of this algorithm is found not sensitive to $\epsilon$ and $c_\beta$; both are set to 1 in our examples below unless stated otherwise. 
We found that the algorithm works well for $\beta$ in the range (0.1,0.5).

\begin{algorithm}[h!]
        \caption{Inversion-Free Natural Gradient Variational Bayes  (IFVB)}\label{alg: IFVB ver 1}
        \begin{algorithmic}[1]
            \REQUIRE Choose an initial value $\lambda^{(0)}$, $\epsilon>0$  and $c_\beta\geq0$. 
            \STATE $H_0=\epsilon\mathbb{I}_D$
            \FOR {$s=0,1,\ldots, $}
             \STATE Compute $\widehat{\nabla_{\lambda} \mathcal L}(\lambda^{(s)})$.
             \STATE Sample $\theta_{s+1}\sim q_{\lambda^{(s)}}(\theta)$, $Z_{s+1}\sim \mathcal{N}(0,\mathbb{I}_D)$ and let $\phi_{s+1}=\nabla_\lambda\log q_{\lambda^{(s)}}(\theta_{s+1})$. Calculate $\tilde{H}_{s+1}^{-1}$ as in \eqref{Riccati-twice-algorithm}, and ${\mathbf{\tilde{H}}}_{s+1}^{-1}=(s+1)\tilde{H}_{s+1}^{-1}$.             
             \STATE Update $\lambda^{(s+1)}  = \lambda^{(s)} - \tau_{s+1} {\mathbf{\tilde{H}}}_{s+1}^{-1} \widehat{\nabla_{\lambda}\mathcal{L} } \left( \lambda^{(s)} \right) $.             
             \ENDFOR
            \end{algorithmic}
\end{algorithm}

\begin{remark}
Some remarks are in order. Line $\#4$ in Algorithm \ref{alg: IFVB ver 1} 
updates $\tilde{H}_{s+1}^{-1}$ from $\tilde{H}_{s}^{-1}$ using only one sample from $q_{\lambda^{(s)}}$.
Depending on applications, however, it might be beneficial to use $S$ samples ($S>1$) to update $\tilde{H}_{s+1}^{-1}$. One then needs to run \eqref{Riccati-twice-algorithm} for $S$ times,
and compute ${\mathbf{\tilde{H}}}_{s+1}^{-1}=S(s+1)\tilde{H}_{s+1}^{-1}$.
The factor $S$ is not practically important as gradient clipping, that keeps the norm of the natural gradient ${\mathbf{\tilde{H}}}_{s+1}^{-1} \widehat{\nabla_{\lambda}\mathcal{L} } \left( \lambda^{(s)} \right) $ below some certain value, is often used in the SGD literature \cite[see, e.g.,][]{goodfellow2016deep}.   
\end{remark}

\begin{remark}\label{re:remark 4.2}
For applications such as deep learning where the size $D$ of variational parameter $\lambda$ is large, it is important to note that implementation of Algorithm \ref{alg: IFVB ver 1} does not require storage of the matrices $\tilde{H}_{s+1}^{-1}$.
Let us take $c_\beta=0$ to simplify the exposition; the extension to the case $c_\beta>0$ is straightforward. From \eqref{Riccati-twice-algorithm}, $\tilde{H}_{s+1}^{-1}$ takes the form
\begin{equation}\label{eq:outer product}
    \tilde{H}_{s+1}^{-1} = \frac{1}{\epsilon}\mathbb{I}_D-\sum_{k=1}^{s+1}\psi_k\psi_k^\top,\;\;\psi_k=\left( 1+ \phi_{k+1}^{\top} \tilde{H}_{k}^{-1} \phi_{k+1} \right)^{-1/2}\tilde{H}_{k}^{-1} \phi_{k+1},
\end{equation}
which is presented by outer products. As the result, all the required matrix-vector multiplications can be performed efficiently. That is,
\begin{equation*}
\begin{split}
\tilde{H}_{s+1}^{-1}\widehat{\nabla_{\lambda}\mathcal{L} } \left( \lambda^{(s)} \right) &= \frac{1}{\epsilon}\widehat{\nabla_{\lambda}\mathcal{L} } \left( \lambda^{(s)} \right)^\top\widehat{\nabla_{\lambda}\mathcal{L} } \left( \lambda^{(s)} \right)-\sum_{k=1}^{s+1}\big(\psi_k^\top\widehat{\nabla_{\lambda}\mathcal{L} } \left( \lambda^{(s)} \right)\big)\psi_k\\
\tilde{H}_{s}^{-1}\phi_{s+1} &= \frac{1}{\epsilon}\phi_{s+1}^\top\phi_{s+1}-\sum_{k=1}^{s}\big(\psi_k^\top\phi_{s+1}\big)\psi_k
\end{split}
\end{equation*}
which does not necessitate storage of the matrices $\tilde{H}_{s}^{-1}$. It is also natural to only keep the last $K$ outer products, for some $K\geq1$ ($K=100$ in our examples below), in \eqref{eq:outer product} to further reduce the computation
\begin{equation}\label{eq:outer product approx}
    \tilde{H}_{s+1}^{-1} \approx \frac{1}{\epsilon}\mathbb{I}_D-\sum_{k=s-K+2}^{s+1}\psi_k\psi_k^\top.
\end{equation}
The computational complexity of Algorithm \ref{alg: IFVB ver 1} is therefore $O(s\times D)$, with $s$ the number of iterations. This complexity is the same as that of the popular adaptive learning methods such as Adam and AdaGrad.
\end{remark}

\subsection{Weighted Averaged IFVB Algorithm}
It is a common practice in the SGD literature to use an average of the iterates $\lambda^{(k)}$ to form the final estimate of the optimal $\lambda^*$ \citep{goodfellow2016deep,boyer2023asymptotic}.
The weighted averaged estimate is of the form
\begin{equation*}
\overline{\lambda}^{(s+1)} = \frac{1}{\sum_{k=1}^{s+1}w_k}\sum_{k=1}^{s+1}w_k\lambda^{(k)} = \overline{\lambda}^{(s)}+\frac{w_{s+1}}{\sum_{k=1}^{s+1}w_k}\big(\lambda^{(s+1)}-\overline{\lambda}^{(s)}\big),    
\end{equation*}
where the weights $w_k>0$.
Inspired by \cite{boyer2023asymptotic}, we select $w_k=\big(\log(k)\big)^w$ with $w\geq0$.
This averaging scheme puts more weight on recent estimates $\lambda^{(k)}$ if $w>0$.
If one chooses $w=0$, this leads to the uniform averaging technique, i.e., 
\begin{equation*}
\overline{\lambda}^{(s)} =\frac{1}{s}\sum_{k=1}^s {\lambda}^{(k)}. 
\end{equation*}
We use $w=2$ in all the numerical examples in Section \ref{Numerical Example}.
We arrive at the Weighted Averaged IFVB Algorithm (AIFVB)
defined recursively for all $s \geq 0$ by
\begin{align*}
\lambda^{(s+1)} & = \lambda^{(s)} - \tau_{s+1} \tilde{\mathbf{H}}_{s+1}^{-1} \widehat{\nabla_{\lambda}\mathcal{L} } \left( \lambda^{(s)} \right) \\
\overline{\lambda}^{(s+1)} & = \overline{\lambda}^{(s)} + \frac{\big(\log (s+1)\big)^{w}}{\sum_{k=0}^{s} \big(\log (k+1)\big)^{w}} \left( \lambda^{(s+1)} - \overline{\lambda}^{(s)} \right).
\end{align*}

\noindent For the AIFVB algorithm, we consider the following estimate of the Fisher matrix 
\begin{equation}\label{eq: H_s def new}
\tilde{\mathbf{H}}_{s} = \frac{1}{s} \left(H_0 + \sum_{k=0}^{s-1}\nabla_\lambda\log q_{\overline{\lambda}^{(k )}}(\overline{\theta}_{k+1}) (\nabla_\lambda\log q_{\overline{\lambda}^{(k)}}(\overline{\theta}_{k+1}))^\top + c_{\beta}\sum_{k=1}^{s} k^{-\beta}Z_{k}Z_{k}^{T} \right),
\end{equation}
for $s =1,2,...$, 
where $\overline{\theta}_{k+1} \sim q_{\overline{\lambda}^{(k)}} (\theta)$, the $Z_{k}$'s are defined as before.
Similar to \eqref{Riccati-twice-algorithm}, $\tilde{\mathbf{H}}_{s+1}^{-1}$ can be updated recursively. 

\begin{remark}
In the estimate \eqref{eq: H_s def new}, we compute $\tilde{\mathbf{H}}_{s}$ using the averaged iterates $\overline{\lambda}^{(k)}$,
as this can lead to a faster convergence.
Nonetheless, one can also use $\lambda^{(k)}$ instead of $\overline{\lambda}^{(k)}$ in \eqref{eq: H_s def new}.
\end{remark}

Putting these together, we have the Weighted Averaged IFVB Algorithm, outlined in Algorithm \ref{NewtonMethodVersion2}.

\begin{algorithm}
        \caption{ Weighted Averaged IFVB Algorithm (AIFVB)}\label{NewtonMethodVersion2}
        \begin{algorithmic}[1]
            \REQUIRE Choose an initial value $\lambda^{(0)}=\overline{\lambda}^{(0)}$, $\epsilon>0$ and $c_\beta\geq0$. 
            \STATE $H_0=\epsilon\mathbb{I}_D$
            \FOR {$s=0,1,\ldots, $}
             \STATE Compute $\widehat{\nabla_{\lambda} \mathcal L}(\lambda^{(s)})$.
             \STATE Sample $\overline{\theta}_{s+1}\sim q_{\overline{\lambda}^{(s)}}(\theta)$, $Z_{s+1}\sim \mathcal{N}(0,\mathbb{I}_D)$ and let $\phi_{s+1}=\nabla_\lambda\log q_{\overline{\lambda}^{(s)}}(\overline{\theta}_{s+1})$. Calculate $\tilde{H}_{s+1}^{-1}$ as in \eqref{Riccati-twice-algorithm}, and ${\mathbf{\tilde{H}}}_{s+1}^{-1}=(s+1)\tilde{H}_{s+1}^{-1}$.        
             \STATE Update $\lambda^{(s+1)}  = \lambda^{(s)} - \tau_{s+1} \tilde{\mathbf{H}}_{s+1}^{-1} \widehat{\nabla_{\lambda}\mathcal{L} } \left( \lambda^{(s)} \right) $.
             \STATE Update $\overline{\lambda}^{(s+1)}  = \overline{\lambda}^{(s)} + \frac{\big(\log (s+1)\big)^{w}}{\sum_{k=0}^{s} \big(\log (k+1)\big)^{w}} \left( \lambda^{(s+1)} - \overline{\lambda}^{(s)} \right) $.    
             \ENDFOR
            \end{algorithmic}
\end{algorithm}

%

\section{Convergence analysis}\label{Convergence analysis}
For the convergence analysis in this section, we consider a step size of the form $\tau_{k} = \frac{c_{\alpha}}{\left( c_{\alpha}' + k \right)^{\alpha}}$ with $c_{\alpha} >0$, $c_{\alpha}'\geq 0$ and $\alpha \in (1/2,1)$. 
Write
\begin{align*}
\nabla_\lambda \mathcal L(\lambda)=\mathbb E_{q_\lambda}[ - \nabla_\lambda\log q_\lambda(\theta)\times  h_\lambda(\theta)] =: \mathbb E_{q_\lambda}[\ell (\theta,\lambda)].
\end{align*}

\noindent It can be seen that\footnote{See Appendix \ref{app: appendix on Hessian} for detailed calculations; see also  \cite{tang2019variational,tan2021analytic}.
}
\[
\nabla_\lambda^2\mathcal L(\lambda) = I_F(\lambda) + \int\nabla_\lambda^2 q_\lambda(\theta)\big(\log q_\lambda(\theta)-\log p(\theta|y) \big)d\theta .
\]
At the optimal $\lambda=\lambda^*$, $q_{\lambda^*}(\theta)\approx p(\theta|y)$, the second term above expects to be close to zero.
We can therefore expect that, in a neighbourhood of $\lambda^*$, the Fisher $I_F(\lambda)$ behaves like the Hessian $\nabla_\lambda^2\mathcal L(\lambda)$.
This motivates us to adapt the results from stochastic Newton algorithms \cite[see, e.g.,][]{bercu2020efficient,boyer2023asymptotic}
for convergence analysis of IFVB and AIFVB proposed in this paper. 

We now provide the convergence analysis for the AIFVB algorithm; a minor modification provides convergence results for IFVB.

\begin{thm}\label{theo::as}
Assume that $\mathcal{L}$ is  twice differentiable and that there is $L_{0}$ such that for all $\lambda$, $\left\| \nabla_{\lambda}^{2} \mathcal{L} (\lambda) \right\|_{op} \leq L_{0}$. Suppose that  $\nabla_\lambda\mathcal L(\lambda^*)=0$ and $c_{\beta} > 0$. Assume also that there are non negative constants $C_{0},C_{1}$ such that  for all $\lambda$
\[
\mathbb E\left[ \norm{\ell(\theta,\lambda)}^2\right]\leq C_0+C_1(\mathcal L(\lambda)-\mathcal L(\lambda^*))
\]
and that there are positive constants $C_{0}',C_{1}'$ such that for all $\lambda$, 
\begin{equation}\label{eq:fourth moment assumption}
\mathbb{E}\left[ \left\| \nabla_{\lambda} \log q_{\lambda} (\theta) \right\|^{4} \right] \leq C_{0}' + C_{1}' \left( \mathcal{L}(\lambda)- \mathcal{L}(\lambda^{*}) \right)^2 .   
\end{equation}
Suppose that $C_{1}'=0$ or that $\mathcal{L}$ is convex. Then the estimates $\lambda^{(k)}$, $\overline{\lambda}^{(k)}$ $k=0,1,...$, from Algorithm \ref{NewtonMethodVersion2} satisfy:
\begin{enumerate}
    \item[(i)] $\mathcal{L}\left( \lambda^{(k)} \right) - \mathcal{L}\left( \lambda^{*} \right)$ converges almost surely to a finite random variable. 
    \item[(ii)] $\min_{k=0}^{s}\left\| \nabla \mathcal{L}\left( \lambda^{(k)} \right) \right\|^{2} = o \left( s^{-(1-\alpha)} \right) $ a.s.
    \item[(iii)] $\min_{k=0}^{s}\left\| \nabla \mathcal{L}\left( \overline{\lambda}^{(k)} \right) \right\|^{2} = o \left( s^{-(1-\alpha)} \right) $ a.s.
\end{enumerate}
\end{thm}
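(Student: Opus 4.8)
The plan is to treat AIFVB as a stochastic Newton / preconditioned stochastic gradient scheme and to run a Robbins--Siegmund (almost-supermartingale) argument on the loss gap $V_s := \mathcal{L}(\lambda^{(s)})-\mathcal{L}(\lambda^*)\ge 0$. First I would fix a filtration $\mathcal{F}_s$ carrying everything computed up to $\lambda^{(s)}$, enlarge it by the fresh randomness $(\overline\theta_{s+1},Z_{s+1})$ used to build $\tilde{\mathbf H}_{s+1}$, and record that, given this enlarged $\sigma$-field, $\widehat{\nabla_\lambda\mathcal L}(\lambda^{(s)})$ is an unbiased estimate of $\nabla_\lambda\mathcal L(\lambda^{(s)})$ that is conditionally independent of the preconditioner. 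This decoupling is what lets the cross term collapse to the deterministic descent quantity $\nabla_\lambda\mathcal L(\lambda^{(s)})^\top\tilde{\mathbf H}_{s+1}^{-1}\nabla_\lambda\mathcal L(\lambda^{(s)})$.

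The key preliminary step, and the one I expect to be the main obstacle, is the two-sided eigenvalue control of the preconditioner. Using the almost sure convergence of the Fisher matrix estimate (Theorem~\ref{thm:H1}) together with the fourth-moment bound \eqref{eq:fourth moment assumption} --- whose dichotomy $C_1'=0$ or $\mathcal L$ convex is precisely what keeps $\mathbb E[\norm{\nabla_\lambda\log q_\lambda(\theta)}^4]$ controlled along the trajectory --- I would show that $\lambda_{\max}(\tilde{\mathbf H}_{s+1})$ is almost surely bounded, eventually by some finite $M$, so that $\lambda_{\min}(\tilde{\mathbf H}_{s+1}^{-1})\ge 1/M$. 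The delicate half is the lower bound on $\lambda_{\min}(\tilde{\mathbf H}_{s+1})$: here the regularisation term $c_\beta\sum_k k^{-\beta}Z_kZ_k^\top$ with $c_\beta>0$ is essential, and the claim announced after \eqref{Riccati-twice-algorithm for A} is that it forces $\lambda_{\min}(\tilde{\mathbf H}_{s+1})\gtrsim s^{-\beta}$ a.s., equivalently $\norm{\tilde{\mathbf H}_{s+1}^{-1}}_{op}=\mathcal O(s^{\beta})$.

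With these bounds the descent step is routine. The $L_0$-Lipschitz gradient assumption yields the quadratic upper bound $\mathcal L(\lambda^{(s+1)})\le \mathcal L(\lambda^{(s)})-\tau_{s+1}\nabla_\lambda\mathcal L(\lambda^{(s)})^\top\tilde{\mathbf H}_{s+1}^{-1}\widehat{\nabla_\lambda\mathcal L}(\lambda^{(s)})+\tfrac{L_0}{2}\tau_{s+1}^2\,\|\tilde{\mathbf H}_{s+1}^{-1}\widehat{\nabla_\lambda\mathcal L}(\lambda^{(s)})\|^2$. Taking conditional expectation, the linear term becomes $-\tau_{s+1}\nabla_\lambda\mathcal L(\lambda^{(s)})^\top\tilde{\mathbf H}_{s+1}^{-1}\nabla_\lambda\mathcal L(\lambda^{(s)})\le -\tau_{s+1}\norm{\nabla_\lambda\mathcal L(\lambda^{(s)})}^2/\lambda_{\max}(\tilde{\mathbf H}_{s+1})$, while the quadratic term is bounded, via $\norm{\tilde{\mathbf H}_{s+1}^{-1}}_{op}^2=\mathcal O(s^{2\beta})$ and the growth condition $\mathbb E[\norm{\ell(\theta,\lambda)}^2]\le C_0+C_1 V_s$, by a multiple of $\tau_{s+1}^2 s^{2\beta}(C_0+C_1 V_s)$. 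Since $\tau_{s+1}\asymp s^{-\alpha}$ and $\beta<\alpha-1/2$, the factor $\tau_{s+1}^2 s^{2\beta}\asymp s^{-2(\alpha-\beta)}$ is summable --- this is exactly where the constraint $\beta\in(0,\alpha-1/2)$ enters. Robbins--Siegmund then delivers (i), the almost sure convergence of $V_s$, together with $\sum_s\tau_{s+1}\norm{\nabla_\lambda\mathcal L(\lambda^{(s)})}^2/\lambda_{\max}(\tilde{\mathbf H}_{s+1})<\infty$ a.s.; boundedness of $\lambda_{\max}(\tilde{\mathbf H}_{s+1})$ upgrades this to $\sum_s\tau_{s+1}\norm{\nabla_\lambda\mathcal L(\lambda^{(s)})}^2<\infty$ a.s.

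Part (ii) follows from this series by a blocking argument: since $\sum_{k\le s}\tau_{k+1}\gtrsim s^{1-\alpha}$ while the tail $\sum_{k>s/2}\tau_{k+1}\norm{\nabla_\lambda\mathcal L(\lambda^{(k)})}^2\to 0$, comparing $\min_{s/2<k\le s}\norm{\nabla_\lambda\mathcal L(\lambda^{(k)})}^2$ against $\sum_{s/2<k\le s}\tau_{k+1}$ gives $\min_{k\le s}\norm{\nabla_\lambda\mathcal L(\lambda^{(k)})}^2=o(s^{-(1-\alpha)})$. For part (iii) I would re-run the same supermartingale analysis along the weighted-averaged sequence $\overline\lambda^{(k)}$; this is legitimate because the weights $(\log k)^w$ vary slowly, so the averaged iterates inherit analogous descent and eigenvalue-control estimates, and in the case $C_1'\ne 0$ the convexity of $\mathcal L$ lets Jensen's inequality transfer the convergence of the loss gap and the gradient series from $\lambda^{(k)}$ to $\overline\lambda^{(k)}$, after which the identical blocking argument produces the stated $o(s^{-(1-\alpha)})$ rate. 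The principal technical hurdles throughout are the almost sure two-sided eigenvalue control of $\tilde{\mathbf H}_{s+1}$ and the filtration bookkeeping needed to decouple the gradient estimate from the preconditioner.
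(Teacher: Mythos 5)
Your skeleton is the paper's own: the same Taylor/descent inequality under the uniform Hessian bound, the same filtration bookkeeping making $\tilde{\mathbf{H}}_{k+1}^{-1}$ measurable while the gradient samples remain conditionally independent of it, Robbins--Siegmund applied to $W_k=\mathcal{L}(\lambda^{(k)})-\mathcal{L}(\lambda^{*})$, the spectral bound $\norm{\tilde{\mathbf{H}}_{s}^{-1}}_{op}=\mathcal{O}(s^{\beta})$ a.s.\ coming from the $c_{\beta}\sum_{k}k^{-\beta}Z_kZ_k^{\top}$ regularization, and summability of the error terms from $2\alpha-2\beta>1$. Your blocking argument for (ii) is a correct inline proof of the min-rate lemma the paper cites (Lemma 2 of \cite{liu2022almost}), so that step is a legitimate, self-contained substitute.

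There are, however, two gaps relative to what the paper actually does. (a) You propose to bound $\lambda_{\max}(\tilde{\mathbf{H}}_{s})$ (equivalently, to keep $\lambda_{\min}(\tilde{\mathbf{H}}_{s}^{-1})$ away from $0$) by invoking the almost sure convergence of the Fisher estimate, Theorem \ref{thm:H1}. That theorem concerns a \emph{fixed} $\lambda$ with i.i.d.\ draws from $q_{\lambda}$, whereas the matrices in Algorithm \ref{NewtonMethodVersion2} are built from $\overline{\theta}_{k+1}\sim q_{\overline{\lambda}^{(k)}}$ along the random trajectory, and at this stage of the proof nothing is known about convergence of $\overline{\lambda}^{(k)}$; consistency of $\tilde{\mathbf{H}}_{s}$ is only established afterwards, in Corollary \ref{cor::as::hs}, \emph{under the hypothesis} $\lambda^{(k)}\to\lambda^{*}$, so using it here would be circular. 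The paper's mechanism is different: it writes $\tilde{\mathbf{H}}_{s}=\frac{1}{s}\left(H_0+c_{\beta}\sum_{k\leq s}k^{-\beta}Z_kZ_k^{\top}\right)+R_s+M_s$ with $M_s$ an average of martingale differences, and runs a second Robbins--Siegmund argument on $\norm{M_s}_F^2$; it is exactly there that the fourth-moment bound \eqref{eq:fourth moment assumption} and the dichotomy ($C_1'=0$, or convexity plus Jensen/Toeplitz so that $\mathcal{L}(\overline{\lambda}^{(k)})-\mathcal{L}(\lambda^{*})$ stays a.s.\ bounded) are consumed, yielding $\norm{M_s}_F=\mathcal{O}(1)$, $\norm{R_s}_F=\mathcal{O}(1)$ a.s., hence $\liminf_k\lambda_{\min}(\tilde{\mathbf{H}}_{k}^{-1})>0$. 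Some martingale argument of this kind is unavoidable: only boundedness, not consistency, is available and is all that is needed. (b) For (iii), ``re-running the supermartingale analysis along $\overline{\lambda}^{(k)}$'' is not meaningful, because the averaged iterates obey an averaging recursion, not a descent recursion, so no analogue of the descent inequality exists for them. Your fallback via Jensen only gives $\mathcal{L}(\overline{\lambda}^{(k)})-\mathcal{L}(\lambda^{*})\leq\big(\sum_{j\leq k}w_j\big)^{-1}\sum_{j\leq k}w_jW_j$, which converges to the possibly nonzero limit $W_{\infty}$; this yields boundedness of $\norm{\nabla\mathcal{L}(\overline{\lambda}^{(k)})}^2$, not the summability $\sum_k\tau_{k+1}\norm{\nabla\mathcal{L}(\overline{\lambda}^{(k)})}^2<\infty$ that your blocking argument requires. (To be fair, the paper is itself very terse on (iii), simply reapplying the lemma to the averaged sequence, but your proposed transfer does not close this step either.)
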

If the convexity of $\mathcal{L}$ is satisfied, conclusions (ii) and (iii) in Theorem \ref{theo::as} imply the almost sure convergence of $\lambda^{(k)}$ and $\overline{\lambda}^{(k)}$ to $\lambda^*$.
The assumption in \eqref{eq:fourth moment assumption} might appear to look irrelevant, 
as the right-hand side term is model-dependent, i.e. depending on the prior and likelihood, while the left-hand side is not.
One can simply replace \eqref{eq:fourth moment assumption} with an assumption on the uniform bound of the fourth order moment of $\nabla_\lambda\log q_{\lambda}$,
which obviously implies \eqref{eq:fourth moment assumption}. We use \eqref{eq:fourth moment assumption} to keep the result as general as possible.

We now give the consistency of the estimates of the Fisher information given in \eqref{eq: H_s def new}.
\begin{cor}\label{cor::as::hs}
Suppose that $\lambda^{(k)}$ converges almost surely to $\lambda^{*}$, and that the map $\lambda \longmapsto I_{F}(\lambda)$ is continuous at $\lambda^{*}$. Suppose also that \eqref{eq:fourth moment assumption} is satisfied.
Then
\[
\tilde{\mathbf{H}}_{s} \xrightarrow[s\to +\infty]{a.s} I_{F} \left( \lambda^{*} \right).
\]
\end{cor}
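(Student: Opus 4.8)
The plan is to decompose $\tilde{\mathbf{H}}_s$ from \eqref{eq: H_s def new} into its three natural pieces and show that the initialization and perturbation pieces vanish while the score-outer-product piece converges to $I_F(\lambda^*)$. Since the estimator is built from the averaged iterates $\overline{\lambda}^{(k)}$ whereas the hypothesis only controls the $\lambda^{(k)}$, the first step is to transfer the convergence. Because $\overline{\lambda}^{(s)}$ is a weighted average of $\lambda^{(1)},\dots,\lambda^{(s)}$ with positive weights $w_k=(\log(k+1))^w$ satisfying $\sum_k w_k=\infty$ and $w_k/\sum_{j\le s}w_j\to 0$, the Toeplitz (weighted Ces\`aro) lemma gives $\overline{\lambda}^{(k)}\to\lambda^*$ almost surely whenever $\lambda^{(k)}\to\lambda^*$. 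Writing $\phi_{k+1}=\nabla_\lambda\log q_{\overline{\lambda}^{(k)}}(\overline{\theta}_{k+1})$, I would then split
\begin{equation*}
\tilde{\mathbf{H}}_s = \frac{1}{s}H_0 + \frac{1}{s}\sum_{k=0}^{s-1}\phi_{k+1}\phi_{k+1}^\top + \frac{c_\beta}{s}\sum_{k=1}^{s}k^{-\beta}Z_kZ_k^\top
\end{equation*}
and treat the three summands separately.

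The initialization term $\tfrac1s H_0$ is deterministically $O(1/s)$ and vanishes. For the perturbation term I would write $Z_kZ_k^\top=\mathbb{I}_D+(Z_kZ_k^\top-\mathbb{I}_D)$: the deterministic part contributes $\tfrac{c_\beta}{s}\mathbb{I}_D\sum_{k=1}^s k^{-\beta}=O(s^{-\beta})\to0$ since $\beta\in(0,1)$, while the centred part $\tfrac{c_\beta}{s}\sum_{k=1}^s k^{-\beta}(Z_kZ_k^\top-\mathbb{I}_D)$ is a normalized sum of independent, mean-zero matrices. Since $\sum_k k^{-2(1+\beta)}<\infty$, the series $\sum_k k^{-\beta}(Z_kZ_k^\top-\mathbb{I}_D)/k$ converges almost surely by Kolmogorov's criterion, and Kronecker's lemma then forces the Ces\`aro average to $0$ almost surely.

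The heart of the proof is the middle term. Let $\mathcal{F}_k$ be the filtration generated by the iterates and samples through step $k$, so that $\overline{\lambda}^{(k)}$ is $\mathcal{F}_k$-measurable and $\overline{\theta}_{k+1}\sim q_{\overline{\lambda}^{(k)}}$ conditionally on $\mathcal{F}_k$. By the definition \eqref{eq: Fisher matrix} of the Fisher matrix, $\mathbb{E}[\phi_{k+1}\phi_{k+1}^\top\mid\mathcal{F}_k]=I_F(\overline{\lambda}^{(k)})$, which yields
\begin{equation*}
\frac{1}{s}\sum_{k=0}^{s-1}\phi_{k+1}\phi_{k+1}^\top = \frac{1}{s}\sum_{k=0}^{s-1}\big(\phi_{k+1}\phi_{k+1}^\top-I_F(\overline{\lambda}^{(k)})\big) + \frac{1}{s}\sum_{k=0}^{s-1}I_F(\overline{\lambda}^{(k)}).
\end{equation*}
The second sum converges to $I_F(\lambda^*)$ by Ces\`aro, using $\overline{\lambda}^{(k)}\to\lambda^*$ together with continuity of $\lambda\mapsto I_F(\lambda)$ at $\lambda^*$. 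The first sum is a normalized martingale-difference sum, which I would control with the martingale strong law: it suffices that $\sum_k k^{-2}\,\mathbb{E}\big[\|\phi_{k+1}\phi_{k+1}^\top-I_F(\overline{\lambda}^{(k)})\|^2\mid\mathcal{F}_k\big]<\infty$ almost surely. Since $\|\phi_{k+1}\phi_{k+1}^\top\|_{op}=\|\phi_{k+1}\|^2$, this reduces to bounding $\mathbb{E}[\|\phi_{k+1}\|^4\mid\mathcal{F}_k]=\mathbb{E}_{q_{\overline{\lambda}^{(k)}}}[\|\nabla_\lambda\log q_{\overline{\lambda}^{(k)}}(\theta)\|^4]\le C_0'+C_1'(\mathcal{L}(\overline{\lambda}^{(k)})-\mathcal{L}(\lambda^*))^2$ from \eqref{eq:fourth moment assumption}; along the almost-sure trajectory $\overline{\lambda}^{(k)}\to\lambda^*$ the right-hand side is path-wise bounded (using continuity of $\mathcal{L}$), so the weighted series converges and the martingale term vanishes. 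Combining the three pieces gives $\tilde{\mathbf{H}}_s\to I_F(\lambda^*)$ almost surely.

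I expect the main obstacle to be exactly this martingale-difference term, and specifically the justification that its conditional second moments are summable after the $k^{-2}$ weighting. The delicate point is that the bound on $\mathbb{E}[\|\phi_{k+1}\|^4\mid\mathcal{F}_k]$ furnished by \eqref{eq:fourth moment assumption} is itself random and only becomes uniformly finite after restricting to the probability-one event on which $\overline{\lambda}^{(k)}\to\lambda^*$ (keeping $(\mathcal{L}(\overline{\lambda}^{(k)})-\mathcal{L}(\lambda^*))^2$ bounded). Making this path-wise boundedness rigorous and pairing it with the correct conditional form of the martingale strong law is where the real care lies; the perturbation term and the Ces\`aro/Toeplitz steps are routine.
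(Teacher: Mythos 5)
Your proof is correct and follows essentially the same route as the paper: the same three-way decomposition of $\tilde{\mathbf{H}}_s$, the same identification $\mathbb{E}[\phi_{k+1}\phi_{k+1}^\top\mid\mathcal{F}_k]=I_F(\overline{\lambda}^{(k)})$ with a Ces\`aro/continuity argument for that piece, and the same martingale-difference residual controlled via the fourth-moment assumption and pathwise boundedness along the convergent trajectory. The only (immaterial) differences are in the standard tools invoked for the routine steps: you use Kolmogorov's criterion plus Kronecker's lemma for the Gaussian perturbation term and Chow's martingale strong law for the residual, where the paper uses a Toeplitz-lemma argument and cites a rate result (Theorem 6.2 of C\'enac et al.) that gives $o(\log s^{1+\delta}/s)$ rather than mere convergence.
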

Note that we can obtain the convergence rate of $\tilde{\mathbf{H}}_{s}$ to $I_{F}(\lambda^{*})$,
or even the rate of convergence of $\tilde{\mathbf{H}}_{s}$ to the Hessian $\nabla_\lambda^2 \mathcal L(\lambda^*)$.
Please see Appendix \ref{Appendix:Convergence rate to Hessian} for more details.
 
Without the convexity assumption, the proof of Theorem \ref{theo::as} implies that 
the estimates $\lambda^{(k)}$ converge to a stationary point $\lambda^*$ of the objective $\mathcal L$, i.e. $\nabla_\lambda\mathcal L(\lambda^*)=0$.
The result below gives a convergence rate of the estimates $\lambda^{(k)}$ to such $\lambda^*$.
\begin{thm}\label{theo::rate::lambda}
Suppose that $\lambda^{(k)}$ converges almost surely to $\lambda^{*}$. Assume that the functional $\mathcal{L}$ is differentiable with $\nabla_\lambda\mathcal L(\lambda^*)=0$ and twice continuously differentiable in a neighborhood of $\lambda^{*}$. Suppose also that
there  are $\eta > \frac{1}{\alpha}-1$ and positive constants $C_{\eta,0},C_{\eta,1}$ such that for all $\lambda$,
\[
\mathbb{E}\left[ \left\| \ell(\theta , \lambda) \right\|^{2+2\eta} \right] \leq C_{\eta,0} + C_{\eta,1}\left( \mathcal{L} (\lambda) - \mathcal{L}\left( \lambda^{*} \right) \right)^{1+\eta} 
\]
and that inequality \eqref{eq:fourth moment assumption} holds. Suppose also that $\nabla_{\lambda}^{2}\mathcal{L} \left( \lambda^{*} \right)$ and $I_{F} \left( \lambda^{*} \right)$ are positive. Then 
\begin{equation}\label{eq:convergence rate 1}
\norm{\lambda^{(s)}-\lambda^*}^2=\mathcal O\left(\frac{\log s}{s^\alpha}\right) \quad a.s. 
\end{equation}
\end{thm}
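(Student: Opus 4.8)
The plan is to linearize the recursion for $\lambda^{(s)}$ around $\lambda^*$ and reduce the statement to a rate lemma for stochastic approximation of Newton type, in the spirit of \cite{boyer2023asymptotic}. Writing $\Delta_s = \lambda^{(s)} - \lambda^*$ and decomposing the stochastic gradient as $\widehat{\nabla_{\lambda}\mathcal{L}}(\lambda^{(s)}) = \nabla_\lambda\mathcal{L}(\lambda^{(s)}) + \xi_{s+1}$, where $\xi_{s+1}$ is a centered martingale increment with respect to a suitable filtration $\mathcal{F}_s$, I would use that $\mathcal{L}$ is twice continuously differentiable near $\lambda^*$ with $\nabla_\lambda\mathcal{L}(\lambda^*)=0$ to Taylor expand $\nabla_\lambda\mathcal{L}(\lambda^{(s)}) = H^*\Delta_s + R_s$, where $H^*:=\nabla_\lambda^2\mathcal{L}(\lambda^*)$ and $\norm{R_s}=\mathcal{O}(\norm{\Delta_s}^2)$. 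Substituting into the update $\lambda^{(s+1)} = \lambda^{(s)} - \tau_{s+1}\tilde{\mathbf{H}}_{s+1}^{-1}\widehat{\nabla_{\lambda}\mathcal{L}}(\lambda^{(s)})$ gives
\[
\Delta_{s+1} = \big(\mathbb{I}_D - \tau_{s+1}\tilde{\mathbf{H}}_{s+1}^{-1}H^*\big)\Delta_s - \tau_{s+1}\tilde{\mathbf{H}}_{s+1}^{-1}R_s - \tau_{s+1}\tilde{\mathbf{H}}_{s+1}^{-1}\xi_{s+1}.
\]

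Next I would establish the contraction. By Corollary \ref{cor::as::hs}, $\tilde{\mathbf{H}}_{s+1}^{-1}\to I_F^{-1}(\lambda^*)$ almost surely, so the linear part converges to $M := I_F^{-1}(\lambda^*)H^*$. Since $I_F(\lambda^*)$ and $H^*$ are symmetric positive definite, $M$ is similar to $I_F^{-1/2}(\lambda^*)\,H^*\,I_F^{-1/2}(\lambda^*)$, which is symmetric positive definite; hence $M$ has positive real eigenvalues bounded below by some $\mu>0$. Solving the associated Lyapunov equation produces a symmetric positive definite $P$ such that, in the norm $\norm{\Delta}_P^2=\Delta^\top P\Delta$, the linear operator satisfies $\norm{(\mathbb{I}_D-\tau_{s+1}M)\Delta}_P\leq(1-c\tau_{s+1})\norm{\Delta}_P$ for large $s$ and some $c>0$. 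Because $\tilde{\mathbf{H}}_{s+1}^{-1}H^*$ differs from $M$ by a term that is $o(1)$ almost surely, this contraction survives with a slightly smaller constant once $s$ is large.

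I would then set $V_s=\norm{\Delta_s}_P^2$ and, conditioning on $\mathcal{F}_s$, derive a supermartingale-type bound
\[
\mathbb{E}[V_{s+1}\mid\mathcal{F}_s]\leq(1-c\tau_{s+1})V_s + \tau_{s+1}^2\,\mathbb{E}\big[\norm{\tilde{\mathbf{H}}_{s+1}^{-1}\xi_{s+1}}_P^2\mid\mathcal{F}_s\big] + (\text{higher-order terms}),
\]
where the remainder from $R_s$ is $\mathcal{O}(\norm{\Delta_s}^2)=o(\norm{\Delta_s}^2)$ relative to the contraction since $\Delta_s\to0$. The conditional noise variance is bounded near convergence by the assumption $\mathbb{E}[\norm{\ell(\theta,\lambda)}^2]\leq C_0+C_1(\mathcal{L}(\lambda)-\mathcal{L}(\lambda^*))$, which already yields the $L^2$ rate $\mathbb{E}[V_s]=\mathcal{O}(\tau_s)=\mathcal{O}(s^{-\alpha})$. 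To upgrade to the almost sure rate with the logarithmic factor, I would examine the weighted martingale built from the noise terms and apply a martingale strong-law / maximal inequality (e.g. Chow's theorem). This is precisely where the higher-moment hypothesis $\mathbb{E}[\norm{\ell(\theta,\lambda)}^{2+2\eta}]\leq C_{\eta,0}+C_{\eta,1}(\mathcal{L}(\lambda)-\mathcal{L}(\lambda^*))^{1+\eta}$ enters: the condition $\eta>\frac{1}{\alpha}-1$ is equivalent to $\sum_s\tau_s^{1+\eta}<\infty$, which makes the relevant series converge almost surely and inflates the $L^2$ rate by the factor $\log s$, giving $V_s=\mathcal{O}(\tau_s\log s)=\mathcal{O}(\log s/s^\alpha)$ almost surely. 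Equivalence of $\norm{\cdot}_P$ and $\norm{\cdot}$ then establishes \eqref{eq:convergence rate 1}.

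The hard part will be twofold. First, the gain matrix $\tilde{\mathbf{H}}_{s+1}^{-1}$ is random and only asymptotically equal to $I_F^{-1}(\lambda^*)$, so the linear part is a perturbed, time-varying operator; one must show the perturbation is genuinely negligible against the contraction without destroying the Lyapunov structure, which relies on feeding in the consistency from Corollary \ref{cor::as::hs}. Second, and more delicate, is the passage from the $L^2$ rate to the sharp almost sure rate with the correct $\log s$ factor, which requires a careful martingale analysis exploiting the $(2+2\eta)$-moment bound together with \eqref{eq:fourth moment assumption} and the summability $\sum_s\tau_s^{1+\eta}<\infty$. Both difficulties are handled by adapting the stochastic Newton machinery of \cite{boyer2023asymptotic} to the present variational setting.
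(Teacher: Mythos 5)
Your overall skeleton matches the paper's proof (both are adaptations of the stochastic-Newton analysis of Boyer and Godichon-Baggioni that you cite): linearize the recursion around $\lambda^*$, split the update into a linear contraction, a martingale increment $\xi_{k+1}$, a Taylor remainder, and the error coming from the random gain $\tilde{\mathbf{H}}_{k+1}^{-1}$, then use positivity of $I_F(\lambda^*)$ and $\nabla^2_\lambda\mathcal{L}(\lambda^*)$ for the contraction and the $(2+2\eta)$-moment bound (with $\eta>\frac{1}{\alpha}-1$, correctly identified as equivalent to $\sum_s\tau_s^{1+\eta}<\infty$) for the almost sure rate of the noise. The implementation differs: the paper extracts the \emph{deterministic} gain $I_F^{-1}$, so the linear part is $\mathbb{I}-\tau_{k+1}I_F^{-1}\nabla^2\mathcal{L}(\lambda^*)$ with the matrix-estimation error pushed into a remainder $r_k=(\tilde{\mathbf{H}}_{k+1}^{-1}-I_F^{-1})\nabla\mathcal{L}(\lambda^{(k)})$; it then diagonalizes $I_F^{-1}\nabla^2\mathcal{L}(\lambda^*)=Q^{-1}\Lambda Q$, unwinds the recursion into explicit products $\beta_{s,k}=\prod_{j>k}(\mathbb{I}-\tau_j\Lambda)$, bounds the martingale term via Theorem 6.1 of C\'enac et al.\ (2020), which directly yields $\|M_s'\|^2=\mathcal{O}(\log s/s^\alpha)$ a.s., and absorbs the remainders ($r_k$ and $\delta_k$, both $o(\|\lambda^{(k)}-\lambda^*\|)$ a.s.) through a recursive inequality and a stabilization lemma from Duflo. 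Your Lyapunov-norm packaging of the contraction, keeping the random matrix inside the linear operator, is a legitimate alternative to the diagonalization.

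The step that does not survive scrutiny as written is the intermediate $L^2$ claim $\mathbb{E}[V_s]=\mathcal{O}(\tau_s)$ and the plan to ``inflate'' it into the almost sure rate. Your contraction $(1-c\tau_{s+1})$ and the control of the Taylor remainder hold only on the random event that $s$ exceeds an $\omega$-dependent threshold where $\tilde{\mathbf{H}}_{s+1}^{-1}$ is close to $I_F^{-1}$ and $\|\Delta_s\|$ is small; a conditional inequality valid only past a random index cannot be integrated to produce moment bounds without a localization or stopping-time argument that you do not supply. Moreover, the $\log s$ factor is not obtained by upgrading an $L^2$ rate: it comes from a law-of-iterated-logarithm-type almost sure bound applied directly to the weighted martingale sum, which is exactly what the cited theorem of C\'enac et al.\ delivers in the paper's proof, with the $(2+2\eta)$-moment hypothesis as its input. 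The repair is to abandon the two-stage plan and run the argument entirely path-wise, as the paper does: separate the random-gain error as its own remainder term, apply the almost sure martingale rate theorem to the noise term, and close the loop on the remainders with a stabilization lemma whose coefficients are deterministic.
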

Observe that \eqref{eq:convergence rate 1} is the usual rate of convergence for (adaptive) stochastic gradient type algorithms \citep{bercu2020efficient,nguyen2021unified}.
The following theorem shows that the weighted averaged estimates achieve a better convergence rate and are asymptotically efficient.
\begin{thm}\label{theo::wasn}
Suppose that the assumptions in Theorem \ref{theo::rate::lambda} hold.
Assume that the functional 
\[
\lambda \longmapsto \Sigma({\lambda}):=\mathbb{E}_{q_\lambda} \left[ \ell (\theta, \lambda) \ell(\theta , \lambda)^\top \right]
\]
is continuous at $\lambda^{*}$, and there are a neighborhood $V^{*}$ of $\lambda^{*}$ and a constant $L_{\delta}$ such that for all $\lambda \in V^{*}$,
\begin{equation}\label{eq::delta}
\left\| \nabla_{\lambda} \mathcal{L}(\lambda) - \nabla^{2}\mathcal{L} (\lambda^*)(\lambda - \lambda^*) \right\| \leq L_{\delta}\left\| \lambda - \lambda^* \right\|^{2}. 
\end{equation}
Then 
\[
\norm{\overline{\lambda}^{(s)}-\lambda^*}^2=\mathcal O\left(\frac{\log s}{s}\right) \quad a.s. 
\]
In addition, 
\[
\sqrt{Bs} \left( \overline{\lambda}^{(s)} - \lambda^{*} \right) \xrightarrow[n\to + \infty]{law} \mathcal{N}\left( 0 , {\nabla_{\lambda}^{2}\mathcal{L}\left( \lambda^{*} \right)^{-1}\Sigma \left( \lambda^{*} \right)\nabla_{\lambda}^{2}\mathcal{L}\left( \lambda^{*} \right)^{-1}} \right) .
\]
\end{thm}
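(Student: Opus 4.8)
The plan is to adapt the weighted Polyak--Juditsky averaging argument developed by \cite{boyer2023asymptotic} for stochastic Newton schemes. Write $H:=\nabla_\lambda^2\mathcal L(\lambda^*)$, let $\mathcal F_s$ be the natural filtration of the iterates, and split the stochastic gradient into its mean and a martingale-difference noise, $\widehat{\nabla_\lambda\mathcal L}(\lambda^{(s)})=\nabla_\lambda\mathcal L(\lambda^{(s)})+\xi_{s+1}$, where $\mathbb E[\xi_{s+1}\mid\mathcal F_s]=0$ and, since $\nabla_\lambda\mathcal L(\lambda^*)=0$, the conditional covariance of $\xi_{s+1}$ converges almost surely to $\tfrac1B\Sigma(\lambda^*)$.

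First I would linearize. By the second-order condition \eqref{eq::delta}, $\nabla_\lambda\mathcal L(\lambda^{(s)})=H(\lambda^{(s)}-\lambda^*)+\delta_s$ with $\|\delta_s\|\le L_\delta\|\lambda^{(s)}-\lambda^*\|^2$. Combining this with the exact recursion $\tilde{\mathbf H}_{s+1}(\lambda^{(s)}-\lambda^{(s+1)})=\tau_{s+1}\bigl(\nabla_\lambda\mathcal L(\lambda^{(s)})+\xi_{s+1}\bigr)$ gives
\[
H(\lambda^{(s)}-\lambda^*)=\tfrac1{\tau_{s+1}}\tilde{\mathbf H}_{s+1}(\lambda^{(s)}-\lambda^{(s+1)})-\xi_{s+1}-\delta_s.
\]
Summing against the weights $w_k=(\log k)^w$ and dividing by $W_s=\sum_{k\le s}w_k$ yields
\[
H(\overline\lambda^{(s)}-\lambda^*)=\tfrac1{W_s}\Bigl[\underbrace{\textstyle\sum_k \tfrac{w_k}{\tau_{k+1}}\tilde{\mathbf H}_{k+1}(\lambda^{(k)}-\lambda^{(k+1)})}_{(A)}-\underbrace{\textstyle\sum_k w_k\xi_{k+1}}_{(B)}-\underbrace{\textstyle\sum_k w_k\delta_k}_{(C)}\Bigr].
\]

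The crux is showing that the preconditioner $\tilde{\mathbf H}_{k+1}$ disappears asymptotically, leaving only the martingale term $(B)$. I would treat $(A)$ by Abel summation: the boundary term $\tfrac{w_s}{\tau_{s+1}}\tilde{\mathbf H}_{s+1}(\lambda^{(s+1)}-\lambda^*)$, using $\tfrac1{\tau_{s+1}}=O(s^\alpha)$ and the rate $\|\lambda^{(s)}-\lambda^*\|^2=O(\log s/s^\alpha)$ from Theorem \ref{theo::rate::lambda}, is of order $w_s\,s^{\alpha/2}\sqrt{\log s}$, which after division by $W_s\sim s(\log s)^w$ is $o(s^{-1/2})$ precisely because $\alpha<1$; the variation term $\sum_k(a_k-a_{k-1})(\lambda^{(k)}-\lambda^*)$ with $a_k=\tfrac{w_k}{\tau_{k+1}}\tilde{\mathbf H}_{k+1}$ is controlled by the slow variation of $w_k/\tau_{k+1}$ together with the almost-sure convergence and rate of $\tilde{\mathbf H}_{k+1}\to I_F(\lambda^*)$ from Corollary \ref{cor::as::hs}. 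Term $(C)$ is negligible because $\|\delta_k\|=O(\log k/k^\alpha)$ is summable relative to $W_s$. This is the step I expect to be the main obstacle, as it requires quantitative control of the increments of the Fisher-matrix estimate.

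For the limit law I would apply a martingale central limit theorem to $(B)$. The conditional covariances converge to $\tfrac1B\Sigma(\lambda^*)$ by $\lambda^{(k)}\to\lambda^*$ and continuity of $\Sigma$, and the Lindeberg condition follows from the $(2+2\eta)$-moment bound inherited from Theorem \ref{theo::rate::lambda}, so $(W_s^{(2)})^{-1/2}\sum_k w_k\xi_{k+1}\to\mathcal N(0,\tfrac1B\Sigma(\lambda^*))$ with $W_s^{(2)}=\sum_{k\le s}w_k^2$. Since $w_k=(\log k)^w$ gives $W_s\sim s(\log s)^w$ and $W_s^{(2)}\sim s(\log s)^{2w}$, we have $W_s/\sqrt{W_s^{(2)}}\sim\sqrt s$; feeding this into the identity and multiplying by $H^{-1}$ gives $\sqrt{Bs}(\overline\lambda^{(s)}-\lambda^*)\to\mathcal N(0,H^{-1}\Sigma(\lambda^*)H^{-1})$, the preconditioner having cancelled exactly. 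The almost-sure rate $\|\overline\lambda^{(s)}-\lambda^*\|^2=O(\log s/s)$ follows from the same decomposition, bounding $(B)$ via a law-of-iterated-logarithm estimate for the weighted martingale, whose conditional variance accumulates like $W_s^{(2)}$, so that $\|W_s^{-1}\sum_k w_k\xi_{k+1}\|^2=O\bigl(W_s^{(2)}W_s^{-2}\log s\bigr)=O(\log s/s)$, together with the already-established negligibility of $(A)$ and $(C)$.
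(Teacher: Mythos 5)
Your proposal reproduces the paper's argument step for step: the same identity $\nabla^2\mathcal L(\lambda^*)\,u_k=\tau_{k+1}^{-1}\tilde{\mathbf H}_{k+1}(u_k-u_{k+1})-\xi_{k+1}-\delta_k$ obtained from \eqref{eq::delta} and the recursion, the same weighted summation and three-term decomposition (the paper's $A_{1,s}$, $A_{2,s}$, $M_{2,s}$), Abel summation for the preconditioner term with the identical boundary-term estimate ($o(s^{-1/2})$ because $\alpha<1$), negligibility of the remainder because $\alpha>1/2$, and a martingale CLT plus a martingale strong law for the noise term, with the same bookkeeping $W_s\sim s(\log s)^w$, $W_s^{(2)}\sim s(\log s)^{2w}$ yielding both the $\mathcal O(\log s/s)$ rate and the limit $\mathcal N\bigl(0,\nabla^2\mathcal L(\lambda^*)^{-1}\Sigma(\lambda^*)\nabla^2\mathcal L(\lambda^*)^{-1}\bigr)$. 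Both are Polyak--Juditsky-type arguments adapted from \cite{boyer2023asymptotic}.

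There is, however, one step whose justification would fail as you state it, and it is exactly the step you flag as the main obstacle: the variation term $\sum_k(a_k-a_{k-1})u_k$ with $a_k=\frac{w_k}{\tau_{k+1}}\tilde{\mathbf H}_{k+1}$. You propose to control it by ``the almost-sure convergence and rate of $\tilde{\mathbf H}_{k+1}\to I_F(\lambda^*)$ from Corollary \ref{cor::as::hs}''. Corollary \ref{cor::as::hs} gives no rate, and, more importantly, no convergence rate can do this job through the triangle inequality: if one only knows $\|\tilde{\mathbf H}_{k+1}-\tilde{\mathbf H}_k\|=\mathcal O(k^{-\rho})$, then multiplying by $\|u_k\|/\tau_{k+1}\sim k^{\alpha/2}\sqrt{\log k}$ and averaging leaves a contribution of order $s^{\alpha/2-\rho}\sqrt{\log s}$, which is $o(s^{-1/2})$ only if $\rho>(1+\alpha)/2>3/4$; this is unattainable, since already the deliberate regularization term in \eqref{eq: H_s def new} makes $\tilde{\mathbf H}_s$ drift at rate $s^{-\beta}$ with $\beta<1/2$. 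What actually saves the term --- and what the paper uses --- is the exact telescoping structure of the recursive estimator: $\tilde{\mathbf H}_{k+1}-\tilde{\mathbf H}_k=\frac{1}{k+1}\bigl(\phi_{k+1}\phi_{k+1}^\top+c_\beta(k+1)^{-\beta}Z_{k+1}Z_{k+1}^\top-\tilde{\mathbf H}_k\bigr)$, i.e.\ increments of size $\mathcal O(1/k)$ times quantities whose conditional moments are bounded by \eqref{eq:fourth moment assumption}. The paper then splits the resulting sum into a conditional-mean part and a martingale-difference part (its terms $(*)$ and $(**)$), localizes on the events $E_k$ on which the iterates obey the rate of Theorem \ref{theo::rate::lambda}, and invokes Theorem 6.2 of \cite{cenac2020efficient}; it also passes between $\tilde{\mathbf H}_{k+1}-\tilde{\mathbf H}_k$ and $\tilde{\mathbf H}_{k+1}^{-1}-\tilde{\mathbf H}_k^{-1}$ via $\|\tilde{\mathbf H}_{k+1}^{-1}-\tilde{\mathbf H}_k^{-1}\|_{op}\leq\|\tilde{\mathbf H}_{k+1}^{-1}\|_{op}\|\tilde{\mathbf H}_k^{-1}\|_{op}\|\tilde{\mathbf H}_{k+1}-\tilde{\mathbf H}_k\|_{op}$. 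So your diagnosis of where the difficulty lies is correct, but the cure is the $1/k$ averaging structure of $\tilde{\mathbf H}_s$, not a rate of convergence for it.
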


\begin{rmk}
    
Comparing Theorem \ref{theo::rate::lambda} and Theorem \ref{theo::wasn}, it is intriguing to observe that the averaging technique clearly contributes to an improvement in the rate of convergence.  Moreover, observe that $Bs$ represent the total number of samples generated for the AIFVB algorithm (without taking into account the ones for estimating the inverse of the Fisher information). In addition, following \cite{GBW2023}, i.e  taking $B=D$ and generating only one sample to estimate the inverse of the Fisher information, it leads to an algorithm with $O(sBD)$ operations, which is the same computational complexity as stochastic gradient type algorithms. To be more precise, taking $B=D$, and denoting by $N=sB$ the total number of simulated data used for estimating the gradients at each step of the algorithm, one has
\[
\sqrt{N} \left(\overline{\lambda}^{N/D} - \lambda^{*} \right) \xrightarrow[N\to + \infty]{\text{law}} \mathcal{N}\left( 0 , {\nabla_{\lambda}^{2}\mathcal{L}\left( \lambda^{*} \right)^{-1}\Sigma \left( \lambda^{*} \right)\nabla_{\lambda}^{2}\mathcal{L}\left( \lambda^{*} \right)^{-1}} \right) ,
\]
with $O(ND)$ operations, which is exactly the same rate and complexity as averaged stochastic gradient algorithms.

\end{rmk}

\begin{rmk}
To the best of our knowledge, this paper is the first to give a central limit theorem for the estimate of variational parameters in VB.
This result can have some interesting implications.
Given a variational family $\mathcal Q$ and data $y$, one can define the ``final" variational approximation of the posterior distribution $p(\theta|y)$ as
\begin{equation}\label{eq: final var approx}
q(\theta|y,\mathcal Q) = \int q_\lambda(\theta)p(\lambda|y)d\lambda
\end{equation}
with $p(\lambda|y)$ approximated by $\mathcal{N}\left( \lambda^* , \frac{1}{Bs}{\nabla_{\lambda}^{2}\mathcal{L}\left( \lambda^{*} \right)^{-1}\Sigma \left( \lambda^{*} \right)\nabla_{\lambda}^{2}\mathcal{L}\left( \lambda^{*} \right)^{-1}} \right)$, who in turn can be further approximated by $\mathcal{N}\left(\lambda^* , \frac{1}{Bs} {I_F\left( \lambda^{*} \right)^{-1}\Sigma \left( \lambda^{*} \right)I_F\left( \lambda^{*} \right)^{-1}} \right)$.
The posterior approximation in \eqref{eq: final var approx} not only takes into account 
the uncertainty in the SGD training procedure of $\lambda$, but also enlarges the variance in $q_{\lambda^*}(\theta)$.
We recall that underestimating the posterior variance is a well perceived problem in the VB literature \citep{blei2017variational}. 
\end{rmk}


\section{Numerical Examples}\label{Numerical Example}
This section provides a range of examples
to demonstrate
the applicability of the inversion free
variational Bayes methods. 
In Sections \ref{Inversion Free Natural Variational Bayes} and \ref{Convergence analysis}, to be consistent with the optimization literature, we presented the VB problem in terms of optimizing the negative lower bound $\mathcal{L}(\lambda)$. In this section, to be consistent with the VB literature, we present the numerical examples in terms of maximizing the lower bound $\text{LB}(\lambda)$.
The five examples encompass a diverse array of domains
ranging from Gaussian approximation to normalizing flow Variational Bayes.
We provide four examples in this section; another example is provided in the Supplementary Material.
The implementation code is available at \url{https://github.com/VBayesLab/Inversion-free-VB}. 

\noindent\textbf{Example 1 (Beta variational approximation).}
We consider an example in
which the posterior distribution and the natural gradient are given in closed-form.
This helps facilitate the comparison among the considered approximation approaches.
Following \cite{tran2017variational}, we generated $n=200$ observations $y_1,y_2,\ldots,y_{200}\sim\Ber(1,\theta=0.3)$, and let $\kappa=\sum_{i=1}^n y_i=57$.
The prior distribution of $\theta$ is chosen to be the uniform distribution on $(0,1)$.
The posterior distribution is $\Beta(\kappa+1,n-\kappa+1)$.
Let $\lambda^*=(\kappa+1,n-\kappa+1)^\top$. The variational distribution $q_\lambda(\theta)$ is chosen to be $\Beta(\alpha,\beta)$,
which belongs to the exponential family with the natural parameter $\lambda=(\alpha,\beta)^\top$.
We have 
\begin{align*}
\log q_\lambda(\theta)=\log\Gamma(\alpha+\beta)-\log\Gamma(\alpha)-\log\Gamma(\beta)
+(\alpha-1)\log\theta+(\beta-1)\log(1-\theta).
\end{align*}
Hence
\begin{align*}
 \nabla_\lambda\log q_\lambda(\theta)=
 \left(\psi(\alpha+\beta)-\psi(\alpha)+\log(\theta),
 \psi(\alpha+\beta)-\psi(\beta)+\log(1-\theta)   \right)^\top,
 \end{align*}
 where $\psi$ is the digamma function. In this case, the Fisher information matrix 
is available in a closed-form
\begin{align*}
I_F(\lambda)=\begin{pmatrix}
\varphi_1(\alpha)-\varphi_1(\alpha+\beta)& -\varphi_1(\alpha+\beta)\\
-\varphi_1(\alpha+\beta) & \varphi_1(\beta)-\varphi_1(\alpha+\beta)
\end{pmatrix},
\end{align*}
where $\varphi_1(x)$ is the trigamma function.
In addition, we have that
\begin{align*}
\nabla_\lambda\LB(\lambda)=\begin{pmatrix}
    (\kappa+1-\alpha)\big(\varphi_1(\alpha)-\varphi_1(\alpha+\beta)\big)-(n-\kappa+1-\beta)\varphi_1(\alpha+\beta)\\
    (n-\kappa+1-\beta)\big(\varphi_1(\beta)-\varphi_1(\alpha+\beta)\big)-(\kappa+1-\alpha)\varphi_1(\alpha+\beta)    
\end{pmatrix}.
\end{align*}
The Euclidean gradient ascent update is given by
\begin{align*}
\lambda^{(k+1)}=\lambda^{(k)}+\tau_k\nabla_\lambda\LB(\lambda^{(k)}),
\end{align*}
and the natural gradient ascent update is given by
\begin{align*}
\lambda^{(k+1)}&=\lambda^{(k)}+\tau_k I_F^{-1}(\lambda^{(k)})\nabla_\lambda\LB(\lambda^{(k)}).
\end{align*}

Figure \ref{Comparing exact posterior density versus those obtained by gradient ascent, NGVB and IFVB}
compares the true posterior density
with those obtained by the Euclidean gradient
ascent, the (exact) natural gradient VB algorithm (NGVB) and IFVB,
using two initializations for $\lambda$: $\lambda^{(0)}=(5;45)^\top$ (Figure \ref{Comparing exact posterior density versus those obtained by gradient ascent, NGVB and IFVB}, left) and $\lambda^{(0)}=(25;25)^\top$ (Figure \ref{Comparing exact posterior density versus those obtained by gradient ascent, NGVB and IFVB}, right). We set $c_\beta=0$ in this example.
The step size for IFVB is set as $\tau_k=10/(1+k)^\alpha$ with $\alpha=0.6$ (see Section \ref{Convergence analysis}), while the step size for other methods is set to $1/(1+k)$.
All the algorithms used the same stopping rule, which terminates the training if the difference in $l_2$-norm of two successive updates is less than a tolerance of $10^{-5}$.
The figure shows that 
the posterior densities obtained by NGVB and IFVB
are superior to using the traditional Euclidean gradient. Furthermore,
these algorithms are insensitive to the initial $\lambda^{(0)}$.

\begin{figure}[h]
	\centering
	\includegraphics[scale=0.3]{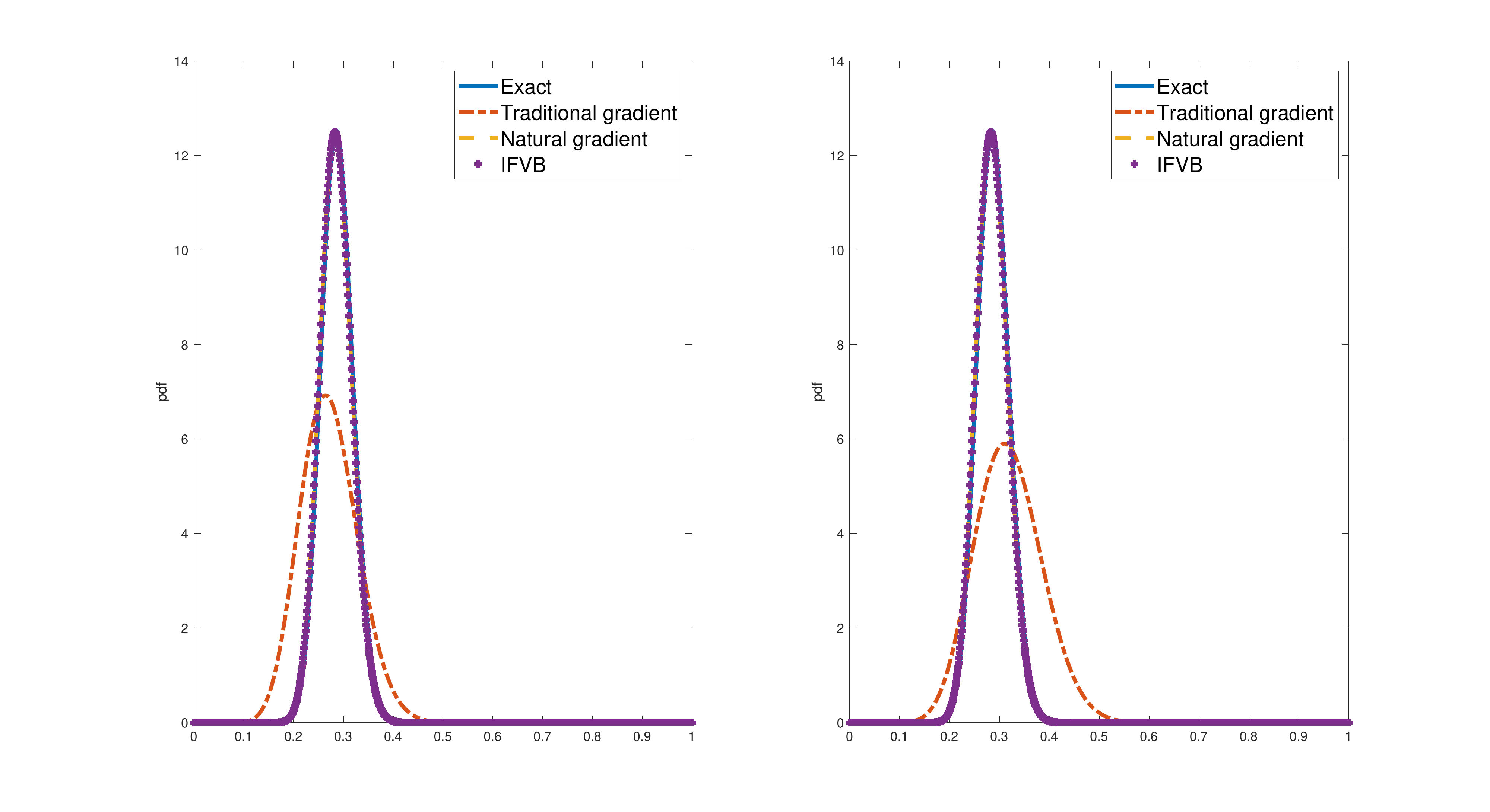}
\caption{Comparing the true posterior density versus those obtained by Euclidean gradient ascent, NGVB and IFVB.}\label{Comparing exact posterior density versus those obtained by gradient ascent, NGVB and IFVB}	
\end{figure}

As we know the exact posterior distribution in this case, we now further compare the performance of NGVB, IFVB and AIFVB using the initialization $\lambda^{(0)}=(5;45)^\top$.
Figure \ref{Comparing exact posterior density versus those obtained by gradient ascent, NGVB, IFVB and AIFVB} plots the error $\|\lambda^{(s)}-\lambda^*\|$ ($\|\bar{\lambda}^{(s)}-\lambda^*\|$ for AIFVB)
as the function of the number of iterations. 
It is clear from Figure \ref{Comparing exact posterior density versus those obtained by gradient ascent, NGVB, IFVB and AIFVB}
that both the IFVB and AIFVB algorithms require more steps to achieve a (relatively) equivalent error level compared to the exact natural gradient ascent algorithm. This observation aligns with the fact that both algorithms rely on an approximation of the inverse Fisher information matrix. As time advances, the quality of this approximation improves. Furthermore, it is evident that the incorporation of an averaging technique significantly aids in minimizing the error associated with the approximation.

\begin{figure}[h]
	\centering
	\includegraphics[scale=0.3]{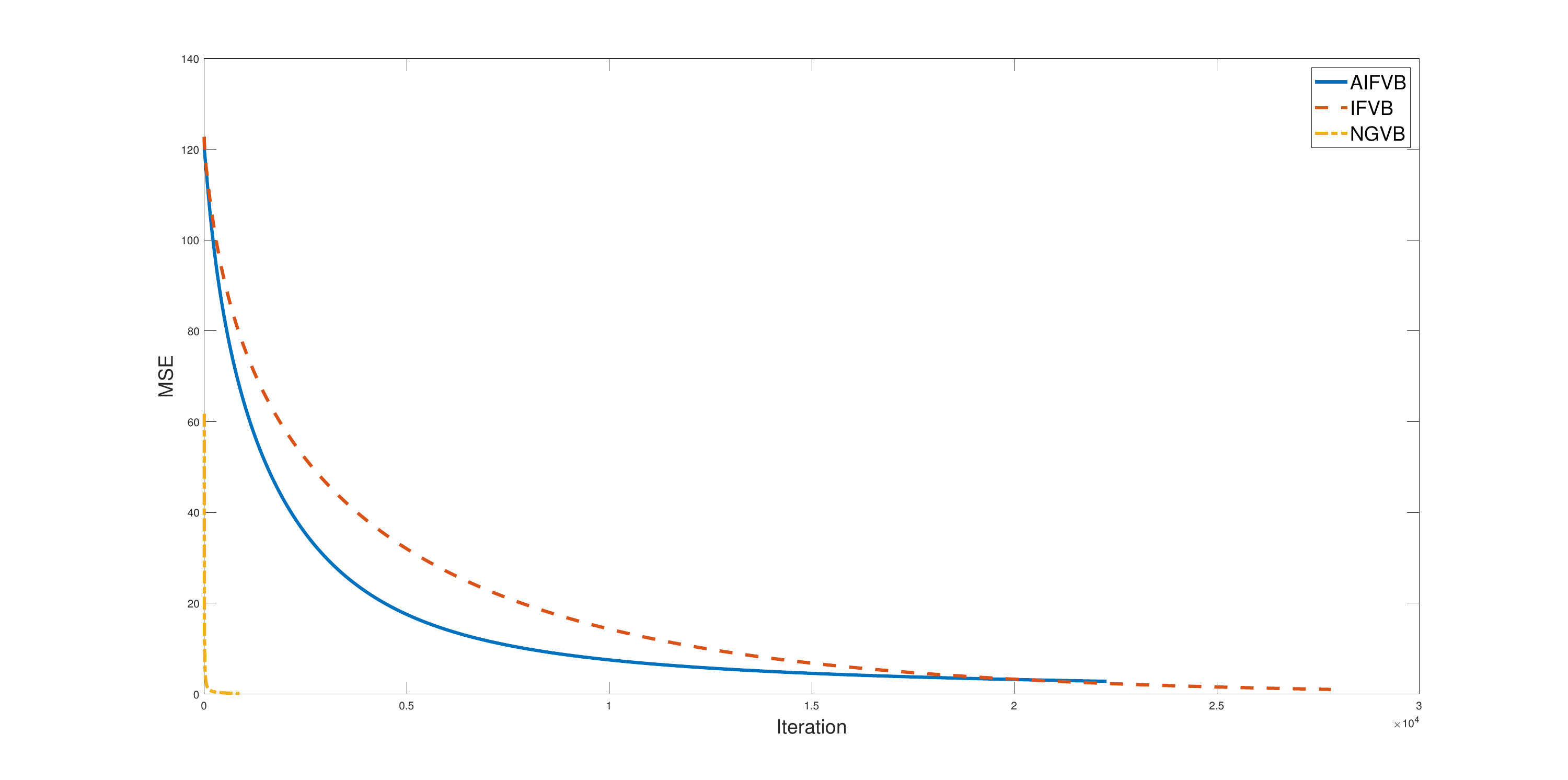}
\caption{Error versus iterations}\label{Comparing exact posterior density versus those obtained by gradient ascent, NGVB, IFVB and AIFVB}
\end{figure}


\vspace{1.5cm}
\noindent\textbf{Example 2 (Gaussian approximation).} 
We consider a Gaussian approximation example
which was considered in \cite{tan2021analytic}.
Specifically, it is assumed that $q_\lambda(\theta)=\mathcal N (\theta|\mu,\Sigma)$
which belongs to the exponential family and can be  written as
\begin{align*}
q_\lambda(\theta)=\exp\left(s(\theta)^\top\lambda -a(\lambda) \right),
\end{align*}
where $s(\theta)=(\theta^\top,\vech(\theta\theta^\top)^\top)^\top$
is the sufficient statistics,
$\lambda=(\mu^\top\Sigma^{-1},-\frac{1}{2}\vec(\Sigma^{-1})^\top \Gamma)^\top$
is the natural parameter, with $\Gamma$ the duplication matrix.
It can be seen that $a(\lambda)=\frac{1}{2}\mu^\top\Sigma^{-1}\mu+\frac{1}{2}\log|\Sigma|
+\frac{d}{2}\log(2\pi)$ is the log-partition function.
This implies the natural gradient
\begin{align*}
\nabla^{nat}_\lambda\LB=I^{-1}_F(\lambda)\nabla_\lambda\LB(\lambda)=I^{-1}_F(\lambda)I_F(\lambda)\nabla_m\LB(m)=\nabla_m\LB=
\begin{pmatrix}
\nabla_\mu\LB-2(\nabla_\Sigma\LB)\mu\\
\Gamma^\top\vec(\nabla_\Sigma\LB)
\end{pmatrix},
\end{align*}
where $\vec(\nabla_\Sigma\LB)=\nabla_{\vec(\Sigma)}\LB$.
From here \cite{tan2021analytic} derives the updates for $\lambda$ as follows,
\begin{enumerate}
\item $\Sigma^{-1}\leftarrow \Sigma^{-1} -2 \tau_k\nabla_\Sigma\LB$
\item $\mu\leftarrow \mu+\tau_k\Sigma\nabla_\mu\LB$.
\end{enumerate}
Obviously, the traditional (Euclidean) gradient ascent  is given by
\begin{enumerate}
\item $\mu\leftarrow \mu+\tau_k\nabla_\mu\LB$
\item $\Sigma \leftarrow \Sigma +\tau_k\nabla_\Sigma\LB$.
\end{enumerate}
Let us consider a concrete example: Consider the loglinear model for counts, $y_i\sim\Pois(\delta_i), 
\delta_i=\exp(x_i^\top\theta)$ where $x_i,\theta\in\mathbb R^d$ are the
vector of covariates and regression coefficients, respectively.
Assume the prior of $\theta$ is $p(\theta)\sim\mathcal N(0,\sigma_0^2 \mathbb{I}_d)$.
We use a Gaussian $q_\lambda(\theta)=\mathcal N(\theta|\mu,\Sigma)$
to approximate the true posterior distribution of $\theta$. 
Let $y=y_{1:n}$ and $X=(x_1,x_2,\ldots,x_n)^\top$.
For each $i=1,\ldots,n$ let $w_i=\exp(x_i^\top\mu+\frac{1}{2}x_i^\top\Sigma x_i)$
and $W=\diag(w_1,\ldots,w_n)$.
By computing directly \cite{tan2021analytic} showed that the closed form for the ELBO is,
\begin{align*}
\LB(\lambda)=y^\top X\mu -\sum_{i=1}^n(w_i+\log(y_i!))-\frac{\mu^\top\mu+Tr(\Sigma)}{2\sigma_0^2}
+\frac{1}{2}\log|\Sigma|+\frac{d}{2}(1-\log(\sigma_0^2)).
\end{align*}

From here it is immediate to see that  $\nabla_\mu\LB$, $\nabla_\Sigma\LB$, and $\nabla_{\vec(\Sigma)}\LB$ are given by 
\begin{align*}
&\nabla_\mu\LB=X^\top(y-w)-\frac{\mu}{\sigma_0^2},\\
&\nabla_\Sigma\LB=-\frac{1}{2}X^\top W X-\frac{1}{2\sigma_0^2}\mathbb I +\frac{1}{2}\Sigma^{-1}\\
&\nabla_{\vec(\Sigma)}\LB=\frac{1}{2}\vec(\Sigma^{-1}-\frac{1}{\sigma_0^2}\mathbb I-X^\top W X).
\end{align*}
Also we have
\begin{align*}
\log q_\lambda(\theta)=-\frac{d}{2}\log(2\pi)-\frac{1}{2}\log|\Sigma|-\frac{1}{2}(\theta-\mu)^\top\Sigma^{-1}(\theta-\mu).
\end{align*}
Hence 
\begin{align*}
\nabla_\lambda q_\lambda(\theta)&=(\nabla_\mu q_\lambda(\theta),\nabla_{\vec(\Sigma)}q_\lambda(\theta))\\
&=\left( (\theta-\mu)^\top\Sigma^{-1},\vec\left(-\frac{1}{2}\Sigma^{-1}+\frac{1}{2}\Sigma^{-1}(\theta-\mu)(\theta-\mu)^\top\Sigma^{-1}\right)^\top \right)^\top\in\mathbb R^{d+d^2}.
\end{align*}

We choose $n=200, d=3,\sigma_0^2=100,\theta=(1,\ldots,1), x_{ij}\sim\mathcal N(0,1),$ and simulate $y_1,\ldots,y_{200}$
from those parameters. In this case $\lambda\in \mathbb R^{12}$ hence $I_F\in\mathbb R^{12\times 12}$. The initial guesses are $\mu^{(0)}=(0,\ldots,0)^\top$ and
$\Sigma^{(0)}=10^{-2}\mathbb I_d$. Moreover, we choose the step size $1/(1000+k)^{0.75} $ and $ c_\beta=1$. As we know
the analytical form of the ELBO in this example,
in Figure \ref{Gaussian Approximation: Comparing NGVB versus IFVB}
we plot the graphs of $\LB(\lambda)$ (as a function of the number of iterations)
obtained from the exact natural gradient ascent algorithm (NGVB)
, inversion free gradient algorithm (IFVB) and weighted inversion free algorithm (AIFVB).
It can be seen that the three algorithms have
almost identical performance in this case. One striking note is that AIFVB even outperforms (obtaining a larger lower bound value)
the exact natural gradient algorithm in this case. It again helps
to confirm that the averaging technique is useful.  We note that we are tempted
to include the ELBO obtained from the Euclidean gradient ascent
but its performance is too unstable to include.
This phenomenon has been observed in \cite{tan2021analytic}
as the matrix $\Sigma$ obtained from
the traditional gradient ascent
is usually not positive definite,
which results in an unstable performance.

\begin{figure}[h]
	\centering
	\includegraphics[scale=0.4]{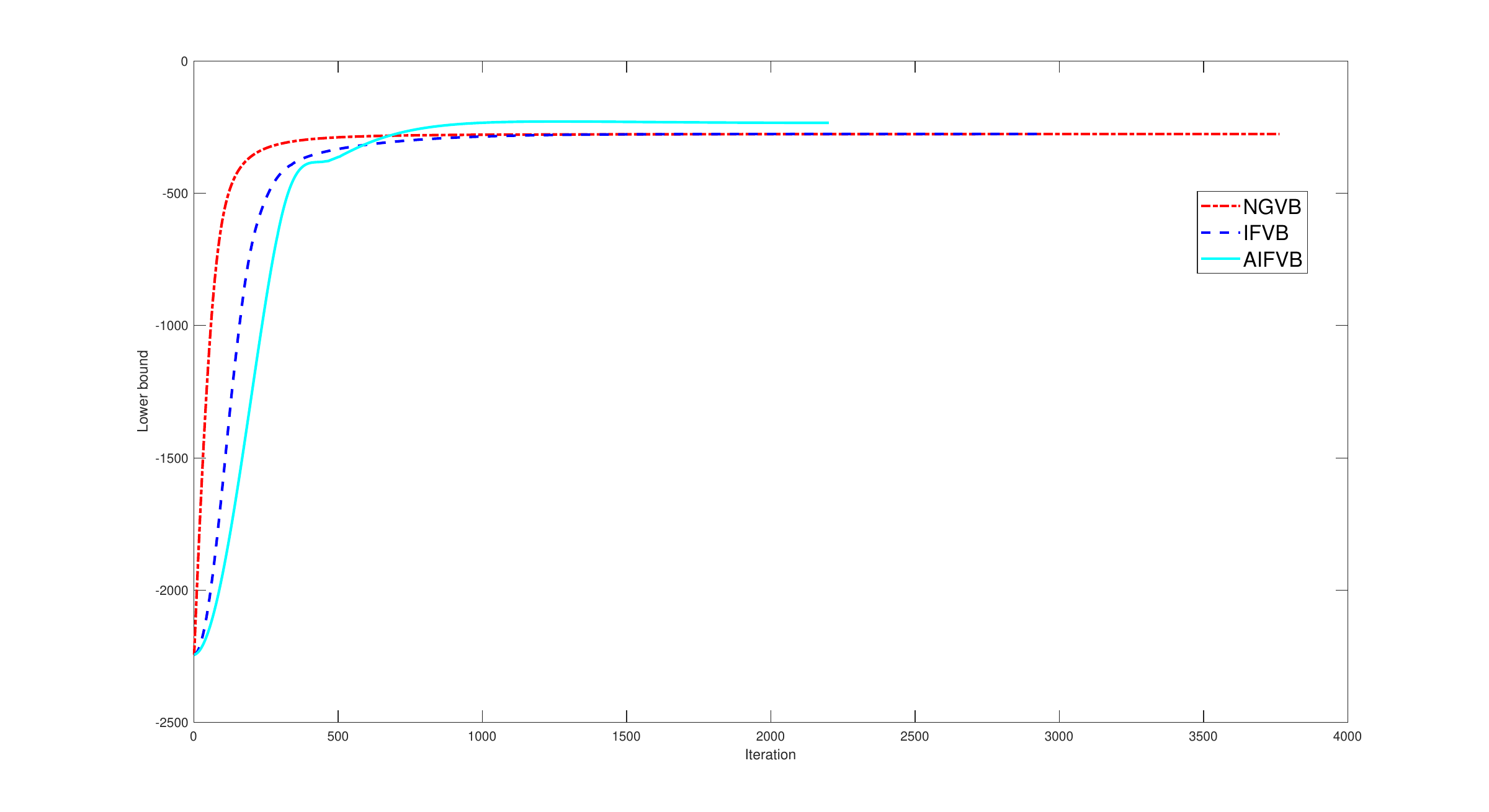}
\caption{Gaussian Approximation: Comparing NGVB, IFVB and AIFVB}\label{Gaussian Approximation: Comparing NGVB versus IFVB}
	
\end{figure}

\vspace{1cm}
\noindent\textbf{Example 3 (Normalizing flow Variational Bayes).} 
The choice of flexible variational distributions $q_\lambda$ is important in VB.
Normalizing flows \citep{rezende2015variational} are a class of techniques for designing flexible and expressive $q_\lambda$.
We consider a flexible VB framework where the VB approximation distribution $q_\lambda(\theta)$ is constructed based on a normalizing flow as follows
\begin{align}\label{eq: normal flows}
\epsilon\sim \mathcal N_d(0,I),\;\;Z=\psi(W_1\epsilon+b_1),\;\;\theta=W_2Z+b_2
\end{align}
where $W_1,W_2\in\text{Mat}(d,d)$ and $b_1,b_2$ are $d$-vectors,
and $\psi(\cdot)$ is an activation function such as sigmoid.
The transformation from $\epsilon$ to $\theta$ in \eqref{eq: normal flows} can be viewed as a neural network with one hidden layer $Z$; it might be desirable to consider deeper neural nets with more than one hidden layers, but we do not consider it here.
To be able to compute the density $q_\lambda(\theta)$ resulting from the transformations in \eqref{eq: normal flows},
these transformations should be invertibe and the determinants of the Jacobian matrices should be easy to compute.
To this end, we impose the orthogonality constraint on $W_1$ and $W_2$: $W_1^\top W_1=W_2^\top W_2=\mathbb{I}_d$, i.e. the columns are orthonormal.
Details on the derivation of the lower bound gradient and score function $\nabla\log q_\lambda(\theta)$ can be found in Appendix \ref{Appendix:Example 5}.

\subsubsection*{\bf Numerical results}
We apply the manifold normalizing flow VB (NLVB) \eqref{eq: normal flows} to approximate the posterior distribution in a neural network classification problem, using the German Credit dataset.
This dataset, available on the UCI Machine Learning Repository:
\begin{center}
\texttt{https://archive.ics.uci.edu/ml/index.php}, 
\end{center}

\noindent consists of observations on 1000 customers, each was already rated as being ``good credit'' (700 cases) or ``bad credit''
(300 cases). 
We create 10 predictors out of the available covariate variables including credit history, education, employment status, etc. 
The classification problem is based on a neural network with one hidden layer of 5 units. 
As $W_1$ and $W_2$ belong to the Stiefel manifold, for a comparison we use the VB on manifold algorithm of \cite{Tran:STCO2021} for updating these parameters.  
Figure \ref{fig:NNVB} plots the lower bounds of the IFVB algorithm (solid red) together with the conventional VB algorithm (dash blue) using the Euclidean gradient \eqref{eq:repram trick VB} in the Appendix.
As shown, the IFVB algorithm converges quicker and achieves a higher lower bound.  
Note that we do not consider the AIFVB algorithm in this example, as the variational parameters belong to the Stiefel manifold, making the averaging technique challenging to use. It is interesting to extend the work to handle cases where
the parameter space is a Riemannian manifold; however, we do not consider this in the present paper.

\begin{figure}[h]
	\centering
	\includegraphics[scale=0.5]{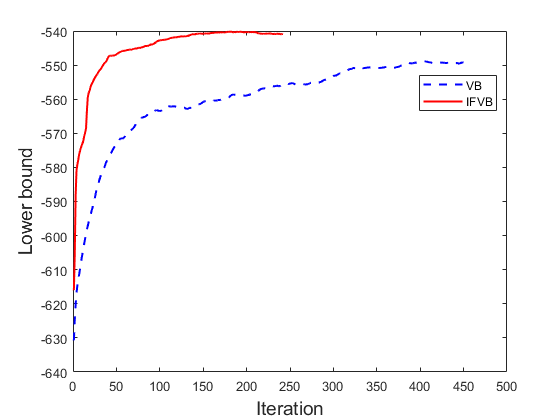}
\caption{Normalizing flow VB: Lower bound values of Euclidean gradient VB (dash blue) versus IFVB (solid red) over the iterations.}\label{fig:NNVB}
\end{figure}

\vspace{1.5cm}
\noindent\textbf{Example 4 (Bayesian neural network).}
This example considers a Bayesian neural network for regression
\begin{equation}\label{eq:BNN}
y = \eta(x,\omega)+\epsilon,\;\;\epsilon\sim\mathcal N(0,\sigma^2),
\end{equation}
where $y$ is a real-valued response variable, $\eta(x,\omega)$ denotes the output of a neural network with the input vector $x=(x_1,...,x_p)^\top\in\mathbb{R}^p$ and the vector of weights $w$.
As neural networks are prone to overfitting, we follow \cite{Tran:JCGS2020} and place a 
Bayesian adaptive group Lasso prior on the first-layer weights
\begin{equation}\label{eq:BNN prior}
w_{x_j}|\kappa_j\sim \mathcal N(0,\kappa_j\mathbb{I}_m),\;\;\;\;\kappa_j|\gamma_{j}\sim\text{Gamma}\left(\frac{m+1}{2},\frac{\gamma_{j}^2}{2}\right),\;\;j=1,...,p,
\end{equation}
with the $\gamma_j>0$ the shrinkage parameters; no regularization prior is put on the rest of the network weights. Here $w_{x_j}$ denotes the vector of weights that connect the input $x_j$ to the $m$ units in the first hidden layer. 
An inverse-Gamma prior is used for $\sigma^2$.
We use empirical Bayes for selecting the shrinkage parameters $\gamma_j$, and the posterior of $\kappa_j$ and $\sigma^2$ is approximated by a fixed-form within mean-filed VB procedure.
See \cite{Tran:JCGS2020} for the details.

The main task is to approximate the posterior of the network weights $\omega$. Let $d$ be the dimension of $\omega$.
We choose to approximate this posterior by a Gaussian variational distribution of the form $q_\lambda(\omega)=\mathcal N(\mu,\Sigma)$ with covariance matrix $\Sigma$ having a factor form $\Sigma=bb^\top+\diag(c)^2$,
where $b$ and $c$ are vectors in $\mathbb{R}^d$. The vector of the variational parameters is $\lambda=(\mu^\top,b^\top,c^\top)^\top$.
This factor structure of the covariance matrix significantly reduces the size of variational parameters, making the Gaussian variational approximation method computationally efficient for Bayesian inference in large models such as Bayesian neural networks.

\cite{Tran:JCGS2020} exploit the factor structure of $\Sigma$,
and by setting certain sub-blocks of the Fisher information matrix $I_F(\lambda)$ to zero,
to be able to derive a closed-form approximation of the inverse $I_F^{-1}(\lambda)$.
Their VB method, termed the NAtural gradient Gaussian Variational Approximation with
factor Covariance method (NAGVAC), is highly computationally efficient; however, the approximation of $I_F^{-1}(\lambda)$ might offset the VB approximation accuracy.

We now fit the Bayesian neural network model \eqref{eq:BNN}-\eqref{eq:BNN prior} to the Direct Marketing dataset \citep{Jank:2011} that consists of 1000 observations, of which 800 were used for training, and the rest for validation. The response $y$ is the amount (in \$1000) a
customer spends on the company's products per year, and 11 covariates include gender, income, married status, etc. 
We use a neural network with two hidden layers, each with ten units.
Figure \ref{NAGVAC versus IFVB} plots the mean squared error (MSE) values, computed on the validation set, of the VB training using ADAM \citep{kingma2014adam}, AIFVB, IFVB and NAGVAC.
With the same stopping rule, the
four methods, ADAM, AIFVB, IFVB and NAGVAC,
stop after 425, 221, 480 and 700 iterations, respectively.  
This confirms the theoretical result in Theorem \ref{theo::wasn} that the averaging technique speeds up the convergence of AIFVB conpared to IFVB.
On the validation set, the smallest MSE values produced by ADAM, AIFVB, IFVB and NAGVAC are 0.2273, 0.1750, 0.1749 and 0.1992, respectively.
Both IFVB and AIFVB perform better than ADAM and NAGVAC, probably because the natural gradient approximation in NAGVAC,
although being highly computationally efficient, might offset the approximation accuracy.

\begin{figure}[h]
	\centering
	\includegraphics[scale=0.5]{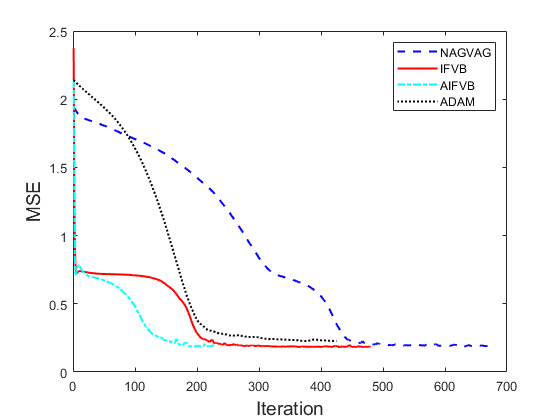}
\caption{Bayesian neural network: Validation MSE values of ADAM (dotted black), AIFVB (dot greenish), IFVB (red solid) and NAGVAC (dash blue) over the iterations.
}\label{NAGVAC versus IFVB}
\end{figure}

\section{Conclusion}\label{Conclusion}
The paper introduced an efficient approach for approximating the inverse of Fisher information, a crucial component in variational Bayes used for approximating posterior distributions. An outstanding feature of our algorithm is its avoidance of calculating the Fisher matrix and its inversion. Instead, our approach  generates a sequence of matrices converging to the inverse of Fisher information. 
Implementation of our method for natural gradient estimate does not require storage of large matrices.
Our inversion free VB framework showcases versatility, enabling its application in a wide range of domains, including Gaussian approximation and normalizing flow Variational Bayes,
and makes the natural gradient VB method applicable in cases that were impossible before. To demonstrate the efficiency and reliability of the method, we provided numerical examples as evidence of its effectiveness. We find it intriguing to consider expanding the scope of our approach to scenarios where the variational parameter space is a Riemannian manifold and to develop a rigorous theoretical framework for such cases. We plan to explore this avenue in our future research studies.

\bibliographystyle{plainnat}

\bibliography{Bibliography-MM-MC}

\begin{thebibliography}{57}
\providecommand{\natexlab}[1]{#1}
\providecommand{\url}[1]{\texttt{#1}}
\expandafter\ifx\csname urlstyle\endcsname\relax
  \providecommand{\doi}[1]{doi: #1}\else
  \providecommand{\doi}{doi: \begingroup \urlstyle{rm}\Url}\fi

\bibitem[Agarwal et~al.(2009)Agarwal, Wainwright, Bartlett, and
  Ravikumar]{agarwal2009information}
Alekh Agarwal, Martin~J Wainwright, Peter Bartlett, and Pradeep Ravikumar.
\newblock Information-theoretic lower bounds on the oracle complexity of convex
  optimization.
\newblock \emph{Advances in Neural Information Processing Systems},
  22:\penalty0 1--9, 2009.

\bibitem[Amari(1998)]{amari1998natural}
Shun-Ichi Amari.
\newblock Natural gradient works efficiently in learning.
\newblock \emph{Neural computation}, 10\penalty0 (2):\penalty0 251--276, 1998.

\bibitem[Antonakopoulos et~al.(2022)Antonakopoulos, Mertikopoulos, Piliouras,
  and Wang]{antonakopoulos2022adagrad}
Kimon Antonakopoulos, Panayotis Mertikopoulos, Georgios Piliouras, and Xiao
  Wang.
\newblock Adagrad avoids saddle points.
\newblock In \emph{International Conference on Machine Learning}, pages
  731--771. PMLR, 2022.

\bibitem[Bercu et~al.(2020)Bercu, Godichon, and Portier]{bercu2020efficient}
Bernard Bercu, Antoine Godichon, and Bruno Portier.
\newblock An efficient stochastic {N}ewton algorithm for parameter estimation
  in logistic regressions.
\newblock \emph{SIAM Journal on Control and Optimization}, 58\penalty0
  (1):\penalty0 348--367, 2020.

\bibitem[Blei et~al.(2017)Blei, Kucukelbir, and McAuliffe]{blei2017variational}
David~M Blei, Alp Kucukelbir, and Jon~D McAuliffe.
\newblock Variational inference: A review for statisticians.
\newblock \emph{Journal of the American statistical Association}, 112\penalty0
  (518):\penalty0 859--877, 2017.

\bibitem[Bottou et~al.(2018)Bottou, Curtis, and
  Nocedal]{bottou2018optimization}
L{\'e}on Bottou, Frank~E Curtis, and Jorge Nocedal.
\newblock Optimization methods for large-scale machine learning.
\newblock \emph{Siam Review}, 60\penalty0 (2):\penalty0 223--311, 2018.

\bibitem[Boyer and Godichon-Baggioni(2023)]{boyer2023asymptotic}
Claire Boyer and Antoine Godichon-Baggioni.
\newblock On the asymptotic rate of convergence of stochastic {N}ewton
  algorithms and their weighted averaged versions.
\newblock \emph{Computational Optimization and Applications}, 84\penalty0
  (3):\penalty0 921--972, 2023.

\bibitem[C{\'e}nac et~al.(2020)C{\'e}nac, Godichon-Baggioni, and
  Portier]{cenac2020efficient}
Peggy C{\'e}nac, Antoine Godichon-Baggioni, and Bruno Portier.
\newblock An efficient averaged stochastic {G}auss-{N}ewtwon algorithm for
  estimating parameters of non linear regressions models.
\newblock \emph{arXiv preprint arXiv:2006.12920}, 2020.

\bibitem[Chau et~al.(2024)Chau, Kirkby, Nguyen, Nguyen, Nguyen, and
  Nguyen]{chau2024inversion}
Huy~N Chau, J~Lars Kirkby, Dang~H Nguyen, Duy Nguyen, Nhu~N Nguyen, and Thai
  Nguyen.
\newblock On the inversion-free {N}ewton's method and its applications.
\newblock \emph{International Statistical Review}, 2024.

\bibitem[Defazio et~al.(2014)Defazio, Bach, and
  Lacoste-Julien]{defazio2014saga}
Aaron Defazio, Francis Bach, and Simon Lacoste-Julien.
\newblock Saga: A fast incremental gradient method with support for
  non-strongly convex composite objectives.
\newblock In \emph{Advances in neural information processing systems}, pages
  1646--1654, 2014.

\bibitem[Duchi et~al.(2011)Duchi, Hazan, and Singer]{duchi2011adaptive}
John Duchi, Elad Hazan, and Yoram Singer.
\newblock Adaptive subgradient methods for online learning and stochastic
  optimization.
\newblock \emph{Journal of machine learning research}, 12\penalty0 (7), 2011.

\bibitem[Duflo(2013)]{duflo2013random}
Marie Duflo.
\newblock \emph{Random iterative models}, volume~34.
\newblock Springer Science \& Business Media, 2013.

\bibitem[Giordani et~al.(2013)Giordani, Mun, Tran, and
  Kohn]{giordani2013flexible}
Paolo Giordani, Xiuyan Mun, Minh-Ngoc Tran, and Robert Kohn.
\newblock Flexible multivariate density estimation with marginal adaptation.
\newblock \emph{Journal of Computational and Graphical Statistics}, 22\penalty0
  (4):\penalty0 814--829, 2013.

\bibitem[Godichon-Baggioni and Werge(2023)]{GBW2023}
Antoine Godichon-Baggioni and Nicklas Werge.
\newblock On adaptive stochastic optimization for streaming data: A newton's
  method with o (dn) operations.
\newblock \emph{arXiv preprint arXiv:2311.17753}, 2023.

\bibitem[Goodfellow et~al.(2016)Goodfellow, Bengio, and
  Courville]{goodfellow2016deep}
Ian Goodfellow, Yoshua Bengio, and Aaron Courville.
\newblock \emph{Deep learning}.
\newblock MIT press, 2016.

\bibitem[Graves(2011)]{graves2011practical}
Alex Graves.
\newblock Practical variational inference for neural networks.
\newblock \emph{Advances in neural information processing systems}, 24, 2011.

\bibitem[Gunawan et~al.(2023)Gunawan, Kohn, and Nott]{gunawan2023flexible}
David Gunawan, Robert Kohn, and David Nott.
\newblock Flexible variational {Bayes} based on a copula of a mixture.
\newblock \emph{Journal of Computational and Graphical Statistics}, pages
  1--16, 2023.

\bibitem[Hoffman et~al.(2013)Hoffman, Blei, Wang, and
  Paisley]{hoffman2013stochastic}
Matthew~D Hoffman, David~M Blei, Chong Wang, and John Paisley.
\newblock Stochastic variational inference.
\newblock \emph{Journal of Machine Learning Research}, 2013.

\bibitem[Jank(2011)]{Jank:2011}
Wolfgang Jank.
\newblock \emph{Business Analytics for Managers}.
\newblock Springer-Verlag New York, 2011.

\bibitem[Johnson and Zhang(2013)]{johnson2013accelerating}
Rie Johnson and Tong Zhang.
\newblock Accelerating stochastic gradient descent using predictive variance
  reduction.
\newblock \emph{Advances in neural information processing systems},
  26:\penalty0 315--323, 2013.

\bibitem[Jordan et~al.(1999)Jordan, Ghahramani, Jaakkola, and
  Saul]{jordan1999introduction}
Michael~I Jordan, Zoubin Ghahramani, Tommi~S Jaakkola, and Lawrence~K Saul.
\newblock An introduction to variational methods for graphical models.
\newblock \emph{Machine learning}, 37:\penalty0 183--233, 1999.

\bibitem[Khan and Lin(2017)]{khan2017conjugate}
Mohammad Khan and Wu~Lin.
\newblock Conjugate-computation variational inference: Converting variational
  inference in non-conjugate models to inferences in conjugate models.
\newblock In \emph{Artificial Intelligence and Statistics}, pages 878--887.
  PMLR, 2017.

\bibitem[Kingma and Ba(2014)]{kingma2014adam}
Diederik~P Kingma and Jimmy Ba.
\newblock Adam: A method for stochastic optimization.
\newblock \emph{arXiv preprint arXiv:1412.6980}, 2014.

\bibitem[Kingma and Welling(2013)]{kingma2013auto}
Diederik~P Kingma and Max Welling.
\newblock Auto-encoding variational {B}ayes.
\newblock \emph{arXiv preprint arXiv:1312.6114}, 2013.

\bibitem[Kirkby et~al.(2022)Kirkby, Nguyen, Nguyen, and Nguyen]{Kirkyetal2021}
J~Lars Kirkby, Dang~H Nguyen, Duy Nguyen, and Nhu~N Nguyen.
\newblock Inversion-free subsampling {N}ewton’s method for large sample
  logistic regression.
\newblock \emph{Statistical Papers}, 63\penalty0 (3):\penalty0 943--963, 2022.

\bibitem[Kushner and Yin(2003)]{kushner2003stochastic}
Harold Kushner and G~George Yin.
\newblock \emph{Stochastic approximation and recursive algorithms and
  applications}, volume~35.
\newblock Springer Science \& Business Media, 2003.

\bibitem[Liu and Yuan(2022)]{liu2022almost}
Jun Liu and Ye~Yuan.
\newblock On almost sure convergence rates of stochastic gradient methods.
\newblock In \emph{Conference on Learning Theory}, pages 2963--2983. PMLR,
  2022.

\bibitem[Longford(1987)]{longford1987fast}
Nicholas~T Longford.
\newblock A fast scoring algorithm for maximum likelihood estimation in
  unbalanced mixed models with nested random effects.
\newblock \emph{Biometrika}, 74\penalty0 (4):\penalty0 817--827, 1987.

\bibitem[Lopatnikova and Tran(2023)]{Lopatnikova:ICASSP}
Anna Lopatnikova and Minh-Ngoc Tran.
\newblock Quantum variational {B}ayes on manifolds.
\newblock In \emph{2023 IEEE International Conference on Acoustics, Speech, and
  Signal Processing (ICASSP 2023)}, pages 1--5, 2023.

\bibitem[Magris et~al.(2022)Magris, Shabani, and Iosifidis]{magris2022exact}
Martin Magris, Mostafa Shabani, and Alexandros Iosifidis.
\newblock Exact manifold {G}aussian variational {B}ayes.
\newblock \emph{arXiv preprint arXiv:2210.14598}, 2022.

\bibitem[Martens(2020)]{martens2020new}
James Martens.
\newblock New insights and perspectives on the natural gradient method.
\newblock \emph{The Journal of Machine Learning Research}, 21\penalty0
  (1):\penalty0 5776--5851, 2020.

\bibitem[Martens and Grosse(2015)]{martens2015optimizing}
James Martens and Roger Grosse.
\newblock Optimizing neural networks with {K}ronecker-factored approximate
  curvature.
\newblock In \emph{International conference on machine learning}, pages
  2408--2417. PMLR, 2015.

\bibitem[Martin et~al.(2023)Martin, Frazier, and
  Robert]{martin2023approximating}
Gael~M Martin, David~T Frazier, and Christian~P Robert.
\newblock Approximating {B}ayes in the 21st century.
\newblock \emph{Statistical Science}, 1\penalty0 (1):\penalty0 1--26, 2023.

\bibitem[Murphy(2012)]{murphy2012machine}
Kevin~P Murphy.
\newblock \emph{Machine learning: a probabilistic perspective}.
\newblock MIT press, 2012.

\bibitem[Nguyen et~al.(2017)Nguyen, Liu, Scheinberg, and
  Tak{\'a}{\v{c}}]{nguyen2017sarah}
Lam~M Nguyen, Jie Liu, Katya Scheinberg, and Martin Tak{\'a}{\v{c}}.
\newblock Sarah: A novel method for machine learning problems using stochastic
  recursive gradient.
\newblock In \emph{International Conference on Machine Learning}, pages
  2613--2621. PMLR, 2017.

\bibitem[Nguyen et~al.(2021)Nguyen, Tran-Dinh, Phan, Nguyen, and
  Van~Dijk]{nguyen2021unified}
Lam~M Nguyen, Quoc Tran-Dinh, Dzung~T Phan, Phuong~Ha Nguyen, and Marten
  Van~Dijk.
\newblock A unified convergence analysis for shuffling-type gradient methods.
\newblock \emph{The Journal of Machine Learning Research}, 22\penalty0
  (1):\penalty0 9397--9440, 2021.

\bibitem[Ong et~al.(2018)Ong, Nott, Tran, Sisson, and
  Drovandi]{ong2018variational}
Victor~MH Ong, David~J Nott, Minh-Ngoc Tran, Scott~A Sisson, and Christopher~C
  Drovandi.
\newblock Variational {B}ayes with synthetic likelihood.
\newblock \emph{Statistics and Computing}, 28:\penalty0 971--988, 2018.

\bibitem[Pelletier(1998)]{pelletier1998almost}
Mariane Pelletier.
\newblock On the almost sure asymptotic behaviour of stochastic algorithms.
\newblock \emph{Stochastic processes and their applications}, 78\penalty0
  (2):\penalty0 217--244, 1998.

\bibitem[Rattray et~al.(1998)Rattray, Saad, and Amari]{rattray1998natural}
Magnus Rattray, David Saad, and Shun-ichi Amari.
\newblock Natural gradient descent for on-line learning.
\newblock \emph{Physical review letters}, 81\penalty0 (24):\penalty0 5461,
  1998.

\bibitem[Rezende and Mohamed(2015)]{rezende2015variational}
Danilo Rezende and Shakir Mohamed.
\newblock Variational inference with normalizing flows.
\newblock In \emph{International conference on machine learning}, pages
  1530--1538. PMLR, 2015.

\bibitem[Robbins and Monro(1951)]{robbins1951stochastic}
Herbert Robbins and Sutton Monro.
\newblock A stochastic approximation method.
\newblock \emph{The annals of mathematical statistics}, pages 400--407, 1951.

\bibitem[Robbins and Siegmund(1971)]{robbins1971convergence}
Herbert Robbins and David Siegmund.
\newblock A convergence theorem for non negative almost supermartingales and
  some applications.
\newblock In \emph{Optimizing methods in statistics}, pages 233--257. Elsevier,
  1971.

\bibitem[Saad(2009)]{saad2009line}
David Saad.
\newblock \emph{On-line learning in neural networks}.
\newblock Cambridge University Press, 2009.

\bibitem[Salimbeni et~al.(2018)Salimbeni, Eleftheriadis, and
  Hensman]{salimbeni2018natural}
Hugh Salimbeni, Stefanos Eleftheriadis, and James Hensman.
\newblock Natural gradients in practice: Non-conjugate variational inference in
  {G}aussian process models.
\newblock In \emph{International Conference on Artificial Intelligence and
  Statistics}, pages 689--697. PMLR, 2018.

\bibitem[Sato(2001)]{sato2001online}
Masa-Aki Sato.
\newblock Online model selection based on the variational {B}ayes.
\newblock \emph{Neural computation}, 13\penalty0 (7):\penalty0 1649--1681,
  2001.

\bibitem[Sebbouh et~al.(2021)Sebbouh, Gower, and Defazio]{sebbouh2021almost}
Othmane Sebbouh, Robert~M Gower, and Aaron Defazio.
\newblock Almost sure convergence rates for stochastic gradient descent and
  stochastic heavy ball.
\newblock In \emph{Conference on Learning Theory}, pages 3935--3971. PMLR,
  2021.

\bibitem[Spall(2005)]{spall2005introduction}
James~C Spall.
\newblock \emph{Introduction to stochastic search and optimization: estimation,
  simulation, and control}.
\newblock John Wiley \& Sons, 2005.

\bibitem[Tan(2021)]{tan2021analytic}
Linda~SL Tan.
\newblock Analytic natural gradient updates for {C}holesky factor in {G}aussian
  variational approximation.
\newblock \emph{arXiv preprint arXiv:2109.00375}, 2021.

\bibitem[Tang and Ranganath(2019)]{tang2019variational}
Da~Tang and Rajesh Ranganath.
\newblock The variational predictive natural gradient.
\newblock In \emph{International Conference on Machine Learning}, pages
  6145--6154. PMLR, 2019.

\bibitem[Titsias and L{\'a}zaro-Gredilla(2014)]{titsias2014doubly}
Michalis Titsias and Miguel L{\'a}zaro-Gredilla.
\newblock Doubly stochastic variational {B}ayes for non-conjugate inference.
\newblock In \emph{International conference on machine learning}, pages
  1971--1979. PMLR, 2014.

\bibitem[Tran et~al.(2020)Tran, Nguyen, Nott, and Kohn]{Tran:JCGS2020}
M.-N. Tran, N.~Nguyen, D.~Nott, and R.~Kohn.
\newblock Bayesian deep net {GLM and GLMM}.
\newblock \emph{Journal of Computational and Graphical Statistics}, 29\penalty0
  (1):\penalty0 97--113, 2020.
\newblock \doi{10.1080/10618600.2019.1637747}.

\bibitem[Tran et~al.(2017)Tran, Nott, and Kohn]{tran2017variational}
Minh-Ngoc Tran, David~J Nott, and Robert Kohn.
\newblock Variational {B}ayes with intractable likelihood.
\newblock \emph{Journal of Computational and Graphical Statistics}, 26\penalty0
  (4):\penalty0 873--882, 2017.

\bibitem[Tran et~al.(2021)Tran, Nguyen, and Nguyen]{Tran:STCO2021}
Minh-Ngoc Tran, Dang~H Nguyen, and Duy Nguyen.
\newblock Variational {B}ayes on manifolds.
\newblock \emph{Statistics and Computing}, 31:\penalty0 1--17, 2021.
\newblock \doi{doi.org/10.1007/s11222-021-10047-1}.

\bibitem[Waterhouse et~al.(1995)Waterhouse, MacKay, and
  Robinson]{waterhouse1995bayesian}
Steve Waterhouse, David MacKay, and Anthony Robinson.
\newblock {B}ayesian methods for mixtures of experts.
\newblock \emph{Advances in neural information processing systems}, 8, 1995.

\bibitem[Wilkinson et~al.(2023)Wilkinson, S{\"a}rkk{\"a}, and
  Solin]{wilkinson2023bayes}
William~J Wilkinson, Simo S{\"a}rkk{\"a}, and Arno Solin.
\newblock {B}ayes--{N}ewton methods for approximate {B}ayesian inference with
  psd guarantees.
\newblock \emph{Journal of Machine Learning Research}, 24\penalty0
  (83):\penalty0 1--50, 2023.

\bibitem[Zeiler(2012)]{zeiler2012adadelta}
Matthew~D Zeiler.
\newblock Adadelta: an adaptive learning rate method.
\newblock \emph{arXiv preprint arXiv:1212.5701}, 2012.

\bibitem[Zhang and Gao(2020)]{zhang2020convergence}
Fengshuo Zhang and Chao Gao.
\newblock Convergence rates of variational posterior distributions.
\newblock \emph{Annals of Statistics}, 48\penalty0 (4):\penalty0 2180--2207,
  2020.

\end{thebibliography}
\newpage
\bigskip
\begin{center}
{\large\bf SUPPLEMENTARY MATERIAL}
\end{center}

\section{Example 5 (Gaussian and Inverse Gamma variational approximation)}\label{sec:example 5}
 Let $y = (11;12;8;10;9;8;9;10;13;7)$ be observations from
 $\mathcal N(\mu,\sigma^2)$,
 the normal distribution with mean $\mu$ and variance $\sigma^2$.
 We use
 the prior $\mathcal N(\mu_0,\sigma_0^2)$
 for $\mu$ and Inverse-Gamma$(\alpha_0,\beta_0)$
 for $\sigma^2$
 with the hyperparameters $\mu_0=0,\sigma_0=10,\alpha_0=1,\beta_0=1$.
 The posterior distribution 
 is written as 
 $$p(\mu,\sigma^2|y)\propto p(\mu)p(\sigma^2)p(y|\mu,\sigma^2).$$
 Assume that the VB approximation is
 $q_{\lambda}(\theta)=q(\mu)q(\sigma^2)$
 with $q(\mu)=\mathcal N(\mu_{\mu},\sigma_{\mu}^2)$,
 $q(\sigma^2)=\mbox{Inverse-Gamma}(\alpha_{\sigma^2},\beta_{\sigma^2})$, model parameter $\theta=(\mu,\sigma^2)^\top$
 and the variational parameters $\lambda=(\mu_\mu,\sigma^2_\mu,\alpha_{\sigma^2},\beta_{\sigma^2})$.
 Note that we have $h_\lambda(\theta)=h(\theta)-\log q_\lambda(\theta)$
 with
 \begin{align*}
  h(\theta)&=\log(p(\mu)p(\sigma^2)p(y|\mu,\sigma^2))\\
   &=-\frac{n+1}{2}\log(2\pi)-\frac{1}{2}\log(\sigma_0^2)
   -\frac{(\mu-\mu_0)^2}{2\sigma_0^2}
   +\alpha_0\log(\beta_0)-\log \Gamma(\alpha_0)\\
  & -(n/2+\alpha_0+1)\log(\sigma^2)
   -\frac{\beta_0}{\sigma^2}-\frac{1}{2\sigma^2}\sum_{i=1}^n(y_i-\mu)^2,
 \end{align*}
 and 
 \begin{align*}
 \log q_\lambda(\theta) &=
 \alpha_{\sigma^2}\log\beta_{\sigma^2}-\log \Gamma(\alpha_{\sigma^2})-(\alpha_{\sigma^2}+1)\log\sigma^2
 -\frac{\beta_{\sigma^2}}{\sigma^2}\\
 & -\frac{1}{2}\log(2\pi)
 -\frac{1}{2}\log(\sigma^2_{\mu})
 -\frac{(\mu-\mu_\mu)^2}{\sigma^2_\mu}.
 \end{align*}
 From here it can be seen that
 \begin{align*}
 \nabla_\lambda\log q_\lambda(\theta)=
 \left(\frac{\mu-\mu_\mu}{\sigma^2_\mu}
,-\frac{1}{2\sigma^2_\mu}+\frac{(\mu-\mu_\mu)^2}{2\sigma^4_\mu},
\log\beta_{\sigma^2}-\frac{\Gamma^\prime(\alpha_{\sigma^2})}{\Gamma(\alpha_{\sigma^2})}-\log\sigma^2,
\frac{\alpha_{\sigma^2}}{\beta_{\sigma^2}}-\frac{1}{\sigma^2} 
   \right)^\top.
 \end{align*}
 By direct calculation, it can be seen
 that the Fisher information matrix $I_{F}$
 is a diagonal block matrix with two main blocks
$$ \begin{pmatrix}
\frac{1}{\sigma^2_\mu}& 0\\
0 &\frac{1}{2\sigma^4_\mu}
\end{pmatrix},\;\;
\mbox{ and }\;\;
 \begin{pmatrix}
\frac{\partial^2\log \Gamma(\alpha_{\sigma^2})}{\partial(\alpha_{\sigma^2})^2}& -\frac{1}{\beta_{\sigma^2}}\\
-\frac{1}{\beta_{\sigma^2}} &\frac{\alpha_{\sigma^2}}{\beta^2_{\sigma^2}}
\end{pmatrix}.
$$
In this example, the natural gradient can be computed in closed-form,
which facilitates the testing of our IFVB and AIFVB algorithms.
We use $(\bar y,0.5,1,1)^\top$ as the initial guess for $\lambda$.
The first and second panel of Figure \ref{Fig:Comparing MCMC, NGVB and IFVB} plot the posterior
densities of $\mu$ and $\sigma^2$
using all four different approaches: MCMC,
exact natural gradient VB (NGVB), inversion free VB (IFVB) and averaged inversion free VB (AIFVB).
It can be seen that the posterior estimates obtained
by IFVB and AIFVB are close
to that of NGVB and MCMC.
The last panel of Figure \ref{Fig:Comparing MCMC, NGVB and IFVB}
plots the lower bound 
obtained from the VB methods. As shown, both IFVB and AIFVB converge  almost as fast as NGVB even though NGVB  uses the exact natural gradient
in its computation. 

\begin{figure}[h]
	\centering
	\includegraphics[scale=0.27]{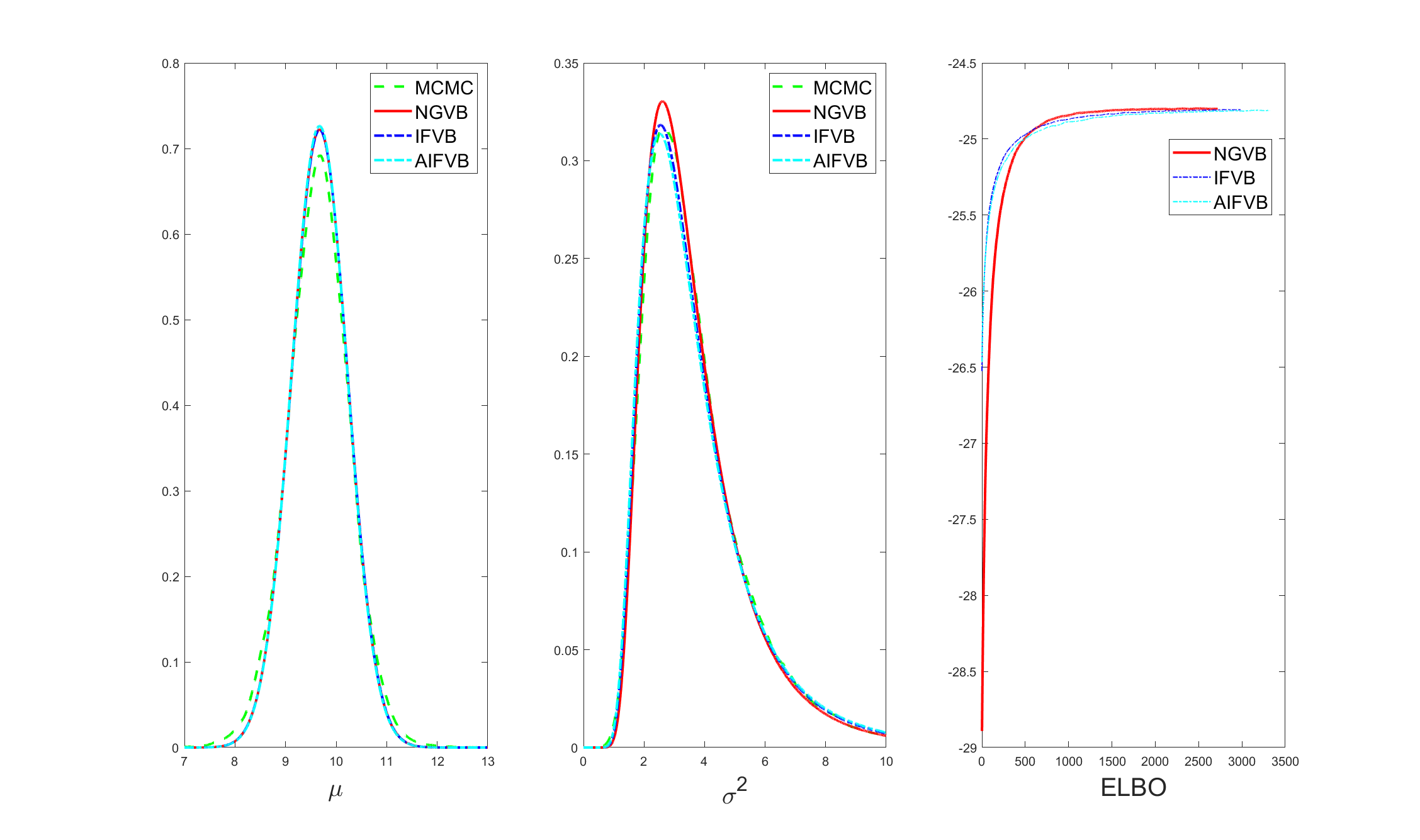}
\caption{The first two panels show that the IFVB and AIFVB estimates are close to that of the
exact natural gradient ascent VB (NGVB) and MCMC. The last panel plots the lower bound estimates obtained from NGVB, IFVB and AIFVB.}\label{Fig:Comparing MCMC, NGVB and IFVB}	
\end{figure}

\section{Main proofs}\label{sec:Main proofs}
Let us recall that $\tau_{k} = \frac{c_{\alpha}}{\left(c_{\alpha}' + k \right)^{\alpha}}$ with $c_{\alpha}> 0$, $c_{\alpha}' \geq 0$ and $\alpha \in (1/2,1)$. In addition, $c_{\beta} \geq 0$ and $\beta \in (0,\alpha -1/2 )$.
Recall that
\begin{align*}
\nabla_\lambda \mathcal L(\lambda)=\mathbb E_{q_\lambda}[ - \nabla_\lambda\log q_\lambda(\theta) h_\lambda(\theta)] =: \mathbb E_{q_\lambda}[\ell (\theta,\lambda)].
\end{align*}
We then have the recursive scheme
\begin{align*}
\lambda^{(k+1)}-\lambda^{(k)}=-\tau_{k+1}\tilde{\mathbf{H}}^{-1}_{k+1} \frac{1}{B}\sum_{i=1}^{B}\ell(\theta_{k+1,i},\lambda^{(k)}),\;\;\;\theta_{k+1,i}\sim q_{\lambda^{(k)}}(\theta).
\end{align*}

\subsection{Proof of Theorem \ref{theo::as}}
\begin{proof}
From Taylor's expansion of $\mathcal L(\lambda)$, since its Hessian is uniformly bounded, we have
\begin{align*}
&\mathcal L(\lambda^{(k+1)})
=\mathcal L(\lambda^{(k)})+\nabla_\lambda\mathcal L^\top(\lambda^{(k)})(\lambda^{(k+1)}-\lambda^{(k)})\\
&+ (\lambda^{(k+1)}-\lambda^{(k)})^\top\Big(\int_0^1 (1-t)\nabla_\lambda^2\mathcal L(\lambda^{(k+1)}+t(\lambda^{(k)}-\lambda^{(k+1)}))dt\Big) (\lambda^{(k+1)}-\lambda^{(k)})\\
&\leq \mathcal L(\lambda^{(k)})+\nabla_\lambda\mathcal L^\top(\lambda^{(k)})(\lambda^{(k+1)}-\lambda^{(k)})+\frac{1}{2}L_0\norm{(\lambda^{(k+1)}-\lambda^{(k)})}^2\\
&\leq  \mathcal L(\lambda^{(k)})-\alpha_{k+1}\nabla_\lambda\mathcal L^\top(\lambda^{(k)})\tilde{\mathbf{H}}^{-1}_{k+1}\frac{1}{B}\sum_{i=1}^{B}\ell(\theta_{k+1,i},\lambda^{(k)})\\
&\phantom{cccc}+
\frac{1}{2B}\alpha_{k+1}^2 L_0\norm{\tilde{\mathbf{H}}^{-1}_{k+1}}_{op}^2 \sum_{i=1}^{B}\norm{\ell(\theta_{k+1,i},\lambda^{(k)})}^2
\end{align*}
Hence
\begin{align*}
\mathcal L(\lambda^{(k+1)})-\mathcal L(\lambda^*)
& \leq  \mathcal L(\lambda^{(k)})-\mathcal L(\lambda^*)-\alpha_{k+1}\nabla_\lambda\mathcal L^\top(\lambda^{(k)})\tilde{\mathbf{H}}^{-1}_{k+1} \frac{1}{B}\sum_{i=1}^{B}\ell(\theta_{k+1,i},\lambda^{(k)}) \\
& + \frac{1}{2B}\alpha_{k+1}^2 L_0\norm{\tilde{\mathbf{H}}^{-1}_{k+1}}_{op}^2 \sum_{i=1}^{B}\norm{\ell(\theta_{k+1,i},\lambda^{(k)})}^2.
\end{align*}
Let us denote $W_k=\mathcal L(\lambda^{(k)})-\mathcal L(\lambda^*)$. From the above inequality we have
\begin{align*}
W_{k+1}\leq W_k-\alpha_{k+1}\nabla_\lambda\mathcal L^\top(\lambda^{(k)})\tilde{\mathbf{H}}^{-1}_{k+1} \frac{1}{B}\sum_{i=1}^{B}\ell(\theta_{k+1,i},\lambda^{(k)})\\
+\frac{1}{2B}\alpha_{k+1}^2 L_0\norm{\tilde{\mathbf{H}}^{-1}_{k+1}}_{op}^2 \sum_{i=1}^{B}\norm{\ell(\theta_{k+1,i},\lambda^{(k)})}^2.
\end{align*}
Let us consider the filtration $\mathcal F_k=\sigma(\theta_{s,i},\overline{\theta}_{s'},Z_{s}:s\leq k:s' \leq k+1)$.
As $\theta_{k+1,i}$ are independent of $\mathcal F_k$, taking the conditional expectation from the both sides
of the above inequality, we have
\begin{align*}
\mathbb E[W_{k+1}|\mathcal F_k]
&\leq \mathbb E[W_k|\mathcal F_k]
-\alpha_{k+1} \frac{1}{B}\sum_{i=1}^{B}\mathbb E[\nabla_\lambda\mathcal L^\top(\lambda^{(k)})\tilde{\mathbf{H}}^{-1}_{k+1}\ell(\theta_{k+1,i},\lambda^{(k)})|\mathcal F_k]\\
&\phantom{cccc}+\frac{1}{2B}\alpha_{k+1}^2 L_0 \sum_{i=1}^{B}\norm{\tilde{\mathbf{H}}^{-1}_{k+1}}_{op}^2\mathbb E[\norm{\ell(\theta_{k+1,i},\lambda^{(k)})}^2|\mathcal F_k]\\
&\leq W_k
- \alpha_{k+1}\nabla_\lambda\mathcal L^\top(\lambda^{(k)})\tilde{\mathbf{H}}^{-1}_{k+1}\nabla_\lambda\mathcal L(\lambda^{(k)})\\
&\phantom{cccc}+
\frac{1}{2B}\alpha_{k+1}^2 L_0\norm{\tilde{\mathbf{H}}^{-1}_{k+1}}_{op}^2 \sum_{i=1}^{B}\mathbb E\left[\norm{\ell(\theta_{k+1,i},\lambda^{(k)})}^2|\mathcal F_k \right].
\end{align*}
Since
$\mathbb E[\norm{\ell(\theta,\lambda)}^2]\leq C_0+C_1(\mathcal L(\lambda)-\mathcal L(\lambda^*))$, we have
\begin{align*}
\mathbb E[W_{k+1}|\mathcal F_k]
&\leq \left(1+\frac{ C_1L_0}{2}\alpha_{k+1}^2\norm{\tilde{\mathbf{H}}^{-1}_{k+1}}_{op}^2 \right) W_k
-\alpha_{k+1}\lambda_{\min}(\tilde{\mathbf{H}}^{-1}_{k+1})\norm{\nabla_\lambda\mathcal L(\lambda^{(k)})}^2\\
&+\frac{1}{2 }\alpha_{k+1}^2 C_0L_0\norm{\tilde{\mathbf{H}}^{-1}_{k+1}}_{op}^2.
\end{align*}
In order to apply the Robbins-Siegmund theorem \citep{robbins1971convergence}, we now focus on the behavior of the eigenvalues of $\tilde{\mathbf{H}}_{s}$.   Observe that with the help of Toeplitz lemma (e.g., see the proof of Theorem \ref{thm:H1} in Appendix \ref{Proof of Riccati-twice-algorithm}).
\[
(1-\beta ) s^{1-\beta} \sum_{k=1}^{s}k^{-\beta} Z_{k}Z_{k}^{\top} \xrightarrow[s\to + \infty]{} I.
\]
Then, as soon as $\lambda_{\min} \left( \tilde{\mathbf{H}}_{s} \right) \geq \lambda_{\min} \left( \frac{1}{s}H_0 + \frac{1}{s}\sum_{k=1}^{s} c_{\beta}k^{-\beta} Z_{k}Z_{k}^{\top} \right)$, it comes
\[
\lambda_{\max} \left( \tilde{\mathbf{H}}_{s}^{-1} \right) = O \left( s^{\beta} \right) \quad a.s.
\]
Then, since $\beta < \alpha -1/2$, i.e $2\alpha -2\beta > 1$, it comes
\[
\sum_{k} \alpha_{k+1}^{2} \left\|   \tilde{\mathbf{H}}_{k+1}^{-1} \right\|^{2} \leq C\sum_{k} k^{2\beta-2\alpha} <+\infty \quad a.s.
\]
Using the Robbins-Siegmund theorem, one has that $W_{k}$ converges almost surely to a finite random variable $W_{\infty}$, which proves (i). We also conclude that
\begin{equation}\label{eq::penh}
\sum_{k=0}^\infty \alpha_{k+1}\lambda_{\min}(\tilde{\mathbf{H}}^{-1}_{k+1}) \left\| \nabla \mathcal{L}\left( \lambda^{(k)} \right) \right\|^{2}  <\infty,\quad \mbox{a.s}.
\end{equation}
We now have to control the smallest eigenvalue of $\tilde{\mathbf{H}}^{-1}_{s+1}$. In this aim, let us remark that the estimates of the Fisher Information can be written as
\begin{align*}
\tilde{\mathbf{H}}_{s}= \frac{1}{s}\left( H_0 + \sum_{k=1}^{s} \beta_{k}Z_{k}Z_{k}^{T} \right)& + \underbrace{\frac{1}{s}\sum_{k=0}^{s-1} \mathbb{E}\left[ \nabla_\lambda\log q_{\overline{\lambda}^{(k)}}(\overline{\theta}_{k+1})\nabla_\lambda\log q_{\overline{\lambda}^{(k)}}(\overline{\theta}_{k+1})^{\top} |\tilde{\mathcal{F}}_{k} \right]}_{=: R_{s}} \\
&+ \underbrace{\frac{1}{s} \sum_{k=0}^{s-1} \Xi_{k+1}}_{=: M_{s}} ,
\end{align*}
where $\tilde{\mathcal{F}}_{k} = \sigma \left( \theta_{s,i},\overline{\theta}_{s},Z_{s}:s\leq k,i \leq B \right)$ 
and   
\[\Xi_{k+1} := 
 - \mathbb{E}\left[  \nabla_\lambda\log q_{\overline{\lambda}^{(k)}}(\overline{\theta}_{k+1})\nabla_\lambda\log q_{\overline{\lambda}^{(k)}}(\overline{\theta}_{k+1})^{\top} | \tilde{\mathcal{F}}_{k} \right] +  \nabla_\lambda\log q_{\overline{\lambda}^{(k)}}(\overline{\theta}_{k+1})\nabla_\lambda\log q_{\overline{\lambda}^{(k)}}(\overline{\theta}_{k+1})^{\top}\] 
 is a sequence of martingale differences. In addition
\begin{align}\label{eq: M matingale eq}
\mathbb{E}\left[ \left\| M_{s+1} \right\|_{F}^{2}  |\mathcal{F}_{s} \right] & \leq \left( \frac{s}{s+1} \right)^{2} \left\| M_{s} \right\|_{F}^{2}  
  + \frac{1}{(s+1)^{2}} \mathbb{E}\left[  \left\| \nabla_\lambda\log q_{\overline{\lambda}^{(s)}}(\overline{\theta}_{s+1}) \right\|^{4} | \tilde{\mathcal{F}}_{s} \right]  .
\end{align}
Since there are positive constants $C_{0}',C_{1}'$ such that for all $\lambda$, 
\[
\mathbb{E}\left[ \left\| \nabla_{\lambda} \log q_{\lambda} (\theta) \right\|^{4} \right] \leq C_{0}' + C_{1}' \left( \mathcal{L}(\lambda) - \mathcal{L}\left( \lambda^{*} \right)\right)^{2} ,
\]
we have that $\mathbb{E}\left[ \left\|  \nabla_\lambda\log q_{\overline{\lambda}^{(k)}}(\overline{\theta}_{k+1}) \right\|^{4} | \tilde{\mathcal{F}}_{k} \right] \leq C_{0}' + C_{1}' \overline{W}_{k}^{2}  $ where $\overline{W}_{k} = \mathcal{L} \left( \overline{\lambda}^{(k)} \right) - \mathcal{L}\left( \lambda^{*} \right)  $.  Then, if $\mathcal{L}$ is not convex, one has $C_{1}' = 0$, otherwise, by the convexity of $\mathcal{L}$ and with the help of the Toeplitz lemma,
\[
\overline{W}_{s} \leq \frac{1}{\sum_{k=0}^{s} \log (k+1)^{w}} \sum_{k=0}^{s}\log (k+1)^{w} \left(  \mathcal{L}\left( \lambda^{(k)} \right) - \mathcal{L}\left( \lambda^{*} \right) \right) \xrightarrow[n\to + \infty]{a.s} W_{\infty}.
\]
Then, from \eqref{eq: M matingale eq}, applying the Robbins-Siegmund theorem, $\left\| M_{s} \right\|_{F}^{2} = \mathcal{O}(1)$ a.s. In a same way, one can check that $\left\| R_{s} \right\|_{F}^{2} = \mathcal O(1)$ a.s. This means that at least, 
\[
\liminf_{k}  \lambda_{\min} \left( \tilde{\mathbf{H}}_{k}^{-1} \right)    > 0
\]
so \eqref{eq::penh} implies that
\[
\sum_{k =0}^{+\infty} \alpha_{k+1} \left\| \nabla \mathcal{L} \left( \lambda^{(k)} \right) \right\|^{2} < + \infty \quad a.s.
\]
Then, thanks to Lemma 2 in \cite{liu2022almost} (or following the proof of Theorem 4.3 in \cite{sebbouh2021almost}), it comes that
\[
\min_{k=0,\ldots s} \left\| \nabla \mathcal{L} \left( \lambda^{(k)} \right) \right\|^{2} = o \left( \frac{1}{\sum_{k=0}^{s} \alpha_{k+1}} \right) \quad a.s.
\]
i.e. one has
\[
\min_{k=0,\ldots s} \left\| \nabla \mathcal{L} \left( \lambda^{(k)} \right) \right\|^{2} = o \left( \frac{1}{s^{1-\alpha}} \right) \quad a.s.
\]
which proves (ii). One can apply Lemma 2 in \cite{liu2022almost} again to obtain the result on $\overline{\lambda}^{(k)}$.
This completes the proof of the theorem.


\end{proof}

\subsection{Proof of Corollary \ref{cor::as::hs}}
\begin{proof}
Let us recall from the proof of Theorem \ref{theo::as} that $\tilde{\mathbf{H}}_{s} = \frac{1}{s}\left( H_0 + c_{\beta}\sum_{k=1}^{s} k^{-\beta}Z_{k}Z_{k}^{T} \right) + R_{s} + M_{s}$.
Similar to the proof of Theorem \ref{thm:H1} in Appendix \ref{Proof of Riccati-twice-algorithm}, the norm of the first term
can be estimated as
\[
 \left\| \frac{1}{s}\left( H_0 + c_{\beta}\sum_{k=1}^{s} k^{-\beta}Z_{k}Z_{k}^{T} \right) \right\| = \mathcal{O} \left( \max \left\lbrace c_{\beta}s^{-\beta}, s^{-1} \right\rbrace \right) \quad a.s.
\]
which is is negligible. By the continuity of $I_{F}(\lambda)$ and since the convergence of $\lambda^{(k)}$ implies that $\overline{\lambda}^{(k)}$ converges almost surely to $\lambda^{*}$, it comes
\[
R_s = \frac{1}{s}\sum_{k=0}^{s} \mathbb{E}\left[ \nabla_\lambda\log q_{\overline{\lambda}^{(k)}}(\overline{\theta}_{k+1})\nabla_\lambda\log q_{\overline{\lambda}^{(k)}}(\overline{\theta}_{k+1})^{\top} |\tilde{\mathcal{F}}_{k} \right]\xrightarrow[n\to + \infty]{a.s} I_{F} \left( \lambda^{*} \right).
\]
Finally, applying Theorem 6.2 in \cite{cenac2020efficient}, it comes that for any $\delta > 0$,
\[
\left\| M_{s} \right\|_{F}^{2} = o \left( \frac{\log s^{1+\delta}}{s} \right) \quad a.s.
\]
which is negligible.
\end{proof}

\subsection{Proof of Theorem \ref{theo::rate::lambda}}
\begin{proof} 
The proof is adapted from \cite{boyer2023asymptotic} and \cite{bercu2020efficient}. Denoting $I_{F} = I_{F}\left( \lambda^{*} \right)$,
\begin{align}\label{dec::delta::areutiliser}
\notag &\lambda^{(k+1)}-\lambda^*=\lambda^{(k)} - \lambda^{*}-\tau_{k+1}\tilde{\mathbf{H}}_{k+1}^{-1}\frac{1}{B}\sum_{i=1}^{B}\ell(\theta_{k+1,i},\lambda^{(k)}) \\
\notag &=\lambda^{(k)}-\lambda^* -\tau_{k+1}\tilde{\mathbf{H}}_{k+1}^{-1}\nabla_\lambda\mathcal L(\lambda^{(k)})+\tau_{k+1}\tilde{\mathbf{H}}_{k+1}^{-1}\underbrace{\left(\nabla_\lambda\mathcal L(\lambda^{(k)})- \frac{1}{B}\sum_{i=1}^{B}\ell(\theta_{k+1,i},\lambda^{(k)}) \right)}_{\xi_{k+1}}\\
\notag &=\lambda^{(k)}-\lambda^* - \tau_{k+1} I^{-1}_F \nabla_\lambda\mathcal L(\lambda^{(k)})-\tau_{k+1} \underbrace{\left(\tilde{\mathbf{H}}_{k+1}^{-1}-I^{-1}_F \right)\nabla_\lambda\mathcal L(\lambda^{(k)})}_{=: r_{k}}+\tau_{k+1}\tilde{\mathbf{H}}_{k+1}^{-1}\xi_{k+1}\\
\notag & = \left( \mathbb I - \tau_{k+1}I^{-1}_{F}  \nabla^{2}\mathcal{L}\left( \lambda^{*} \right) \right) \left( \lambda^{(k)} - \lambda^{*} \right) - \tau_{k+1}r_{k} + \tau_{k+1}\tilde{\mathbf{H}}_{k+1}^{-1}\xi_{k+1}  \\
&\phantom{cccc} - \tau_{k+1} I_{F}^{-1}  \underbrace{  \left( \nabla \mathcal{L}\left( \lambda^{(k)} \right) - \nabla^{2}\mathcal{L}\left( \lambda^{*} \right) \left( \lambda^{(k)} - \lambda^{*} \right) \right)}_{=: \delta_{k}}
\end{align}
As explained in \cite{antonakopoulos2022adagrad}, since $I_{F}$ and $\nabla^{2} \mathcal{L}\left( \lambda^{*} \right)$ are symmetric and positive, $ I^{-1}_{F}  \nabla^{2}\mathcal{L}\left( \lambda^{*} \right)$ and $I^{-1/2}_{F}  \nabla^{2}\mathcal{L}\left( \lambda^{*} \right)I^{-1/2}_{F}  $ have the same eigenvalues, i.e there is a positive diagonal matrix $\Lambda$ (of the eigenvalues of $I^{-1/2}_{F}  \nabla^{2}\mathcal{L}\left( \lambda^{*} \right) I^{-1/2}_{F}  $) and a matrix $Q$ such that $I^{-1}_{F}  \nabla^{2}\mathcal{L}\left( \lambda^{*} \right) = Q^{-1}\Lambda Q$.
Then one can rewrite the previous decomposition as 
\begin{equation}\label{decdelta}
Q \left( \lambda^{(k+1)}-\lambda^* \right)= \left( I - \tau_{k+1}\Lambda \right) Q \left( \lambda^{(k)} - \lambda^{*} \right) - \tau_{k+1}Q r_{k} + \tau_{k+1}Q \tilde{\mathbf{H}}_{k+1}^{-1}\xi_{k+1}   - \tau_{k+1}Q I_{F}^{-1} \delta_{k} .
\end{equation}
Then, with the help of an induction, it comes that
\begin{align}\label{decbeta}
Q \left(\lambda^{(s)} - \lambda^{*} \right) &= \beta_{s,0} Q \left( \lambda^{(0)} - \lambda^{*} \right) + \underbrace{\sum_{k=0}^{s-1} \beta_{s,k+1}\tau_{k+1}Q \tilde{\mathbf{H}}_{k+1}^{-1}\xi_{k+1}}_{=: M_{s}'}\notag \\
&- \underbrace{\left( \sum_{k=0}^{s-1} \beta_{s,k+1}\tau_{k+1}Q r_{k} + \sum_{k=0}^{s-1} \beta_{s,k+1}\tau_{k+1}QI_{F}^{-1} \delta_{k} \right)}_{\Delta_{s}}.
\end{align}
with $\beta_{s,k} = \prod_{j=k+1}^{s} \left(\mathbb I - \alpha_{j}\Lambda \right)$ and $\beta_{s,s} = \mathbb I$.  We now give the rate of convergence for each term in the decomposition \eqref{decbeta}.

\medskip

\noindent{\textbf{Rate of convergence of $\beta_{s,0}Q \left( \lambda^{(0)} - \lambda^{*} \right)$.}}
Since $\lambda_{\min}(\Lambda) > 0$ (because $I_{F}$ and $\nabla^{2}\mathcal{L}\left( \lambda^{*} \right)$ are positive),  one can easily check that 
\[
\left\| \beta_{s,0}Q \left( \lambda^{(0)} - \lambda^{*} \right) \right\| = \mathcal{O} \left( \exp \left( - \lambda_{\min}(\Lambda) \frac{c_{\alpha}}{1-\alpha}s^{1-\alpha} \right) \right) \quad a.s.
\]

\medskip

\noindent\textbf{Rate of convergence of $M_{s}'$.}
Recall there  are $\eta > \frac{1}{\alpha}-1$ and positive constants $C_{\eta,0},C_{\eta,1}$ such that for all $\lambda$,
\[
\mathbb{E}\left[ \left\| l(\theta , \lambda) \right\|^{2+2\eta} \right] \leq C_{\eta,0} + C_{\eta,1}\left( \mathcal{L} (\lambda) - \mathcal{L}\left( \lambda^{*} \right) \right)^{1+\eta}.
\]
Then, one has with the help of Holder's inequality
\begin{align*}
\mathbb{E}\left[ \left\| \xi_{k+1} \right\|^{2+2\eta} |\mathcal{F}_{k} \right]  & \leq 2^{2\eta +1} \frac{1}{B}\sum_{i=1}^{B}  \mathbb{E}\left[ \left\| \ell \left( \theta_{k+1,i},\lambda^{(k)} \right)\right\|^{2+2\eta} |\mathcal{F}_{k} \right] + 2^{2\eta+1} \left\| \nabla_{\lambda} \mathcal{L}\left( \lambda^{(k)} \right) \right\|^{2+2\eta} \\
& \leq 2^{2+2\eta}C_{\eta,0} + 2^{2+2\eta}C_{\eta,1}\left( \mathcal{L} \left(\overline{\lambda}^{(k)}\right) - \mathcal{L}\left( \lambda^{*} \right) \right)^{1+\eta} .
\end{align*}
Since   $\lambda^{(k)}$ is strongly consistent, the second term on the right-hand side of previous inequality converges almost surely to $0$. In addition, thanks to Corollary \ref{cor::as::hs},  $\tilde{\mathbf{H}}_{k}^{-1}$ converges almost surely to $I_{F}^{-1}$. Then, with the help of Theorem 6.1 in \cite{cenac2020efficient}, 
\[
\left\| M_{s}' \right\|^{2} = \mathcal{O} \left( \frac{\log s}{s^{\alpha}} \right) \quad a.s.
\]

\medskip

\noindent\textbf{Rate of convergence of $\Delta_{s}$. }
For $s$ large enough, one has
\begin{align*}
\left\| \Delta_{s+1} \right\|  \leq \left( 1- \lambda_{\min}(\Lambda) \tau_{s+1} \right) \left\| \Delta_{s} \right\| + \tau_{s+1} \left\| Q \right\| \left( \left\| r_{s} \right\| + \left\| I_{F}^{-1} \right\| \left\| \delta_{s} \right\| \right).
\end{align*}
Observe that since $\mathcal{L}$ is twice continuously differentiable on a neighborhood of $\lambda^{*}$ and since $\lambda^{(k)}$ is strongly consistent, it comes that $\left\| \delta_{k} \right\| = o \left( \left\| \lambda^{(k)} - \lambda^{*} \right\| \right)$ a.s. In addition, since $\tilde{\mathbf{H}}_{k}^{-1}$ converges almost surely to $I_{F}^{-1}$ and since   the gradient of $\mathcal{L}$ is locally Lipschitz on a neighborhood of $\lambda^{*}$ (since the Hessian is locally bounded by continuity), one has that $\| r_{k} \| = o \left( \left\| \lambda^{(s)} - \lambda^{*} \right\|\right) $ a.s. Then,   $ \left\| r_{s} \right\| + \left\| I_{F}^{-1} \right\| \left\| \delta_{s} \right\| = o \left( \left\| \lambda^{(s)} - \lambda^{*} \right\| \right)$ a.s, and with the help of decompotision \eqref{decbeta}, it comes that
\[
\left\| r_{s} \right\| + \left\| I_{F}^{-1} \right\| \left\| \delta_{s} \right\| = o \left(  \left\| \beta_{s,0} Q\left(  \lambda^{(0)} - \lambda^{*} \right) \right\| + \left\| M_{s}' \right\| + \left\| \Delta_{s} \right\| \right) \quad a.s.
\]
Thanks to previous convergence results, there exists a sequence of random variables $r_{s}'$ converging almost surely to $0$ such that
\[
\left\| \Delta_{s+1} \right\|  \leq \left( 1- \lambda_{\min}(\Lambda) \tau_{s+1} \right) \left\| \Delta_{s} \right\| + \tau_{s+1} r_{s+1} \left( \sqrt{\frac{\log s}{s^{\alpha}}} + \left\| \Delta_{s} \right\| \right)  .
\]
Then, thanks to a stabilization Lemma (see \cite{duflo2013random}), it comes
\[
\left\| \Delta_s \right\| = \mathcal{O} \left( \sqrt{\frac{\log s}{s^{\alpha}}} \right) \quad a.s,
\]
which concludes the proof.
\end{proof}

\subsection{Proof of Theorem \ref{theo::wasn}}
\begin{proof}
Observe that one can rewrite decomposition \eqref{dec::delta::areutiliser}  as
\begin{align*}
\lambda^{(k+1)} - \lambda^{*} &  = \lambda^{(k)} - \lambda^{*} - \tau_{k+1}\tilde{\mathbf{H}}_{k+1}^{-1} \nabla_{\lambda} \mathcal{L} \left( \lambda^{(k)} \right) + \tau_{k+1} \tilde{\mathbf{H}}_{k+1}^{-1} \xi_{k+1} \\
& = \lambda^{(k)} - \lambda^{*} - \tau_{k+1}\tilde{\mathbf{H}}_{k+1}^{-1} \nabla_{\lambda}^{2}\mathcal{L}(\lambda^{*}) \left( \lambda^{(k)} - \lambda^{*} \right) - \tau_{k+1}\tilde{\mathbf{H}}_{k+1}^{-1} \delta_{k} +   \tau_{k+1} \tilde{\mathbf{H}}_{k+1}^{-1} \xi_{k+1},
\end{align*}
where $\delta_k=\nabla \mathcal{L}\left( \lambda^{(k)} \right) - \nabla^{2}\mathcal{L}\left( \lambda^{*} \right) \left( \lambda^{(k)} - \lambda^{*} \right)$ and 
$\xi_{k+1}=\nabla_\lambda\mathcal L(\lambda^{(k)})- \frac{1}{B}\sum_{i=1}^{B}\ell(\theta_{k+1,i},\lambda^{(k)})$.
Denoting $u_{k}: = \lambda^{(k)} - \lambda^{*}$ and $L^{-1} := \nabla_{\lambda}^{2} \mathcal{L} (\lambda^{*})^{-1}$, one can rewrite  
\[
\lambda^{(k)} - \lambda^{*} = L^{-1}\tilde{\mathbf{H}}_{k+1}^{-1}  \frac{u_{k}- u_{k+1}}{\tau_{k+1}} + L^{-1} \xi_{k+1} - L^{-1} \delta_{k}.
\]
Multiplying by $\log (k+1)^{w}$, then summing these equalities and dividing by $\sum_{k=0}^{s}\log (k+1)^w$, it comes
  \begin{align*}\label{dec::moy}
\notag  \overline{\lambda}^{(s)}    - \lambda^{*} &   =  L^{-1} \underbrace{\frac{1}{\sum_{k=0}^{s} \log (k+1)^{w}} \sum_{k=0}^{s}\log (k+1)^{w} \tilde{\mathbf{H}}_{k+1}^{-1} \frac{u_{k} - u_{k+1}}{\tau_{k+1}}}_{=: A_{1,s}}\\  & -  L^{-1} \underbrace{\frac{1 }{\sum_{k=0}^{s} \log (k+1)^{w}} \sum_{k=0}^{s}\log (k+1)^{w} \delta_{k}}_{A_{2,s}}
          +     \underbrace{\frac{ L^{-1} }{\sum_{k=0}^{s} \log (k+1)^{w}} \sum_{k=0}^{s}\log (k+1)^{w} \xi_{k+1}}_{M_{2,s}}.
 \end{align*}
  The aim is to give the rate of convergence for each term on the right-hand side of the decomposition above. Let us first denote $t_{s} =\sum_{k=0}^{s} \log (k+1)^{w}$ and observe that 
  \begin{equation}\label{equiv::ts}
    \  t_{s} \sim s \log(s+1)^w .
  \end{equation}
  \medskip

  \noindent\textbf{Rate of convergence of $\mathbf{A_{1,s}}$. } First, note that
  \[
  A_{1,s} =  \frac{1}{t_{s}} \sum_{k=0}^{s}\log(k+1)^w   \frac{\tilde{\mathbf{H}}_{k}^{-1} u_{k} - \tilde{\mathbf{H}}_{k+1}^{-1}u_{k+1}   }{\tau_{k+1}} + \frac{1}{t_{s}}\sum_{k=0}^{s} \log(k+1)^w \left( \tilde{\mathbf{H}}_{k+1}^{-1} - \tilde{\mathbf{H}}_{k}^{-1} \right) \frac{u_{k}}{\tau_{k+1}} .
  \]
  Concerning the first term on the right hand-side of previous equality, with the help of Abel's transform,
  \begin{align*}
      \frac{1}{t_{s}} \sum_{k=0}^{s}\log(k+1)^w   \frac{\tilde{\mathbf{H}}_{k}^{-1} u_{k} - \tilde{\mathbf{H}}_{k+1}^{-1}u_{k+1}   }{\tau_{k+1}}  &  = - \frac{ \log(s+1)^w \tilde{\mathbf{H}}_{s+1}^{-1} u_{s+1}}{t_{s} \tau_{s+1}} + \frac{\mathbf{1}_{w=0}\tilde{\mathbf{H}}_{0}^{-1}u_{0}}{t_{s}\alpha_{1}} \\
      & + \sum_{k=1}^{s}  \tilde{\mathbf{H}}_{k}^{-1}u_{k} \left( \frac{\log(k+1)^{w}}{\tau_{k+1}} - \frac{\log(k)^w}{\tau_{k}} \right) .
  \end{align*}
With the help of an Abel's transform, one has
\begin{align}
    \frac{1}{t_{s}} \sum_{k=0}^{s}\log(k+1)^w   \frac{\tilde{\mathbf{H}}_{k}^{-1} u_{k} - \tilde{\mathbf{H}}_{k+1}^{-1}u_{k+1}   }{\tau_{k+1}} &  = \frac{- \log(s+1)^w\tilde{\mathbf{H}}_{s+1}^{-1}u_{s+1}}{t_{s}\tau_{s+1}} + \frac{\tilde{\mathbf{H}_{0}^{-1}}u_{0} \mathbf{1}_{w = 0}}{t_{s}\alpha_{1}}\notag \\
    & + \frac{1}{t_{s}}\sum_{k=1}^{s}\log(k+1)^w  \tilde{\mathbf{H}}_{k}u_{k} \left( \frac{1}{\tau_{k+1}} - \frac{1}{\tau_{k}} \right)  
\end{align}
Since $\tilde{\mathbf{H}}_{s}^{-1}$ converges almost surely to $I_{F}^{-1}$ which is positive, and $\left| \tau_{k+1}^{-1} - \tau_{k}^{-1} \right| \leq \alpha c_{\alpha}^{-1}k^{1-\alpha}$ and with the help of   Theorem \ref{theo::rate::lambda} and equation \eqref{equiv::ts}, one can check that 
  \[
  \left\| \frac{1}{t_{s}} \sum_{k=0}^{s}\log(k+1)^w   \frac{\tilde{\mathbf{H}}_{k}^{-1} u_{k} - \tilde{\mathbf{H}}_{k+1}^{-1}u_{k+1}   }{\tau_{k+1}} \right\|^{2} = \mathcal{O} \left( \frac{\log s}{s^{2-\alpha}} \right) \quad a.s.
  \]
  which is negligible since $\alpha < 1$. For any $\delta > 0$, consider the event 
  \[E_{k} = \left\lbrace \left\| \lambda^{(k)} - \lambda^{*} \right\|^{2} \leq \frac{\log k^{1+\delta}}{k^\alpha}, \left\| \overline{\lambda}^{(k)} - \lambda^{*} \right\|^{2} \leq \frac{\log k^{1+\delta}}{k^\alpha} \right\rbrace.\]
  Thanks to Theorem \ref{theo::rate::lambda}, $\mathbf{1}_{E_{k}^{C}}$ converges almost surely to $0$, and consequently
  \[
  \left\| \frac{1}{t_{s}}\sum_{k=0}^{s} \log(k+1)^w \left( \tilde{\mathbf{H}}_{k+1}^{-1} - \tilde{\mathbf{H}}_{k}^{-1} \right) \frac{u_{k+1}}{\gamma_{k+1}} \mathbf{1}_{E_{k+1}^{C}} \right\|^{2} = \mathcal{O} \left( \frac{1}{s^2 \log s^{2w}} \right) \quad a.s.
  \]
  In addition, 
\begin{align*}
    \left\| \tilde{\mathbf{H}}_{k+1}^{-1} - \tilde{\mathbf{H}}_{k}^{-1} \right\|_{op} &  \leq \left\| \tilde{\mathbf{H}}_{k+1}^{-1} \right\|_{op} \left\| \tilde{\mathbf{H}}_{k}^{-1} \right\|_{op} \left\| \tilde{\mathbf{H}}_{k+1}  - \tilde{\mathbf{H}}_{k} \right\|_{op} \\
    & \leq \left\| \tilde{\mathbf{H}}_{k+1}^{-1} \right\|_{op} \left\| \tilde{\mathbf{H}}_{k}^{-1} \right\|_{op} \Big( \frac{1}{k+1} \left\| \tilde{\mathbf{H}}_{k} \right\|_{op} +  \underbrace{\frac{1}{k+1} \left\| \nabla_{\lambda} \log q_{\overline{\lambda}^{(k)}} \left( \overline{\theta}_{ k+1} \right) \right\|^{2} + c_\beta(k+1)^{-\beta}\left\| Z_{k+1}  \right\|^{2}}_{=\tilde{r}_{k+1}}  \Big)
\end{align*}
Then, since $\tilde{\mathbf{H}}_{k}^{-1}$ converges almost surely to a positive matrix, one can easily check that
\[
\left\| \frac{1}{t_{s}}\sum_{k=0}^{s} \log(k+1)^w \frac{1}{k+1} \left\| \tilde{\mathbf{H}}_{k+1}^{-1} \right\|_{op} \left\| \tilde{\mathbf{H}}_{k}^{-1} \right\|_{op}  \left\| \tilde{\mathbf{H}}_{k+1} \right\|_{op} \frac{ \left\| u_{k} \right\|}{\tau_{k+1}} \mathbf{1}_{E_{k+1}}\mathbf{1}_{E_{k}} \right\|^{2} = \mathcal{O} \left( \frac{\log s^{1  + \delta }}{s^{2- \alpha }} \right) \quad a.s.
\]
In addition, considering the filtration  $ \tilde{\mathcal{F}}_{k} $,
and by hypothesis, one has
\begin{align}
\notag  &   \frac{1}{t_{s}}\sum_{k=0}^{s}  \frac{\log (k+1)^{w}}{k+1} \frac{\left\| u_{k+1} \right\|}{\tau_{k+1}} \left\| \tilde{\mathbf{H}}_{k+1}^{-1} \right\|_{op} \left\| \tilde{\mathbf{H}}_{k}^{-1} \right\|_{op}     \mathbf{1}_{E_{k+1}}  \tilde{r}_{k+1}  \\
\notag   & \leq  \underbrace{\frac{1}{t_{s}}\sum_{k=0}^{s} \frac{\log (k+1)^{w}}{k+1} \frac{\left\| u_{k+1} \right\|}{\tau_{k+1}} \left\| \tilde{\mathbf{H}}_{k+1}^{-1} \right\|_{op} \left\| \tilde{\mathbf{H}}_{k}^{-1} \right\|_{op}     \mathbf{1}_{E_{k+1}} \left( \frac{{C_{0}'}^{\frac12} + {C_{1}'}^{\frac12}(\mathcal{L}(\overline{\lambda}^{(k+1)}) - \mathcal{L}\left( \lambda^{*} \right))}{k+1} + dc_\beta(k+1)^{-\beta} \right)}_{=:(*)}  \\
\notag& +   \underbrace{\frac{1}{t_{s}}\sum_{k=0}^{s} \frac{\log (k+1)^{w}}{k+1} \frac{\left\| u_{k+1} \right\|}{\tau_{k+1}} \left\| \tilde{\mathbf{H}}_{k+1}^{-1} \right\|_{op} \left\| \tilde{\mathbf{H}}_{k}^{-1} \right\|_{op}     \mathbf{1}_{E_{k+1}} \tilde{\xi}_{k+1}}_{=: (**)}
\end{align}
where $\tilde{\xi}_{k+1} := \tilde{r}_{k+1} - \mathbb{E} \left[\tilde{r}_{k+1} | \tilde{\mathcal{F}}_{k }\right]$  is a martingale difference. Then, since $\overline{\lambda}^{(k)}$ converges almost surely to $\lambda^{*}$, it comes (at least)
\[
\| (*) \|^{2} = \mathcal{O} \left( \frac{\log s^{1  + \delta }}{s^{2- \alpha }} \right) \quad a.s.
\]
In a same way, since $\tilde{\xi}_{k}$ is a martingale difference satisfying
\[
\mathbb{E} \left[ \left\| \tilde{\xi}_{k+1} \right\|^{2} | \tilde{\mathcal{F}}_{k }\right] \leq \frac{2}{(k+1)^{2}} \left( C_{0}' + C_{1}\left(  \mathcal{L}\left( \overline{\lambda}^{(k+1)} \right) - \mathcal{L}\left( \lambda^{*} \right) \right)^2  \right) + 6d^{2} c_\beta^2(k+1)^{-2\beta} 
\]
applying Theorem 6.2 in \cite{cenac2020efficient}, it comes that at least
\[
\| (**) \|^{2} = \mathcal{O} \left( \frac{\log s^{1  + \delta }}{s^{2- \alpha }} \right) \quad a.s.
\]

\medskip
\noindent\textbf{Rate of convergence of $A_{2,s}$. } Thanks to inequality \eqref{eq::delta} and with the help of Theorem \ref{theo::rate::lambda}, it comes
\[
\left\| \delta_{k} \right\|  = \mathcal{O} \left( \frac{\log k }{k^{\alpha}} \right) \quad a.s.
\]
Then, one can check that
\[
\left\| A_{2,s} \right\|^{2} = \mathcal{O} \left( \frac{\log s^2}{s^{2\alpha}} \right) \quad a.s
\]
which is negligible as soon as $\alpha > 1/2$.

\medskip

\noindent\textbf{Rate of convergence of $M_{2,s}$. } Let us denote $t_{s}' = \sum_{k=0}^{s} \log(k+1)^{2w}$, then considering the filtration  $ {\mathcal{F}}_{k}$,
\begin{align*}
\frac{1}{t_{s}'} \sum_{k=0}^{s} \log(k+1)^{2w} \mathbb{E} \left[\xi_{k+1}\xi_{k+1}^{T} |\mathcal{F}_{k}\right]   = 
\frac{1}{t_{s}'} \sum_{k=0}^{s} \log(k+1)^{2w} \nabla_{\lambda} \mathcal{L}\left( \lambda^{(k)} \right) \nabla_{\lambda} \mathcal{L}\left( \lambda^{(k)} \right) ^{T}\\
+\underbrace{\frac{1}{t_{s}'} \sum_{k=0}^{s} \log(k+1)^{2w}  \mathbb{E} \left[ \frac{1}{B^2}\sum_{i=1}^{B} \ell\left( \theta_{k+1,i},\lambda^{(k)} \right)\sum_{i=1}^{B} \ell\left( \theta_{k+1,i},\lambda^{(k)} \right)^{T} | {\mathcal{F}}_{k} \right]}_{=: \langle M \rangle_{s}}. 
\end{align*}
Since the gradient of $\mathcal{L}$ is continuous at $\lambda^{*}$ and since $\lambda^{(k)}$ converges almost surely to $\lambda^{*}$, it comes that
\[
\frac{1}{t_{s}'} \sum_{k=0}^{s} \log(k+1)^{2w} \nabla_{\lambda} \mathcal{L}\left( \lambda^{(k)} \right) \nabla_{\lambda} \mathcal{L}\left( \lambda^{(k)} \right) ^{T}   \xrightarrow[s\to + \infty]{a.s.} 0 .
\]
In addition, since $\theta_{k+1,1},\ldots, \theta_{k+1,B}$ are i.i.d, it comes
\begin{align*}
\langle M \rangle_{s} &= \frac{1}{B t_{s}'} \sum_{k=0}^{s} \log(k+1)^{2w} \underbrace{  \mathbb{E}_{q_{\lambda^{(k)}}} \left[ \ell \left(\theta , \lambda^{(k)} \right)\ell \left(\theta , \lambda^{(k)} \right)^{T} \right]}_{=\Sigma\left( \lambda^{(k)} \right)} \\
&+ \underbrace{  \frac{B-1}{B t_{s}'} \sum_{k=0}^{s}  \log(k+1)^{2w} \nabla_{\lambda} \mathcal{L}\left( \lambda^{(k)} \right) \nabla_{\lambda} \mathcal{L}\left( \lambda^{(k)} \right) ^{T} }_{\xrightarrow[s\to + \infty]{as} 0} .
\end{align*}
and since $\lambda^{(k)}$ converges almost surely to $\lambda^{*}$ and since $\Sigma$ is continuous, it comes
\[
\langle M \rangle_{s} \xrightarrow[s\to + \infty]{a.s} \frac{1}{B}\Sigma \left( \lambda^{*} \right) .
\]
Then, thanks to the law of large numbers, one has
\[
\left\| \sum_{k=0}^{s} \log(k+1)^{w} \xi_{k+1} \right\|^{2} = \mathcal{O} \left( \sum_{k=0}^{s} \log(k+1)^{2w} \log \left( \sum_{k=0}^{s} \log(k+1)^{2w}  \right) \right) \quad a.s
\]
ans since
\[
\sum_{k=0}^{s} \log(k+1)^{2w}  \sim \frac{1}{s+1}\left( \sum_{k=0}^{s} \log(k+1)^{w} \right)^2 \sim (s+1)\log(s+1)^{2w}, 
\]
it comes that
\[
M_{2,s} = \mathcal{O} \left( \frac{\log s}{s} \right) \quad a.s.
\]
In addition, with the help of a Central Limit Theorem for Martingales \citep{duflo2013random}, it comes that
\[
\sqrt{Bs} M_{2,s} \xrightarrow[s\to + \infty]{\mathcal{L}}\mathcal{N}\left(0, L^{-1}\Sigma(\lambda^{*} ) L^{-1} \right) ,
\]
which concludes the proof.
\end{proof}


\appendix
\section{Appendix}

\subsection{Proof of Theorem \ref{thm:H}}\label{Appendix:Prooof of Basic averaging theorem}
\begin{proof}
First, Riccati's equation (also known as the Sherman-Morrison formula) for matrix inversion in \cite[p. 96]{duflo2013random} states that for any $d\times d$ invertible matrix $S$, $d\times d$ invertible matrix $T$,  $p\times d$ matrix $U$, and $p\times d$ matrix $V$, one has that the matrix $S+UTV$ is invertible  if $VS^{-1}U+T^{-1}$ is invertible and that, in this case, 
	$$
(S+UTV)^{-1}=S^{-1}-S^{-1}U(VS^{-1}U+T^{-1})^{-1}VS^{-1}.
$$
We will prove the claim by induction.
Obviously it is true when $s=0$.
Now assume that it is true for $s>0$.
We have
\begin{align*}
 H_{s+1}^{-1}&= H^{-1}_s-\left(1+\phi^\top_{i+1} H_{s}^{-1}\phi_{s+1}\right)^{-1}
 H_s^{-1}\phi_{s+1}\phi_{s+1}^\top  H_{s}^{-1}\\
&=\left( H_s+\phi_{s+1}\phi_{s+1}^\top \right)^{-1}\\
&=\left( H_0+\sum_{j=1}^s \phi_j\phi_j^\top+\phi_{s+1}\phi_{s+1}^\top\right)^{-1}\\
&=\left(H_0+\sum_{j=1}^{s+1} \phi_j\phi_j^\top\right)^{-1}.
\end{align*}
As a result, $
H_{s+1}= H_0+\sum_{j=1}^{s+1} \phi_j\phi_j^\top.
$
From this, it is clear that $H_{k}^\top= H_k$ for $k\in\mathbb N$.
Lastly, we have for a vector $v$, $v^\top(\phi_j\phi_j^\top)v=|v^\top\phi_j|^2\geq 0$, the sum of positive definite matrices is positive definite, and the inverse of a positive definite matrix is positive definite. Hence the positivity and symmetry of $ H_s$ are
preserved.

\noindent Second, we have

\begin{align*}
\frac{1}{s}  H_s&=\frac{1}{s} H_0+\frac{1}{s}
\sum_{j=1}^s\nabla_\lambda\log q_\lambda(\theta_j) (\nabla_\lambda\log q_\lambda(\theta_j))^\top\\
&\longrightarrow I_F(\lambda), \quad\mbox{a.s}.
\end{align*}

\end{proof}

\subsection{Proof of \eqref{Riccati-twice-algorithm for A}}\label{Proof of Riccati-twice-algorithm}
Recall that, for a fixed $\lambda$, 
\[
A_{s} = A_{s}(\lambda) = H_0 + \sum_{j=1}^{s}\nabla_\lambda\log q_{\lambda }(\theta_{j}) (\nabla_\lambda\log q_{\lambda }(\theta_{j }))^\top + c_\beta\sum_{j=1}^{s} j^{-\beta}Z_{j}Z_{j}^\top
\]
where $Z_{1} , \ldots , Z_{s}$ are independent standard Gaussian vectors, $c_{\beta}\geq 0$ and $\beta \in (0, \alpha -1/2)$, $\alpha\in(1/2,1)$. 
One can update $A_{s+1}^{-1}$ using the following scheme:

\begin{thm}\label{thm:H1}
Let $\phi_s=\nabla_\lambda\log q_\lambda(\theta_s)$  and 
\begin{align*}
A_{s+ \frac{1}{2}}^{-1} & = A_{s}^{-1} - \left( 1+ \phi_{s+1}^{\top} A_{s}^{-1} \phi_{s+1} \right)^{-1}A_{s}^{-1} \phi_{s+1}\phi_{s+1}^{\top} A^{-1}_{s},
\end{align*}
then
\begin{align*}
A_{s+ 1}^{-1} & = A_{s+ \frac{1}{2}}^{-1} - c_{\beta}(s+1)^{-\beta} \left( 1+ c_{\beta}(s+1)^{-\beta} Z_{s+1}^{\top}A_{s+ \frac{1}{2}}^{-1}Z_{s+1} \right)^{-1} A_{s+ \frac{1}{2}}^{-1}Z_{s+1}Z_{s+1}^{\top}A_{s+ \frac{1}{2}}^{-1}.
\end{align*}
In particular, the positivity and symmetry of $A_{s}$'s is
preserved. Moreover,
\begin{align*}
\frac{1}{s} A_{s}\longrightarrow I_F(\lambda)\quad\mbox{a.s.}
\end{align*}
\end{thm}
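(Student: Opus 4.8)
The plan is to follow the template of the proof of Theorem~\ref{thm:H}, viewing the passage from $A_s$ to $A_{s+1}$ as two successive rank-one updates and applying the Sherman--Morrison (Riccati) identity to each. First I would verify the two-step recursion. Writing $A_{s+\frac12} := A_s + \phi_{s+1}\phi_{s+1}^\top$, the first display is precisely the Sherman--Morrison formula recalled in the proof of Theorem~\ref{thm:H}, invoked with $S = A_s$, $U = \phi_{s+1}$, scalar $T = 1$ and $V = \phi_{s+1}^\top$. Since $A_{s+1} = A_{s+\frac12} + c_\beta(s+1)^{-\beta} Z_{s+1}Z_{s+1}^\top$, the second display follows from the same identity with $S = A_{s+\frac12}$, $U = Z_{s+1}$, scalar $T = c_\beta(s+1)^{-\beta}$ and $V = Z_{s+1}^\top$. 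A routine induction (base case $A_0 = H_0$) then confirms that this recursion reproduces the closed form $A_s = H_0 + \sum_{j=1}^s \phi_j\phi_j^\top + c_\beta\sum_{j=1}^s j^{-\beta}Z_jZ_j^\top$.

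Symmetry and positivity come for free from this closed form: $H_0$ is positive definite by assumption, while each $\phi_j\phi_j^\top$ and each $Z_jZ_j^\top$ is symmetric and positive semidefinite (for any $v$ one has $v^\top\phi_j\phi_j^\top v = |v^\top\phi_j|^2 \ge 0$). A sum of a positive definite matrix and positive semidefinite matrices is positive definite, so $A_s$ is symmetric positive definite, and therefore so is $A_s^{-1}$.

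The remaining, and slightly more delicate, part is the almost sure limit $\frac1s A_s \to I_F(\lambda)$. I would split $\frac1s A_s$ into its three natural pieces. The term $\frac1s H_0 \to 0$ trivially. For the gradient term, since $\theta_1,\theta_2,\dots$ are i.i.d.\ draws from $q_\lambda$, the vectors $\phi_j = \nabla_\lambda\log q_\lambda(\theta_j)$ are i.i.d., and the strong law of large numbers gives $\frac1s\sum_{j=1}^s \phi_j\phi_j^\top \to \mathbb{E}_{q_\lambda}[\phi\phi^\top] = I_F(\lambda)$ almost surely. The only genuine obstacle is the regularization term $\frac{c_\beta}{s}\sum_{j=1}^s j^{-\beta}Z_jZ_j^\top$, whose summands are independent but not identically weighted. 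Here I would use a Toeplitz/Kronecker-type law of large numbers to obtain $(1-\beta)s^{\beta-1}\sum_{j=1}^s j^{-\beta}Z_jZ_j^\top \to \mathbb{I}_D$ almost surely (the same computation underlying the Toeplitz-lemma estimate in the proof of Theorem~\ref{theo::as}), and then write $\frac{c_\beta}{s}\sum_{j=1}^s j^{-\beta}Z_jZ_j^\top = \frac{c_\beta s^{-\beta}}{1-\beta}\bigl[(1-\beta)s^{\beta-1}\sum_{j=1}^s j^{-\beta}Z_jZ_j^\top\bigr]$, which tends to $0$ because $\beta > 0$ forces $s^{-\beta}\to 0$. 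Summing the three limits yields $\frac1s A_s \to I_F(\lambda)$ almost surely, completing the proof.
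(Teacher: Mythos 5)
Your proposal is correct and follows essentially the same route as the paper's own proof: the same two successive Sherman--Morrison (Riccati) updates verified by induction to recover the closed form of $A_s$, the same argument for symmetry and positive definiteness, and the same three-way decomposition of $\tfrac{1}{s}A_s$ with a weighted (Toeplitz/Kronecker-type) law of large numbers showing the regularization term is $\mathcal{O}(s^{-\beta})\to 0$. Your normalization by $(1-\beta)s^{\beta-1}$ is just the asymptotic equivalent of the paper's normalization by $\sum_{k=1}^{s}k^{-\beta}$, so there is no substantive difference.
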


\begin{proof}
We will prove the claim by induction as in the proof of Theorem \ref{thm:H}.
The claim is obviously true when $s=0$. Assume that it is true for some $s>0$.
By Riccati's formula,
\begin{align}\label{Riccati-twice-Hs051}
A_{s+ \frac{1}{2}}^{-1} & = A_{s}^{-1} - \left( 1+ \phi_{s+1}^{\top} A_{s}^{-1} \phi_{s+1} \right)^{-1}A_{s}^{-1} \phi_{s+1}\phi_{s+1}^{\top} A^{-1}_{s}
=\left( A_{s} +\phi_{s+1}\phi_{s+1}^\top\right)^{-1}.
\end{align}
Similarly,
\begin{align}\label{Riccati-twice-Hs052}
A_{s+ 1}^{-1} & = A_{s+ \frac{1}{2}}^{-1} - c_{\beta}(s+1)^{-\beta} \left( 1+ c_{\beta}(s+1)^{-\beta} Z_{s+1}^{\top}A_{s+ \frac{1}{2}}^{-1}Z_{s+1} \right)^{-1} A_{s+ \frac{1}{2}}^{-1}Z_{s+1}Z_{s+1}^{\top}A_{s+ \frac{1}{2}}^{-1}\notag\\
&=\left( A_{s+ \frac{1}{2}}+c_{\beta}(s+1)^{-\beta} Z_{s+1}Z_{s+1}^\top\right)^{-1}
\end{align}
Plugging \eqref{Riccati-twice-Hs051} into \eqref{Riccati-twice-Hs052}, we have
\begin{align*}
A_{s+ 1}^{-1}&=\left(A_{s} +\phi_{s+1}\phi_{s+1}^\top+ c_{\beta}(s+1)^{-\beta} Z_{s+1}Z_{s+1}^\top\right)^{-1}\\
&=\left(H_0+\sum_{k=1}^{s+1}\phi_k\phi_k^\top+c_{\beta}\sum_{k=1}^{s+1}k^{-\beta} Z_{k}Z_{k}^\top \right)^{-1}.
\end{align*} 
Hence,
\[A_{s+ 1}=H_0+\sum_{k=1}^{s+1}\phi_k\phi_k^\top+c_{\beta}\sum_{k=1}^{s+1}k^{-\beta} Z_{k}Z_{k}^\top.\]
From this, it is clear that $A_{s}^\top=A_{s}$ and $A_{s}>0$ for $s\in\mathbb N$.
Hence the positivity and symmetry of $A_{s}$ are preserved. 

Second, we have

\begin{align*}
\frac{1}{s}A_{s}=\frac{1}{s}H_0+\frac{1}{s}\sum_{k=1}^{s}\phi_k\phi_k^\top+\frac{c_{\beta}}{s}\sum_{k=1}^{s}k^{-\beta} Z_{k}Z_{k}^\top.
\end{align*}
It can be seen that
\begin{align*}
\frac{1}{s}H_0+\frac{1}{s}\sum_{k=1}^{s}\phi_k\phi_k^\top\to 0+\mathbb E[\phi\phi^\top]=
\mathbb{E}_{q_\lambda}\left(\nabla_\lambda\log q_\lambda(\theta)\nabla_\lambda\log q_\lambda(\theta)^\top \right)
=I_F(\lambda)\quad\mbox{a.s.}
\end{align*}
Consider the term $\frac{c_{\beta}}{s}\sum_{k=1}^{s}k^{-\beta} Z_{k}Z_{k}^\top$, we have
\begin{align*}
\frac{c_\beta}{s}\sum_{k=1}^{s}k^{-\beta} Z_{k}Z_{k}^\top=\frac{c_\beta}{s}\left(\sum_{k=1}^s k^{-\beta}\right)\frac{1}{\sum_{k=1}^s k^{-\beta}}\sum_{k=1}^{s}\frac{1}{k^\beta} Z_{k}Z_{k}^\top.
\end{align*}
Similar to Lemma 6.1 in \cite{bercu2020efficient} and recall that $Z_k$'s are independent copies of $Z\sim\mathcal N(0,\mathbb I_D)$,  we have 
\begin{align*}
\frac{1}{\sum_{k=1}^s k^{-\beta}}\sum_{k=1}^{s}k^{-\beta} Z_{k}Z_{k}^\top \to \mathbb E[ZZ^\top] =\mathbb I_D,\quad \mbox{a.s.}
\quad \mbox{as}\quad s\to\infty.
\end{align*}
Also note that
\begin{align*}
\frac{1}{s}\sum_{k=1}^s k^{-\beta}=\frac{1}{s^\beta}\frac{1}{s^{1-\beta}}\sum_{k=1}^s k^{-\beta}\to 0\quad\mbox{as}\quad s\to\infty,
\end{align*}
due to the fact that $\frac{1}{s^{1-\beta}}\sum_{k=1}^s k^{-\beta}\to \frac{1}{1-\beta}$.
As a result, we have 
\begin{align*}
\frac{1}{s}\sum_{k=1}^{s}k^{-\beta} Z_{k}Z_{k}^\top\to 0\quad\mbox{a.s.} \quad \mbox{as}\quad s\to\infty.
\end{align*}
Therefore 
\begin{align*}
\frac{1}{s} A_{s}\longrightarrow I_F(\lambda) \quad\mbox{a.s} \quad \mbox{as}\quad s\to\infty.
\end{align*}
This completes the proof.
\end{proof}

\subsection{Calculation of the Hessian}\label{app: appendix on Hessian}
\noindent Using the log derivative trick,
$\nabla q_\lambda(\theta) =q_\lambda(\theta)\nabla \log q_\lambda(\theta) $,
it can be seen that
\begin{align*}
\nabla^2_\lambda q_\lambda(\theta)=\nabla^2_{\lambda}\log q_{\lambda}(\theta) q_\lambda(\theta)+ (\nabla_{\lambda}\log q_{\lambda}(\theta))^2 q_\lambda(\theta).
\end{align*}
Note that $h_{\lambda}(\theta)$ is a scalar. Hence
\begin{align*}
\nabla_\lambda^2\LB(\lambda)&=\nabla_\lambda\Big(\mathbb E_{q_\lambda}\left[\nabla_{\lambda}\log q_{\lambda}(\theta) h_{\lambda}(\theta)\right] \Big)\\
&=\nabla_\lambda\int \nabla_{\lambda}\log q_{\lambda}(\theta) h_{\lambda}(\theta) q_\lambda(\theta)d\theta\\
&=\int \nabla^2_{\lambda}\log q_{\lambda}(\theta) h_{\lambda}(\theta) q_\lambda(\theta)d\theta+\int  \nabla_\lambda h_{\lambda}(\theta) \nabla_{\lambda}\log q_{\lambda}(\theta)^\top q_\lambda(\theta)d\theta\\
&\phantom{cccc}+\int \nabla_{\lambda}\log q_{\lambda}(\theta) h_{\lambda}(\theta) \nabla_\lambda q_\lambda(\theta)d\theta\\
&=-\int  \nabla_\lambda \log q_{\lambda}(\theta) \nabla_{\lambda}\log q_{\lambda}(\theta)^\top q_\lambda(\theta)d\theta\\
&\phantom{cccc}+\int \left(\nabla^2_{\lambda}\log q_{\lambda}(\theta) q_\lambda(\theta)+ (\nabla_{\lambda}\log q_{\lambda}(\theta))^2 q_\lambda(\theta) \right)h_\lambda(\theta) d\theta\\
&= -\mathbb{E}[\nabla_\lambda q_\lambda(\theta) \nabla_\lambda q_\lambda(\theta)^\top]
-\int\nabla_\lambda^2 q_\lambda(\theta)\cdot\left(\log q_\lambda(\theta)-\log p(y,\theta) \right)d\theta\\
&=-I_F(\lambda)-\int\nabla_\lambda^2 q_\lambda(\theta)\cdot\left(\log q_\lambda(\theta)-\log p(y,\theta) \right)d\theta\\
&=-I_F(\lambda)-\int\nabla_\lambda^2 q_\lambda(\theta)\cdot\left(\log q_\lambda(\theta)-\log p(\theta|y) \right)d\theta,
\end{align*}
where we have used the fact that $\nabla q_\lambda(\theta) =q_\lambda(\theta)\nabla \log q_\lambda(\theta) $
in the third line. Recall that $\mathcal L(\lambda)=-\LB(\lambda)$, therefore
\[\nabla_\lambda^2\mathcal{L}(\lambda)=I_F(\lambda)+\int\nabla_\lambda^2 q_\lambda(\theta)\cdot\big(\log q_\lambda(\theta)-\log p(\theta|y) \big)d\theta.\]

\subsection{Convergence rate to Hessian}\label{Appendix:Convergence rate to Hessian}
In this section we consider the convergence rate of 
$\tilde{\mathbf{H}}_{s}$ to $I_F:=I_F(\lambda^*)$ and $\nabla_\lambda^2\mathcal L(\lambda^*)$.
\begin{thm}\label{FisherInformationMatrixconvergence}
Suppose that $\lambda^{(k)}$ converges almost surely to $\lambda^{*}$, and $\frac{\norm{\nabla_\lambda^2 q_{\lambda^*}(\theta)}}{q_{\lambda^*}(\theta)}
$ is bounded above by $M>0$. Then
$$\norm{\tilde{\mathbf{H}}_{s} -I_F }=\mathcal O \left( \max \left\lbrace c_{\beta}s^{-\beta}, s^{-1} \right\rbrace \right) \quad a.s.$$ 
Furthermore, under the assumptions in Theorem 2.1 of \cite{zhang2020convergence}, the following holds true 
$$\norm{\tilde{\mathbf{H}}_{s} - \nabla_\lambda^2\mathcal L(\lambda^*)}=\mathcal O\left( \max \left\lbrace c_{\beta}s^{-\beta}, s^{-1} ,n^{-1}\right\rbrace \right) \quad a.s.$$ 
with $n$ the size of the data.
\end{thm}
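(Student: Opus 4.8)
The plan is to reuse the decomposition of $\tilde{\mathbf{H}}_s$ already set up in the proofs of Theorem~\ref{theo::as} and Corollary~\ref{cor::as::hs}, namely
\[
\tilde{\mathbf{H}}_{s} = \underbrace{\tfrac{1}{s}\Big( H_0 + c_{\beta}\sum_{k=1}^{s} k^{-\beta}Z_{k}Z_{k}^{\top} \Big)}_{=:T_s} + R_{s} + M_{s},
\]
where, since $\overline{\lambda}^{(k)}$ is $\tilde{\mathcal{F}}_k$-measurable and $\overline{\theta}_{k+1}\sim q_{\overline{\lambda}^{(k)}}$, the conditional-expectation term collapses to $R_s = \frac{1}{s}\sum_{k=0}^{s-1} I_F(\overline{\lambda}^{(k)})$, and $M_s$ is the normalized martingale remainder. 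I would then bound the three pieces and combine. The term $T_s$ was already shown in Appendix~\ref{Proof of Riccati-twice-algorithm} and in the proof of Corollary~\ref{cor::as::hs} to satisfy $\|T_s\| = \mathcal{O}(\max\{c_\beta s^{-\beta}, s^{-1}\})$ a.s., while Theorem~6.2 of \cite{cenac2020efficient} gives $\|M_s\|^2 = o(\log s^{1+\delta}/s)$ a.s. for every $\delta>0$, which is negligible relative to $T_s$ because $\beta < \alpha-\tfrac12 < \tfrac12$.

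First I would upgrade the bare convergence $R_s \to I_F(\lambda^*)$ of Corollary~\ref{cor::as::hs} into a rate. Writing $R_s - I_F(\lambda^*) = \frac{1}{s}\sum_{k=0}^{s-1}\big(I_F(\overline{\lambda}^{(k)}) - I_F(\lambda^*)\big)$, I would combine local Lipschitz continuity of $\lambda\mapsto I_F(\lambda)$ near $\lambda^*$ — obtained by differentiating $I_F(\lambda)=\int q_\lambda^{-1}\nabla_\lambda q_\lambda (\nabla_\lambda q_\lambda)^\top d\theta$ and controlling the integrand through the hypothesis $\|\nabla^2_\lambda q_{\lambda^*}(\theta)\|/q_{\lambda^*}(\theta)\le M$ — with the averaged-iterate rate $\|\overline{\lambda}^{(k)}-\lambda^*\|=\mathcal{O}(\sqrt{\log k/k})$ from Theorem~\ref{theo::wasn}. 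A Cesàro/Toeplitz estimate then yields $\|R_s - I_F(\lambda^*)\| = \mathcal{O}(\sqrt{\log s/s})$, which is $o(s^{-\beta})$ since $\beta<\tfrac12$ and hence negligible against $T_s$ (this is precisely where $c_\beta>0$ is needed, so that $T_s$ decays at the slower rate $c_\beta s^{-\beta}$). Collecting the three bounds gives the first claim $\|\tilde{\mathbf{H}}_s - I_F(\lambda^*)\| = \mathcal{O}(\max\{c_\beta s^{-\beta}, s^{-1}\})$ a.s.

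For the second statement I would insert the Hessian and split
\[
\tilde{\mathbf{H}}_s - \nabla_\lambda^2\mathcal{L}(\lambda^*) = \big(\tilde{\mathbf{H}}_s - I_F(\lambda^*)\big) + \big(I_F(\lambda^*) - \nabla_\lambda^2\mathcal{L}(\lambda^*)\big).
\]
The first bracket is controlled by the first statement. For the second, the identity derived in Appendix~\ref{app: appendix on Hessian} gives $I_F(\lambda^*)-\nabla_\lambda^2\mathcal{L}(\lambda^*) = -\int \nabla_\lambda^2 q_{\lambda^*}(\theta)\big(\log q_{\lambda^*}(\theta)-\log p(\theta|y)\big)d\theta$, so using $\|\nabla_\lambda^2 q_{\lambda^*}(\theta)\|\le M q_{\lambda^*}(\theta)$ I would bound its norm by $M\,\mathbb{E}_{q_{\lambda^*}}\big[|\log q_{\lambda^*}(\theta)-\log p(\theta|y)|\big]$. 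Invoking Theorem~2.1 of \cite{zhang2020convergence}, which quantifies how close the optimal variational density is to the true posterior, this expectation is $\mathcal{O}(n^{-1})$, giving the claimed $\mathcal{O}(\max\{c_\beta s^{-\beta}, s^{-1}, n^{-1}\})$.

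The hard part will be the bias term $R_s - I_F(\lambda^*)$: one must genuinely promote consistency of $\overline{\lambda}^{(k)}$ to a rate (via Theorem~\ref{theo::wasn}, which requires $c_\beta>0$), extract true Lipschitz control of $I_F$ from the second-derivative bound, and then verify through the Cesàro averaging that this bias is asymptotically dominated by the regularization term $T_s$ rather than the reverse. The remaining delicate step is matching the $L^1$ discrepancy $\mathbb{E}_{q_{\lambda^*}}[|\log q_{\lambda^*}-\log p(\cdot|y)|]$ to the precise hypotheses of \cite{zhang2020convergence}, so that the $n^{-1}$ rate is legitimately available.
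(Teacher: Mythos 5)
Your overall route is the same as the paper's: you split $\tilde{\mathbf{H}}_s - \nabla^2_\lambda\mathcal{L}(\lambda^*)$ into the Fisher-estimation error $\tilde{\mathbf{H}}_s - I_F(\lambda^*)$ plus the Fisher--Hessian gap at $\lambda^*$, you bound the regularization part $\frac{1}{s}\left(H_0 + c_\beta\sum_{k} k^{-\beta}Z_kZ_k^\top\right)$ by $\mathcal{O}\left(\max\left\lbrace c_\beta s^{-\beta}, s^{-1}\right\rbrace\right)$, and you control the Fisher--Hessian gap by $M\,\KL(q_{\lambda^*}\|p(\cdot|y)) \leq MC/n$ via the identity of Appendix \ref{app: appendix on Hessian} and Theorem 2.1 of \cite{zhang2020convergence}; this is exactly the chain of inequalities in the paper's proof. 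Your three-term decomposition $T_s + R_s + M_s$, with the martingale rate from Theorem 6.2 of \cite{cenac2020efficient}, is also precisely the machinery the paper is pointing to when it writes ``similar to the proof of Theorems \ref{theo::wasn} and \ref{thm:H1}.''

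The one step that does not hold as you state it is the Lipschitz control of $\lambda \longmapsto I_F(\lambda)$ near $\lambda^*$. The theorem's hypothesis bounds $\norm{\nabla^2_\lambda q_{\lambda^*}(\theta)}/q_{\lambda^*}(\theta)$ at the \emph{single point} $\lambda^*$, whereas differentiating $I_F(\lambda)=\int q_\lambda^{-1}\nabla_\lambda q_\lambda(\nabla_\lambda q_\lambda)^\top d\theta$ at $\lambda \neq \lambda^*$ produces integrands involving $\nabla^2_\lambda q_\lambda$, $\nabla_\lambda q_\lambda$ and $q_\lambda$ evaluated at $\lambda$, together with third-order expressions such as $\int \norm{\nabla_\lambda q_\lambda}^3/q_\lambda^2\, d\theta$; a pointwise bound at $\lambda^*$ controls none of these on a neighborhood, so the Lipschitz property you need for the Ces\`aro estimate of $R_s - I_F(\lambda^*)$ does not follow from the stated assumptions. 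In the paper, the $M$-bound is used \emph{only} for the second claim (the Fisher--Hessian gap); the bias in the empirical term is handled by appeal to the earlier proofs (continuity of $I_F$ at $\lambda^*$, assumption \eqref{eq:fourth moment assumption}, and the rate of the averaged iterates), not by a Lipschitz property extracted from the $M$-hypothesis. Relatedly, your use of the rate $\norm{\overline{\lambda}^{(k)}-\lambda^*}=\mathcal{O}(\sqrt{\log k/k})$ imports the hypotheses of Theorems \ref{theo::rate::lambda} and \ref{theo::wasn}, which are not among this theorem's stated assumptions; the paper's own terse proof shares this weakness, but a complete write-up should either assume a neighborhood (not pointwise) version of the $M$-bound plus the moment conditions, or state explicitly that the conclusions are obtained under the hypotheses of Theorem \ref{theo::wasn}.
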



\begin{proof}
First recall that
\begin{align*}
\nabla_\lambda^2\mathcal L(\lambda)=I_F(\lambda)+\int\nabla_\lambda^2 q_\lambda(\theta)\cdot\left(\log q_\lambda(\theta)-\log p(\theta|y) \right)d\theta.
\end{align*}
and that 
\[
\tilde{\mathbf{H}}_{s} = \frac{1}{s} \left(  H_0 + \sum_{k=0}^{s-1}\nabla_\lambda\log q_{\bar\lambda^{(k )}}(\theta_{k+1}) (\nabla_\lambda\log q_{\bar\lambda^{(k)}}(\theta_{k+1}))^\top + c_\beta\sum_{k=1}^{s} k^{-\beta}Z_{k}Z_{k}^{T} \right) 
\]
Hence 
\begin{align}\label{eq: Hessian difference term}
\norm{\tilde{\mathbf{H}}_{s} - \nabla_\lambda^2\mathcal L(\lambda^*)}&\leq \norm{ \frac{1}{s} \left(  H_0 + \sum_{k=0}^{s-1}\nabla_\lambda\log q_{\bar\lambda^{(k )}}(\theta_{k+1}) (\nabla_\lambda\log q_{\bar\lambda^{(k)}}(\theta_{k+1}))^\top + c_\beta\sum_{k=1}^{s} k^{-\beta}Z_{k}Z_{k}^{T} \right) -I_F}\notag\\
&+\norm{\int\nabla_\lambda^2 q_{\lambda^*}(\theta)\cdot\left(\log q_{\lambda^*}(\theta)-\log p(\theta|y) \right)d\theta}
\end{align}
First, similar to the proof of Theorems \ref{theo::wasn} and \ref{thm:H1}, we have
\begin{align*}
&\norm{ \frac{1}{s} \left(  H_0 + \sum_{k=0}^{s-1}\nabla_\lambda\log q_{\bar\lambda^{(k )}}(\theta_{k+1}) (\nabla_\lambda\log q_{\bar\lambda^{(k)}}(\theta_{k+1}))^\top + c_\beta\sum_{k=1}^{s} k^{-\beta}Z_{k}Z_{k}^{T} \right) -I_F}\\
&\leq\norm{\frac{1}{s} \sum_{k=0}^{s-1}\nabla_\lambda\log q_{\bar\lambda^{(k )}}(\theta_{k+1}) (\nabla_\lambda\log q_{\bar\lambda^{(k)}}(\theta_{k+1}))^\top-\mathbb E[\nabla_\lambda\log q_\lambda(\theta) (\nabla_\lambda\log q_\lambda(\theta))^\top]}\\
&+\norm{ \frac{1}{s} \left(  H_0 + c_\beta\sum_{k=1}^{s} k^{-\beta}Z_{k}Z_{k}^{T} \right)}\\
&= O \left( \max \left\lbrace c_{\beta}s^{-\beta}, s^{-1} \right\rbrace \right) \quad a.s.
\end{align*}
\noindent For the second term of\eqref{eq: Hessian difference term},
\begin{align*}
& \norm{\int\nabla_\lambda^2 q_{\lambda^*}(\theta)\cdot\left(\log q_{\lambda^*}(\theta)-\log p(\theta|y) \right)d\theta}\\
&\leq\int\frac{\norm{\nabla_\lambda^2 q_{\lambda^*}(\theta)}}{ q_{\lambda^*}(\theta)} q_{\lambda^*}(\theta)|\left(\log q_{\lambda^*}(\theta)-\log p(\theta|y) \right)|d\theta\\
&\leq M \int q_{\lambda*}(\theta)\big|\log q_{\lambda*}(\theta)-\log p(\theta|y) \big|d\theta\\
&=M \KL(q_{\lambda^*}\|p(\cdot|y))\\
&\leq MC/n
\end{align*}
where the last inequality is from Theorem 2.1 of \cite{zhang2020convergence}.
By combining the estimates together, the desired result
is followed immediately.
\end{proof}

\subsection{Technical details for Example 4}\label{Appendix:Example 5}
Denote by $p_\epsilon(\epsilon)$, $p_Z(z)$ and $q_\lambda(\theta)$ the density function of random vectors $\epsilon$, $Z$ and $\theta$, respectively, $\lambda=(\text{vec}(W_1)^\top,b_1^\top,\text{vec}(W_2)^\top,b_2^\top)^\top$. Write $D$ 
for the length of $\lambda$.
We have that 
\begin{align*}
p_Z(z)&=p_\epsilon(\epsilon)|\frac{\partial\epsilon}{\partial z}|=p_\epsilon(\epsilon)\prod_{i=1}^d\frac{1}{h'(h^{-1}(z_i))},\;\;\epsilon=W_1^\top(h^{-1}(z)-b_1)\\
&=p_\epsilon\big(W_1^\top(h^{-1}(z)-b_1)\big)\prod_{i=1}^d\frac{1}{h'(h^{-1}(z_i))},
\end{align*}
and
\[q_\lambda(\theta)=p_Z\big(z=W_2^\top(\theta-b_2)\big).\]
If we use $q_\lambda(\theta)$ to approximate a posterior distribution with prior $p(\theta)$ and log-likelihood $\ell(\theta)$, the lower bound is
\begin{align*}
\LB(\lambda)
&=\E_{p_\epsilon}\left[\log p(\theta)+\ell(\theta)-\log p_\epsilon(\epsilon)+\sum_1^d\log h'(h^{-1}(z_i))\right]
\end{align*}
where $z=h(W_1\epsilon+b_1),\;\;\theta=W_2z+b_2$. It's straighforward to estimate $\nabla_\lambda \LB(\lambda)$, if both $\log p(\theta)$ and $\ell(\theta)$ are differentiable in $\theta$. After some algebra
\[\frac{\partial\theta}{\partial\text{vec}(W_1)}=\epsilon^\top\otimes\Big(W_2\diag\big(h'(W_1\epsilon+b_1)\big)\Big),\;\;\frac{\partial\theta}{\partial b_1}=W_2\diag\big(h'(W_1\epsilon+b_1)\big) \]
\[\frac{\partial\theta}{\partial\text{vec}(W_2)}=z^\top\otimes I_d,\;\;\frac{\partial\theta}{\partial b_2}=I_d,\]
with $\otimes$ the Kronecker product. Also,
\[\frac{\partial z}{\partial\text{vec}(W_1)}=\epsilon^\top\otimes \diag\big(h'(W_1\epsilon+b_1)\big),\;\;\frac{\partial z}{\partial b_1}=I_d, \frac{\partial z}{\partial\text{vec}(W_2)}=0,\;\;\frac{\partial z}{\partial b_2}=0.\]
Then $\frac{\partial \theta}{\partial \lambda}$ and $\frac{\partial z}{\partial \lambda}$ are the $d\times D$ matrices formed by
\[\frac{\partial \theta}{\partial \lambda}=\left[\frac{\partial\theta}{\partial\text{vec}(W_1)},\frac{\partial\theta}{\partial b_1},\frac{\partial\theta}{\partial\text{vec}(W_2)},\frac{\partial\theta}{\partial b_2},\right],\;\;
\frac{\partial z}{\partial \lambda}=\left[\frac{\partial z}{\partial\text{vec}(W_1)},\frac{\partial z}{\partial b_1},\frac{\partial z}{\partial\text{vec}(W_2)},\frac{\partial z}{\partial b_2}\right].\]
It's now readily to compute the gradient of the lower bound
\begin{equation}\label{eq:repram trick VB}
\nabla_\lambda \LB(\lambda)=\E_{p_\epsilon}\left[\Big(\frac{\partial \theta}{\partial \lambda}\Big)^\top\nabla_\theta\Big(\log p(\theta)+\ell(\theta)\Big)+\Big(\frac{\partial z}{\partial \lambda}\Big)^\top h''\big(h^{-1}(z)\big)\right],
\end{equation}
which can be estimated by sampling from $p_\epsilon$.

In order to use the IFVB algorithm, we now derive the gradient $\nabla_\lambda \log q_\lambda(\theta)$. First, note that
\begin{equation*}
\log q_\lambda(\theta)=-\frac{d}{2}\log(2\pi)-\frac12(h^{-1}(z)-b_1)^\top W_1W_1^\top(h^{-1}(z)-b_1)-\sum_{i=1}^d\log h'(h^{-1}(z_i)),
\end{equation*}
where $z=W_2^\top(\theta-b_2)$. It is easy to see that
\begin{align*}
\frac{\partial\log q_\lambda(\theta)}{\partial W_1}&=-(h^{-1}(z)-b_1)(h^{-1}(z)-b_1)^\top W_1=-\epsilon\epsilon^\top W_1,\\
\frac{\partial\log q_\lambda(\theta)}{\partial b_1}&=W_1W_1^\top(h^{-1}(z)-b_1)=W_1\epsilon,\\
\frac{\partial\log q_\lambda(\theta)}{\partial W_2}&=(\theta-b_2)\delta_z^\top,\\
\frac{\partial\log q_\lambda(\theta)}{\partial b_2}&=-W_2\delta_z,
\end{align*}
where
\begin{align*}
\delta_z&=-\diag\Big(1/h'\big(h^{-1}(z)\big)\Big)W_1W_1^\top\big(h^{-1}(z)-b_1\big)-h''\big(h^{-1}(z)\big)\\
&=-\diag\Big(1/h'\big(h^{-1}(z)\big)\Big)W_1\epsilon-h''\big(h^{-1}(z)\big).    
\end{align*}
Vectorizing these four terms and stacking them together gives $\nabla_\lambda \log q_\lambda(\theta)$.










\end{document}